\numberwithin{equation}{section}
\newtheorem{Theorem}{Theorem}[section]
\newtheorem{Corollary}[Theorem]{Corollary}
\newtheorem{Lemma}[Theorem]{Lemma}
 { \theoremstyle{definition}
\newtheorem{Remark}[Theorem]{Remark} }
\newcommand{\C}{{\mathbb C}}
\newcommand{\Z}{{\mathbb Z}}
\newcommand{\T}{{\mathbb T}}
\newcommand{\al}{\alpha}
\newcommand{\be}{\beta}
\newcommand{\Ga}{\Gamma}
\newcommand{\La}{\Lambda}
\newcommand{\la}{\lambda}
\newcommand{\de}{\delta}
\newcommand{\Om}{\Omega}
\newcommand{\ze}{\zeta}
\newcommand{\di}{\displaystyle}
\def\bigO{{\cal O}}
\tikzset{middlearrow/.style={
decoration={markings,
mark= at position 0.6 with {\arrow{#1}} ,
},
postaction={decorate}
}
}
\tikzset{->-/.style={decoration={
markings,
mark=at position #1 with {\arrow{latex}}},postaction={decorate}}}
\tikzset{-<-/.style={decoration={
markings,
mark=at position #1 with {\arrowreversed{latex}}},postaction={decorate}}}
\begin{document}

\allowdisplaybreaks

\newcommand{\arXivNumber}{1909.00963}

\renewcommand{\PaperNumber}{100}

\FirstPageHeading

\ShortArticleName{A Riemann--Hilbert Approach to Asymptotic Analysis of Toeplitz+Hankel Determinants}

\ArticleName{A Riemann--Hilbert Approach to Asymptotic Analysis\\ of Toeplitz+Hankel Determinants}

\Author{Roozbeh GHARAKHLOO~$^\dag$ and Alexander ITS~$^{\ddag\S}$}

\AuthorNameForHeading{R.~Gharakhloo and A.~Its}

\Address{$^\dag$~Department of Mathematics, Colorado State University,\\
\hphantom{$^\dag$}~1874 Campus Delivery, Fort Collins, CO 80523-1874, USA}
\EmailD{\href{mailto: roozbeh.gharakhloo@colostate.edu}{roozbeh.gharakhloo@colostate.edu}}
\URLaddressD{\url{https://www.math.colostate.edu/~gharakhl}}

\Address{$^\ddag$~Department of Mathematical Sciences, Indiana University-Purdue University Indianapolis,\\
\hphantom{$^\ddag$}~402 N.~Blackford St., Indianapolis, IN 46202, USA}
\EmailD{\href{mailto:aits@iupui.edu}{aits@iupui.edu}}

\Address{$^\S$~St.~Petersburg State University, Universitetskaya emb.~7/9, 199034, St.~Petersburg, Russia}

\ArticleDates{Received November 03, 2019, in final form August 26, 2020; Published online October 06, 2020}

\Abstract{In this paper we will formulate $4\times4$ Riemann--Hilbert problems for Toep\-litz+Han\-kel determinants and the associated system of orthogonal polynomials, when the Hankel symbol is supported on the unit circle and also when it is supported on an interval~$[a,b]$, $0<a<b<1$. The distinguishing feature of this work is that in the formulation of~the Riemann--Hilbert problem no specific relationship is assumed between the Toeplitz and Hankel symbols. We will develop nonlinear steepest descent methods for analysing these problems in the case where the symbols are smooth (i.e., in the absence of Fisher--Hartwig singularities) and admit an analytic continuation in a neighborhood of the unit circle (if~the symbol's support is the unit circle). We will finally introduce a model problem and will present its solution requiring certain conditions on the ratio of Hankel and Toeplitz symbols. This in turn will allow us to find the asymptotics of the norms $h_n$ of the corresponding orthogonal polynomials and, in fact, the large $n$ asymptotics of the polynomials themselves. We will explain how this solvable case is related to the recent operator-theoretic approach in~[Basor~E., Ehrhardt~T., in Large Truncated {T}oeplitz Matrices, {T}oeplitz Operators, and Related Topics, \textit{Oper. Theory Adv. Appl.}, Vol.~259, Birkh\"auser/Springer, Cham, 2017, 125--154, arXiv:1603.00506] to Toeplitz+Hankel determinants. At the end we will discuss the prospects of future work and outline several technical, as well as conceptual, issues which we are going to~address next within the $4\times 4$ Riemann--Hilbert framework introduced in this paper.}

\Keywords{Toeplitz+Hankel determinants; Riemann--Hilbert problem; asymptotic analysis}

\Classification{15B05; 30E15; 35Q15}

\section{Introduction and preliminaries}

The $n \times n$ Toeplitz and Hankel matrices associated respectively to the symbols $\phi$ and $w$, supported on the unit circle $\mathbb{T}$ are respectively defined as
\begin{gather*}%\label{Toeplitz}
T_{n}[\phi;r]:=\{\phi_{j-k+r}\}, \qquad
j,k=0,\dots, n-1, \qquad
\phi_k=\int_{\T} z^{-k}\phi(z) \frac{{\rm d}z}{2\pi {\rm i} z},
\end{gather*}
and
\begin{gather}\label{Hankel}
H_{n}[w;s]:=\{w_{j+k+s}\}, \qquad
j,k=0,\dots, n-1, \qquad
w_k=\int_{\T} z^{-k}w(z) \frac{{\rm d}z}{2\pi {\rm i} z},
\end{gather}
for fixed \textit{offset} values $r,s \in \Z$. If the Hankel symbol $w$ is supported on a subset $I$ of the real line, then $w_k$ in \eqref{Hankel} are instead given by
\begin{gather}\label{moments of w}
w_k=\int_{I} x^{k}w(x)\, {\rm d}x.
\end{gather}

The Toeplitz and Hankel determinants characterize important objects particularly in random matrix theory, statistical mechanics, theory of orthogonal polynomials, theory of Fredhom determinants, etc. For more on the history
of the development of the theory of Toeplitz and Hankel determinants and their numerous applications we refer the reader to the review articles~\cite{DIK1} and~\cite{Krasovsky1}. We also refer to the monographs~\cite{Bottcher-Silbermann} and~\cite{Bottcher-Silbermann1} as the main sources for general facts concerning the theory of Toeplitz matrices and operators.
The asymptotic results concerning the Hankel determinants and their applications~-- both recent and classical,
are featured in the papers \cite{Charlier,CharlierGharakhloo,DIK1,ItsKrasovsky,Krasovsky} and in the references therein.

There has been a growing interest in the asymptotics of Toeplitz+Hankel determinants in~recent years. A Toeplitz+Hankel matrix is naturally the sum $T_n[\phi;r]+H_n[w;s]$, and thus it has a determinant of the form
\begin{gather}
\label{T+H Determinant}
D_n(\phi,w;r,s) := \det
\begin{pmatrix}
\phi_r+w_s & \phi_{r-1}+w_{s+1} & \cdots & \phi_{r-n+1}+w_{s+n-1} \\
\phi_{r+1}+w_{s+1} & \phi_{r}+w_{s+2} & \cdots & \phi_{r-n+2}+w_{s+n} \\
\vdots & \vdots & \ddots & \vdots\\
\phi_{r+n-1}+w_{s+n-1} & \phi_{r+n-2}+w_{s+n} & \cdots & \phi_{r}+w_{s+2n-2}
\end{pmatrix}\!,
\end{gather}
$r,s \in \Z$, where, naturally, $\phi$ and $w$ respectively denote the Toeplitz and Hankel symbols. Although there are no results in the literature for the Toeplitz+Hankel determinants where $w$ is supported on the line, the case where $w$ is supported on the unit circle has been considered under specific assumptions. E.~Basor and T.~Ehrhardt have studied different aspects of these determinants in a series of papers \cite{BE,BE1,BE2,BE3,BE4} via operator-theoretic tools over the last 20 years or so. In \cite{DIK}, the Riemann--Hilbert technique which has already been proven very effective to~study the asymptotics of Toeplitz and Hankel determinants was extended for the first time to~the determinants of Toeplitz+Hankel matrices generated by the same symbol $w=\phi$, where the Hankel weight is supported on $\T$. In that work the symbol was assumed to be of Fisher--Hartwig type and it was further required that the symbol be \textit{even}, i.e., $w=\tilde{w}$.\footnote{\textbf{Notation.} Throughout the paper we will frequently use the notation $\Tilde{f}(z)$, to denote~$f\big(z^{-1}\big)$.} In \cite{BE}, by employing the relevant results in~\cite{DIK}, the authors managed for the first time to find the asymptotics of~Toeplitz+Hankel determinants for certain \textit{non-coinciding} symbols. Indeed, they considered
\begin{gather}\label{BEsymbols}
\phi(z)=c(z)\phi_0(z), \qquad
\mbox{and} \qquad
w(z)=c(z)d(z)w_0(z),
\end{gather}
where the functions $c$ and $d$ are assumed to be smooth and nonvanishing on the unit circle with zero winding number. Neither $c$ nor $d$ are assumed to be even functions but it is further required that $d$ satisfies the conditions $d\tilde{d}=1$ (on the unit circle) and $d(\pm 1)=1$. Furthermore, $\phi_0$ is assumed to be an even function of FH type and $w_0$ is related to $\phi_0$ in one of the following four ways: a) $w_0(z)=\pm \phi_0(z)$, b) $w_0(z)=z\phi_0(z)$ and c) $w_0(z)=\di -z^{-1}\phi_0(z)$.

Since the Riemann--Hilbert analysis carried out in~\cite{DIK} does not allow for different symbols, the~primary goal of this paper is to develop a Riemann--Hilbert framework for asymptotic analysis of~Toeplitz+Hankel determinants $D_n(\phi,w;r,s)$ where $\phi$ and $w$ are not a priori related, at least at~the level of formulation of the problem. Indeed, asymptotics of Toeplitz+Hankel determinants with different symbols are interesting for several reasons that prompts this research project. For~example, the type (\ref{BEsymbols}) of Toeplitz+Hankel determinants has appeared in the very recent work~\cite{Chelkak} in connection with the analysis of Ising model on the $45^{\circ}$ rotated half-plane, or the so-called \textit{zig-zag} half-plane.

Perhaps, our most important motivation behind studying Toeplitz+Hankel determinants is~to~study the large $n$ asymptotics of the eigenvalues of the Hankel matrix $H_n[w]$ associated to~the~symbol $w$. Specifically, we want
to extend the
recent results \cite{DIK2} concerning the spectral asymptotics of the Toeplitz matrices{\footnote{ The large $n$ behavior of the individual
eigenvalues of Toeplitz matrices has been also addressed in a number of works -- see \cite{BotGrud} and references therein.}} to the Hankel case.
The key feature which allows an effective asymptotic spectral analysis of Toeplitz matrices and, in particular, the use of the Riemann--Hilbert method, is that the characteristic polynomial of a Toeplitz
matrix is again a~Toeplitz determinant with a symbol of general Fisher--Hartwig type (i.e., no conditions on the $\beta$-parameters).
The asymptotics of such Toeplitz determinants is given by the Basor--Tracy formula
(first conjectured by E.~Basor and C.~Tracy and then proved in~\cite{DIK}).
However, in the case of Hankel matrices, and this is the crux of the matter,
their characteristic polynomials are not Hankel determinants. Indeed, the characteristic polynomial $\det(H_n[w]-\la I)$ of the Hankel matrix $H_n[w]$ is a particular \textit{Toeplitz+Hankel} determinant, with $\phi(z)\equiv -\la$.
Clearly in the case of characteristic polynomial of a Hankel determinant, there is no relationship between $\phi$ and $w$, so to study the asymptotics of this determinant, one can not refer to the works \cite{DIK} or \cite{BE} mentioned above. Here again we are directed to a methodological issue which has to be addressed at a fundamental level by formulation of a suitable Riemann--Hilbert problem.

In this paper, we are proposing a version of the Riemann--Hilbert formalism for the asymptotic analysis of Toeplitz+Hankel determinants based on a certain $4 \times 4$ Riemann--Hilbert problem. When the Hankel symbol is supported on the unit circle, we introduce the following system of~monic orthogonal polynomials $\{\mathcal{P}_n(z)\}$, $\deg \mathcal{P}_n(z)=n$, associated to $D_n(\phi,w;r,s)$:
\begin{gather*}
\int_{\T} \mathcal{P}_n(z)z^{-k-r}\phi(z) \frac{{\rm d}z}{2\pi {\rm i} z} +\int_{\T} \mathcal{P}_n(z)z^{k+s} \tilde{w}(z) \frac{{\rm d}z}{2\pi {\rm i}z} = h_n \delta_{n,k}, \qquad
k=0,1,\dots,n.
\end{gather*}
We also show that for $r=s=1$, if the symbols are analytic in a neighborhood of the unit circle, one can proceed with a $4\times 4$ analogue of the Deift--Zhou non-linear steepest descent method and arrive at a $4\times 4$ \textit{model Riemann--Hilbert problem} on the unit circle which does not contain the parameter $n$. It is significant to note that one arrives at \textit{the same} model Riemann--Hilbert in the fundamentally different case where $w$ is supported on the interval $[a,b]$, with $0<a<b<1$. In this situation we consider the following system of monic orthogonal polynomials $\{P_n(z)\}$, $\deg P_n(z)=n$, associated to $D_n(\phi,w;r,s)$:
\begin{gather*}
\int_{\T} P_n(z)z^{-k-r}\phi(z) \frac{{\rm d}z}{2\pi {\rm i} z} + \int^{b}_{a} P_n(x)x^{k+s} w(x)\, {\rm d}x = h_n \delta_{n,k}, \qquad
k=0,1,\dots, n.
\end{gather*}
In this case, we can proceed with the Riemann--Hilbert analysis with $r=1$ and an \textit{arbitrary} value for $s\in \Z$.

We have been able to solve the model problem for the class of symbols (\ref{BEsymbols}) considered in~\cite{BE}, in the absence of Fisher--Hartwig singularities. It is important to discuss the relevancy of the two conditions assumed to be satisfied by the function $d$ in~\cite{BE}, in our Riemann--Hilbert framework (see \eqref{BEsymbols} and below). Unlike the condition $d\big({\rm e}^{{\rm i}\theta}\big)d\big({\rm e}^{-{\rm i}\theta}\big)=1$ which is, remarkably, a simplifying condition for the factorization of the model Riemann--Hilbert problem, it should be noticed that the condition $d(\pm1)=1$ is not required in the entirety of our Riemann--Hilbert approach. Solving the model problem allows us to find the asymptotics for the norm~$h_n$ of the associated orthogonal polynomials. We provide the details of this calculation for the case where~$w$ is supported on the unit circle.

In what follows we formulate our main asymptotic result. To do so we need to introduce some notations. Throughout the paper, we will occasionally refer to a symbol $f$ as a \textit{Szeg{\H o}-type} symbol, if a)~it is smooth and nonzero on the unit circle, b)~has no winding number, and c)~admits an analytic continuation in a neighborhood of the unit circle. Also for a given function~$f$, and an oriented contour $\Ga$, we write $f_{+}(z)$ (resp.~$f_-(z)$) to denote the limiting value of $f(\ze)$, as $\ze$ approaches $z \in \Ga$ from the left (resp.\ right) hand side of the oriented contour $\Ga$ with respect to its orientation. We also note that the expressions like $\tilde{f}_{\pm}(z)$, should be understood as the boundary values of the function $\tilde{f}$ at $z$ and should not be confused with the boundary value of~the function $f$ at the point $\frac{1}{z}$, in other words, the operation of $z \mapsto \frac{1}{z}$ precedes that of taking the boundary values. Given the Szeg{\H o}-type symbols $\phi(z)$ and $w(z) = d(z)\phi(z)$, we define
\begin{gather}
\al(z)=\exp \biggl[ \frac{1}{2 \pi {\rm i}} \int_{\T} \frac{\ln(\phi(\tau))}{\tau-z}\,{\rm d}\tau \biggr], \qquad
\be(z)=\exp \biggl[ \frac{1}{2 \pi {\rm i}} \int_{\T} \frac{\ln(d(\tau))}{\tau-z}\,{\rm d}\tau \biggr],\nonumber
\\[.5ex]
\label{C rho}
\mathcal{C}_{\rho}(z) = -\frac{1}{2\pi {\rm i}} \int_{\T} \frac{1}{\be_-(\tau) \be_+(\tau) \tilde{\al}_-(\tau) \al_+(\tau)(\tau-z)}\,{\rm d}\tau,\nonumber
\\[.5ex]
g_{23}(z) = - \frac{\al(0) \tilde{d}(z) \be(z) }{\tilde{\al}(z)}, \qquad
g_{43}(z) = - \al^2(0) \be(z) \biggl( \frac{\al(z) }{\tilde{\phi}(z)} + \frac{\tilde{d}(z) C_{\rho}(z)}{\tilde{\al}(z)} \biggr),\nonumber
\\
\label{T+H integrals in the norm}
R_{1,23}(z;n) = \frac{1}{2\pi {\rm i}}\int_{\Ga'_i} \frac{\mu^ng_{23}(\mu)}{\mu-z}\,{\rm d}\mu, \qquad
R_{1,43}(z;n) = \frac{1}{2\pi {\rm i}}\int_{\Ga'_i} \frac{\mu^ng_{43}(\mu)}{\mu-z}\,{\rm d}\mu,
\end{gather}
and finally
\begin{gather}\label{En0001}
\mathcal{E}(n)= \frac{2}{\al(0)}R_{1,43}(0;n)-C_{\rho}(0)R_{1,23}(0;n).
\end{gather}
In \eqref{T+H integrals in the norm}, the contour $\Ga'_i$ is a circle, oriented counter-clockwise, with radius $r'<1$ so that the functions $\phi$ and $d$ are analytic in the annulus $\{z\colon r'\leq|z|<1\}$.

\begin{Theorem}\label{T+H main thm}
Suppose that $\phi({\rm e}^{i\theta})$ is smooth and nonzero on the unit circle with zero winding number, which admits an analytic continuation in a neighborhood of the unit circle. Let $w=d\phi$, where $d$ satisfies all the properties of $\phi$ in addition to $d\big({\rm e}^{{\rm i}\theta}\big)d\big({\rm e}^{-{\rm i}\theta}\big)=1$, for all $\theta \in [0,2\pi)$. Let also
\begin{gather}\label{annulus}
U_0 := \left\{z\colon r_{i} < |z| < r_{o}\colon 0 < r_{i} < 1 < r_{o}\right\},
\end{gather}
be the neighborhood of the unit circle where both functions, $\phi(z)$ and $d(z)$ are analytic.
Denote
\begin{gather}\label{r0}
r_{0} := \max\big\{r_{i}, r^{-1}_{o}\big\},
\end{gather}
and suppose now that
there exists such $C>0$ that for sufficiently large $n$,
\begin{gather}\label{Enneq0}
|\mathcal{E}(n)| \geq Cr^{n}
\end{gather}
for some $r\in [r_{0}, 1)$,
where $ \mathcal{E}(n)$ is the functional of the weights $\phi$ and $w$ defined in \eqref{En0001}. Then, for sufficiently large~$n$
the determinant $D_{n}(\phi,w;1,1) \neq0$ and the asymptotics of
\begin{gather*}
h_{n-1} \equiv \frac{D_{n}(\phi,w;1,1)}{D_{n-1}(\phi,w;1,1)},
\end{gather*}
is given by
\begin{gather}\label{T+H main result}
h_{n-1}= -\al(0) \frac{\mathcal{E}(n)}{\mathcal{E}(n-1)}\big(1+ \bigO{\big({\rm e}^{-c_1n}\big)}\big), \qquad n \to \infty,
\end{gather}
where $c_1 = -\log\big(\frac{r^{2}_1}{r}\big) >0$, and $r_1$ is any number satisfying the conditions: $ r < r_1 < 1$ and $ r^2_1 < r$.
\end{Theorem}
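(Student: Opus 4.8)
The plan is to obtain \eqref{T+H main result} by running the $4\times4$ nonlinear steepest descent for the Riemann--Hilbert problem of the monic orthogonal polynomials $\mathcal{P}_n$ attached to $D_n(\phi,w;1,1)$, as developed in the preceding sections, and then reading the norm off the solution. First I would recall that, just as in scalar Szeg{\H o} theory where $h_{n-1}=D_n/D_{n-1}$, the norm is encoded in designated entries of the first subleading coefficient in the large-$z$ expansion of the solution $Y=Y^{(n)}$; in the present $4\times4$ setting these are the entries tied to positions $(2,3)$ and $(4,3)$, which is exactly why the functionals $g_{23}$ and $g_{43}$ of \eqref{T+H integrals in the norm} are the objects controlling the answer. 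It therefore suffices to follow those entries through the explicit, invertible chain of transformations and isolate their leading large-$n$ behaviour.

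Second, I would perform the transformations $Y\to T\to S\to R$. The $g$-function, built from the Szeg{\H o}-type factors $\al$ and $\be$ of \eqref{C rho}, normalizes the problem at infinity; opening lenses along $\T$ converts the oscillatory jumps into ones decaying exponentially off the unit circle; and the global parametrix is solved by the $n$-independent model Riemann--Hilbert problem, which under the standing hypotheses ($\phi,d$ Szeg{\H o}-type and $d\tilde{d}=1$) is explicitly solvable for the class \eqref{BEsymbols} with entries expressed through $\al,\be$ and $\mathcal{C}_\rho$. In the absence of Fisher--Hartwig singularities no local parametrices are required. Because $\phi$ and $d$ continue analytically into the annulus $U_0$ of \eqref{annulus}, the residual jump of the error matrix $R$ may be pushed onto the interior circle $\Ga'_i$, where it picks up the factor $\mu^n$; this is the sole source of the surviving $(2,3)$ and $(4,3)$ contributions $R_{1,23}$ and $R_{1,43}$.

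Third, I would solve the small-norm problem for $R$, with jump $V_R$, through its Neumann series $R=I+R_1+R_2+\cdots$, where $R_1(z)=\frac{1}{2\pi\mathrm{i}}\int_{\Ga'_i}\frac{(V_R(\mu)-I)}{\mu-z}\,\mathrm{d}\mu$. Its $(2,3)$ and $(4,3)$ entries at $z=0$ are precisely $R_{1,23}(0;n)$ and $R_{1,43}(0;n)$, and combining them with the $n$-independent weights supplied by the model parametrix---namely $2/\al(0)$ multiplying $R_{1,43}$ and $-\mathcal{C}_\rho(0)$ multiplying $R_{1,23}$, together with an overall $-\al(0)$---reproduces the functional $\mathcal{E}(n)$ of \eqref{En0001}. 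Unwinding the transformations and forming $h_{n-1}=D_n/D_{n-1}$ then places $\mathcal{E}(n)$ in the numerator against the normalization $\mathcal{E}(n-1)$ inherited from the degree-$(n-1)$ problem, i.e., the leading expression $-\al(0)\,\mathcal{E}(n)/\mathcal{E}(n-1)$. Since $\mathcal{E}(n)\neq0$ for large $n$ by \eqref{Enneq0}, the same computation shows inductively that $D_n(\phi,w;1,1)\neq0$.

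The main obstacle---and the precise reason hypothesis \eqref{Enneq0} is imposed---is the error estimate. Since $V_R-I$ is not uniformly exponentially small but only $\bigO{\big(r_1^n\big)}$ when estimated on the circle $\Ga'_i$ of radius $r_1$, the standard small-norm theorem does not by itself make the correction negligible; one must instead compare the genuine leading term against the higher iterates. Each further application of the Cauchy operator contributes another factor $\bigO{\big(r_1^n\big)}$, so $R_2$ and beyond are $\bigO{\big(r_1^{2n}\big)}$, while the lower bound $|\mathcal{E}(n)|\ge Cr^n$ with $r\in[r_0,1)$ prevents the leading term from being overtaken. Dividing, the relative error is $\bigO{\big((r_1^2/r)^n\big)}=\bigO{\big(\mathrm{e}^{-c_1 n}\big)}$ with $c_1=-\log\big(r_1^2/r\big)>0$, positive precisely because $r_1$ is chosen with $r<r_1<1$ and $r_1^2<r$. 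The delicate points to verify are that the $g$-function and lens contributions to $V_R-I$ are genuinely subdominant to the inner-circle part, and that the off-diagonal structure selecting the $(2,3)$ and $(4,3)$ entries persists through the full unwinding; granting these, \eqref{T+H main result} follows.
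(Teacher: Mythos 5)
Your sketch reproduces the steepest--descent skeleton (transformations, solvable model problem under $d\tilde d=1$, small--norm series, and the error mechanism giving the rate $\big(r_1^2/r\big)^n={\rm e}^{-c_1n}$), but it omits the step that actually connects the $4\times4$ analysis to $h_{n-1}$, and this is a genuine gap, not a technicality. The asymptotic solution of the $\mathcal{X}$-RHP does not by itself contain $h_{n-1}$: since the original $\mathcal{Y}$-RHP is a problem \emph{with a shift}, one must reconstruct $\mathcal{Y}$ from $\mathcal{X}$ through the $2\times4$ matrix $\overset{\circ}{\mathcal{X}}=\mathfrak{R}\,\mathcal{X}$, where the constants $C_j(n)$ in $\mathfrak{R}$ solve the linear system \eqref{C's-to-P} built from $P(n)=R(0;n)\La(0)$. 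These constants are precisely what put $\mathcal{E}(n)$ into the numerator of \eqref{T+H main result}: solving \eqref{C's-to-P} gives $C_2(n)\sim \mathcal{C}_{\rho}(0)/\mathcal{E}(n)$ and $C_4(n)\sim -2/(\al(0)\mathcal{E}(n))$ (cf.\ \eqref{T+H C2 ghadim}, \eqref{T+H C4 ghadim}), and then $-1/h_{n-1}=C_2(n)\mathcal{B}_{12}(n)+C_4(n)\mathcal{B}_{32}(n)+\mathcal{B}_{42}(n)$ (cf.\ \eqref{T+H h_n C's B's}) produces $\mathcal{E}(n-1)$ in the denominator through the index--shift identity $R^{(1)}_{jk}(n)=R_{1,jk}(0;n-1)$, $jk=23,43$, relating the Taylor coefficient of $R$ at $0$ to $R_1(0;\cdot)$. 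In particular, \emph{both} $\mathcal{E}(n)$ and $\mathcal{E}(n-1)$ emerge from the single degree-$n$ problem; your assertion that $\mathcal{E}(n-1)$ is ``inherited from the degree-$(n-1)$ problem'' describes a different (and undeveloped) computation, and ``unwinding the transformations'' without the $C_j(n)$ cannot yield the ratio $\mathcal{E}(n)/\mathcal{E}(n-1)$.

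For the same reason you misidentify the role of hypothesis \eqref{Enneq0}. Its primary function in the paper is not the error estimate but the solvability of the reconstruction system: by \eqref{condition22220}, the quantity in condition \eqref{gencond} equals $-\mathcal{E}(n)+O\big({\rm e}^{-2cn}\big)$, so \eqref{Enneq0} guarantees that \eqref{gencond} holds for $n$ and $n-1$; only then do Lemma \ref{Lemma Unique reconstruction of Y from X} and Corollary \ref{corollary2.8.1} give $D_n\neq0$, $D_{n-1}\neq0$ and validate the limit formula for $h_{n-1}$. Your claim that $D_n(\phi,w;1,1)\neq0$ follows ``inductively'' from ``the same computation'' is unsupported: nonvanishing of $\mathcal{E}(n)$ alone says nothing about $D_n$ without the chain Lemma \ref{Lemma Unique Solvability of C's}--Lemma \ref{Lemma unique solvability of C's}--Lemma \ref{Lemma Unique reconstruction of Y from X}, which is exactly the machinery your proposal leaves out. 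The error analysis you do give (comparing $O\big(r_1^{2n}\big)$ corrections against the lower bound $Cr^n$) is correct and coincides with the paper's, but it is the secondary use of \eqref{Enneq0}, entering only after the reconstruction step has been secured.
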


\begin{Remark}\label{Remark slight generalization}
As a slight generalization of Theorem \ref{T+H main thm}, one could replace the condition~\eqref{Enneq0}~by
\begin{gather}\label{Enneq00}
|\mathcal{E}(n)| \geq Cr^{n} \varkappa(n)
\end{gather}
for some $r\in [r_{0}, 1)$,
where $\varkappa(n)>0$ for all sufficiently large $n$ and is such that ${\rm e}^{-\varepsilon n}=o(\varkappa(n))$ for all $\varepsilon>0$ as~$n \to \infty$. So as a result, \eqref{T+H main result} should be replaced by
\begin{gather*}%\label{T+H main result 1}
h_{n-1}= -\al(0) \frac{\mathcal{E}(n)}{\mathcal{E}(n-1)}
\biggl(1+ \bigO{\biggl(\frac{{\rm e}^{-c_1n}}{\varkappa(n)}\biggr)}\biggr), \qquad
n \to \infty,
\end{gather*}
where the constant $c_1$ above is the same as the one in \eqref{T+H main result}.
\end{Remark}

\begin{Remark} The analyticity of $\phi$ in the neighborhood of the unit circle is a technical condition. It can be lifted
and replaced by certain smoothness conditions using the approximation type arguments
similar to the ones used in~\cite[Section~6.2]{DIK}. However, because of the condition~(\ref{Enneq0}),
the corresponding analysis becomes more subtle.
We shall address this issue together with several other
technical points in the forthcoming publications.
\end{Remark}

\subsection{Outline} In Section~\ref{Section Hankel on T} we will analyze the case where Hankel symbol $w$ is supported on the unit circle. We~will propose a $2\times2$ Riemann--Hilbert problem {\it with a shift} for the associated orthogonal polynomials. For an effective Riemann--Hilbert analysis, we will then propose a $2\times4$ Riemann--Hilbert problem whose jump conditions could be written in the usual form of matrix multiplications. We will then formulate a $4\times4$ Riemann--Hilbert problem which is the suitable framework for our analysis. In the formulation of the $4 \times 4$ Riemann--Hilbert problem, for technical reasons, we will restrict to the particular offset values $r=s=1$. Following the natural steps of steepest descent analysis we will arrive at a model problem in Section~\ref{model}, which we will refer to as the \textit{model Riemann--Hilbert problem for the pair $(\phi,w)$}.

Section~\ref{Section Hankel on I} is devoted to analysis of the case where the Hankel symbol is supported on the interval $[a,b]$, with $0<a<b<1$. Similar to Section~\ref{Section Hankel on T}, we will propose a $2\times2$ Riemann--Hilbert problem
with a shift for the associated system of orthogonal polynomials and for the same reasons mentioned above we pass through a $2\times4$ to arrive at a suitable $4\times4$ Riemann--Hilbert problem. In this case, our methods allow for considering an arbitrary offset for the Hankel part, more precisely, we can pursue the steepest descent analysis for $r=1$ and an arbitrary $s\in \Z$. This steepest descent analysis leads us to a model problem, which is the same as the model problem of Section~\ref{Section Hankel on T}, except that it is for the pair $(\phi,-\tilde{u})$, where
\begin{gather*}
u(z)=z\int^b_{a}\frac{x^{s-1}w(x)}{x-z}\,{\rm d}x.
\end{gather*}
Although there are similarities between the steepest descent analysis of Section~\ref{Section Hankel on T} and Section~\ref{Section Hankel on I}, we feel obliged to lay out a thorough exposition to illustrate the remarkable fact that the same model RH problem emerges in both cases.

In Section~\ref{T+H solvable}, we present the factorization of the model RH problem for the pair $(\phi,d\phi)$, where the functions $\phi$ and $d$ are of Szeg{\H o}-type. The function $d$ further satisfies $d\tilde{d}=1$ on the unit circle. We will then use this solution to construct the solutions to the global parametrix and the small-norm Riemann--Hilbert problems, which finally enables us to find the asymptotics of~the norms $h_n$ of the associated monic orthogonal polynomials. Although we do not formulate our main asymptotic result for the symbol pairs discussed in Section~\ref{Section Hankel on I}, in Remark \ref{Remark result for line} we justify that the analysis of Section~\ref{T+H solvable} is completely relevant for the symbol pairs of Section~\ref{Section Hankel on I} as well.

Finally, in Section~\ref{Section: suggestions for future work} we summarize the still open technical and conceptual questions which we are going to address in our future work.

%Finally, in Section~\ref{Section: suggestions for future work} we will present a number of problems in the theory of Toeplitz%+Hankel determinants which we believe could be approached via the Riemann--Hilbert framework of this paper.
%(\color{blue} I think we should discuss the applications with less detail, unlike the thesis \color{black})

\section{Toeplitz+Hankel determinants: Hankel weight \\supported on $\T$ }\label{Section Hankel on T}

In this section we assume that $w$ is supported on the unit circle and that both symbols $\phi$ and~$w$,
admit analytic continuations to a neighborhood of the unit circle. A key observation is that the determinant (\ref{T+H Determinant}) is related to the system of monic polynomials $\{\mathcal{P}_n(z)\}$, $\deg \mathcal{P}_n(z)=n$, determined by the orthogonality relations \begin{gather}\label{T+H OP}
\int_{\T} \mathcal{P}_n(z)z^{-k-r}\phi(z) \frac{{\rm d}z}{2\pi {\rm i} z} +\int_{\T} \mathcal{P}_n(z)z^{k+s} \tilde{w}(z) \frac{{\rm d}z}{2\pi {\rm i}z} = h_n \delta_{n,k}, \qquad
k=0,1,\dots,n.
\end{gather}
These polynomials exist and are unique if the Toeplitz+Hankel determinants \eqref{T+H Determinant} are non-zero. This can be seen as follows. Expectedly, if $D_n \equiv D_n(\phi,w;r,s) \neq 0$, the polynomials $\mathcal{P}_{n}$ can be written as the following determinants
\begin{gather}\label{T+H OP Det rep}
\mathcal{P}_n(z) := \frac{1}{D_{n}} \det
\begin{pmatrix}
\phi_r+w_s & \phi_{r-1}+w_{s+1} & \cdots %& \phi_{r-n+1}+w_{s+n-1}
& \phi_{r-n}+w_{s+n}
\\
\phi_{r+1}+w_{s+1} & \phi_{r}+w_{s+2} & \cdots %& \phi_{r-n+2}+w_{s+n}
& \phi_{r-n+1}+w_{s+n+1}
\\
\vdots & \vdots & \ddots & \vdots
\\
\phi_{r+n-1}+w_{s+n-1} & \phi_{r+n-2}+w_{s+n} & \cdots & \phi_{r-1}+w_{s+2n-1} %& \phi_{r-1}+w_{s+2n-1}
\\
1 & z & \cdots %& z^{n-1}
& z^n
\end{pmatrix}\!.
\end{gather}
Indeed, for the polynomials defined by (\ref{T+H OP Det rep}) we have that
\begin{gather*}%\label{0002}
 \int_{\T} \mathcal{P}_n(z)z^{k+s}\tilde{w}(z)\frac{{\rm d}z}{2\pi {\rm i} z}
 \\ \qquad
 =\frac{1}{D_{n}} \det
\begin{pmatrix}\arraycolsep=0pt
\phi_r+w_s & \phi_{r-1}+w_{s+1} & \cdots %&\phi_{r-n+1}+w_{s+n-1}
& \phi_{r-n}+w_{s+n}
\\
\phi_{r+1}+w_{s+1} & \phi_{r}+w_{s+2} & \cdots %&\phi_{r-n+2}+w_{s+n}
& \phi_{r-n+1}+w_{s+n+1}
\\
\vdots & \vdots & \ddots %& \vdots
& \vdots
\\
\phi_{r+n-1}+w_{s+n-1} & \phi_{r+n-2}+w_{s+n} & \cdots %& \phi_{r}+w_{s+2n-2}
& \phi_{r-1}+w_{s+2n-1}
\\
w_{k+s} & w_{k+s+1} & \cdots %& w_{k+s+n-1}
& w_{k+s+n}
\end{pmatrix}\!,
\end{gather*}
and
\begin{gather*}%\label{03}
\int_{\T} \mathcal{P}_n(z)z^{-k-r}\phi(z) \frac{{\rm d}z}{2\pi {\rm i} z}
\\ \qquad
 =\frac{1}{D_{n}} \det \begin{pmatrix}
\phi_r+w_s & \phi_{r-1}+w_{s+1} & \cdots %& \phi_{r-n+1}+w_{s+n-1}
& \phi_{r-n}+w_{s+n}
\\
\phi_{r+1}+w_{s+1} & \phi_{r}+w_{s+2} & \cdots & \phi_{r-n+1}+w_{s+n+1} %& \phi_{r-n+1}+w_{s+n+1}
\\
\vdots & \vdots & \ddots %& \vdots
& \vdots
\\
\phi_{r+n-1}+w_{s+n-1} & \phi_{r+n-2}+w_{s+n} & \cdots & \phi_{r-1}+w_{s+2n-1} %& \phi_{r-1}+w_{s+2n-1}
\\
\phi_{k+r} & \phi_{k+r-1} & \cdots %& \phi_{k+r-n+1}
& \phi_{k+r-n}
\end{pmatrix}\!,
\end{gather*}
hence
\begin{gather*}
\int_{\T} \mathcal{P}_n(z)z^{-k-r}\phi(z) \frac{{\rm d}z}{2\pi {\rm i} z} \!+\!\int_{\T} \mathcal{P}_n(z)z^{k+s} \tilde{w}(z) \frac{{\rm d}z}{2\pi {\rm i}z}
\\
= \frac{1}{D_{n}} \det \begin{pmatrix}
\phi_r+w_s & \phi_{r-1}+w_{s+1} & \cdots% & \phi_{r-n+1}+w_{s+n-1}
& \phi_{r-n}+w_{s+n}
\\
\phi_{r+1}+w_{s+1} & \phi_{r}+w_{s+2} & \cdots% & \phi_{r-n+2}+w_{s+n}
& \phi_{r-n+1}+w_{s+n+1}
\\
\vdots & \vdots & \ddots & \vdots\\
\phi_{r+n-1}+w_{s+n-1} & \phi_{r+n-2}+w_{s+n} & \cdots %& \phi_{r}+w_{s+2n-2}
& \phi_{r-1}+w_{s+2n-1}
\\
\phi_{k+r} + w_{k+s} & \phi_{k+r-1} + w_{k+s+1} & \cdots %& \phi_{k+r-n+1} + w_{k+s+n-1}
& \phi_{k+r-n} + w_{k+s+n}
\end{pmatrix}
\!= \frac{D_{n+1}}{D_{n}} \delta_{n,k}.
\end{gather*}
The uniqueness of the polynomial $\mathcal{P}_n(z) = z^n + a_{n-1}z^{n-1} + \cdots + a_0$ satisfying \eqref{T+H OP}, simply follows from the fact that one has the following linear system for the coefficients $a_j, 0 \leq j \leq n-1$:
\begin{gather}\label{uniquenessOfPolynomials}
(T_{n}[\phi;r]+H_{n}[w;s]) \begin{pmatrix}
a_0 \\
a_1 \\
\vdots \\
a_{n-1}
\end{pmatrix} = \begin{pmatrix}
-\phi_{-n+r} -w_{n+s} \\
-\phi_{1-n+r} -w_{1+n+s} \\
\vdots \\
-\phi_{-1+r} -w_{2n-1+s}
\end{pmatrix}\!.
\end{gather}
So if $D_{n} \neq 0$, the coefficients $a_j$ and hence $\mathcal{P}_{n}$, can be uniquely determined by inverting the Toeplitz+Hankel matrix in (\ref{uniquenessOfPolynomials}).
So the polynomials defined by (\ref{T+H OP Det rep}) are the unique polynomials satisfying \eqref{T+H OP}, and
\begin{gather}\label{T+H h_n and Det}
h_n=\frac{D_{n+1}(\phi,w;r,s)}{D_{n}(\phi,w;r,s)}.
\end{gather}

\noindent We consider the function $\mathcal{Y}$ defined as
\begin{gather}\label{OP Rep of Solution}
\mathcal{Y}(z;n)=\begin{pmatrix}
\mathcal{P}_n(z) & \di \int_{\T} \frac{\xi^{s} \tilde{w}(\xi) \mathcal{P}_n(\xi) + \xi^{r}\tilde{\phi}(\xi)\tilde{\mathcal{P}}_n(\xi)}{\xi-z}\frac{{\rm d}\xi }{2\pi {\rm i} \xi}\vspace{1ex}
\\
-\di \frac{1}{h_{n-1}}\mathcal{P}_{n-1}(z) & -\di \frac{1}{h_{n-1}} \int_{\T} \frac{\xi^{s} \tilde{w}(\xi) \mathcal{P}_{n-1}(\xi) + \xi^{r}\tilde{\phi}(\xi)\tilde{\mathcal{P}}_{n-1}(\xi)}{\xi-z}\frac{{\rm d}\xi }{2\pi {\rm i} \xi}
\end{pmatrix}\!,
\end{gather}
assuming that $D_n,D_{n-1}\neq 0$, which ensures that a) $h_{n-1}=D_n/D_{n-1}$ is well-defined and nonzero and b) that $\mathcal{P}_n$ and $\mathcal{P}_{n-1}$ exist and are unique. Now consider the following Riemann--Hilbert problem for finding the $2 \times 2$ matrix $\mathcal{Y}$ satisfying:

\begin{itemize}\itemsep=0pt
\item \textbf{RH-$\mathcal{Y}$1} \quad $\mathcal{Y}$ is holomorphic in $\C \setminus \T. $
\item \textbf{RH-$\mathcal{Y}$2} \quad For $z \in \T$ we have
\begin{gather}\label{022}
\mathcal{Y}_+^{(1)}(z;n)=\mathcal{Y}^{(1)}_-(z;n), \qquad z \in \T,
\end{gather}
and
\begin{gather}\label{0023}
\mathcal{Y}_+^{(2)}(z;n)=\mathcal{Y}^{(2)}_-(z;n) \!+\! z^{-1+s} \tilde{w}(z) \mathcal{Y}^{(1)}_-(z;n) \!+\! z^{-1+r}\tilde{\phi}(z)\mathcal{Y}^{(1)}_-\big(z^{-1};n\big), \quad
z \in \T,
\end{gather}
where $\T$ is positively oriented in the counter-clockwise direction.
\item \textbf{RH-$\mathcal{Y}$3} \quad As $z \to \infty$, $\mathcal{Y}$ satisfies
\begin{gather}\label{Yinfty}
\mathcal{Y}(z;n)=\big(I + \mathcal{O}\big(z^{-1}\big)\big) z^{n \sigma_3}
= \begin{pmatrix}
z^n+ \mathcal{O}\big(z^{n-1}\big) & \mathcal{O}\big(z^{-n-1}\big)
\vspace{1ex}\\
\mathcal{O}\big(z^{n-1}\big) & z^{-n} + \mathcal{O}\big(z^{-n-1}\big)
\end{pmatrix}\!,
\end{gather}

\end{itemize}
where $\mathcal{Y}^{(1)}$ and $\mathcal{Y}^{(2)}$ are the first and second columns of $\mathcal{Y}$ respectively and
\begin{gather*}
\sigma_3=\begin{pmatrix}
1 & 0 \\
0 & -1
\end{pmatrix}
\end{gather*}
is the third Pauli matrix. Notice that the jump conditions for the $\mathcal{Y}$-RHP can not be written in the usual form of matrix multiplications. Expectedly, this feature of the $\mathcal{Y}$-RHP (which will be addressed in more detail in Section~\ref{subsection 24 44 Hankel on T}) significantly affects the progression of the Riemann--Hilbert analysis as it necessitates studying larger size Riemann--Hilbert problems.

The next theorem establishes the association of the function \eqref{OP Rep of Solution} with the $\mathcal{Y}$-RHP.

\pagebreak

\begin{Theorem}\label{THM RHP formulation circle}
The following statements are true.
\begin{enumerate}\itemsep=0pt
\item[{\rm 1.}] Suppose that $D_n$, $D_{n-1} \neq 0$. Then, the Riemann--Hilbert problem \emph{\textbf{RH-$\mathcal{Y}$1}} through \emph{\textbf{RH-$\mathcal{Y}$3}} is uniquely solvable and its solution $\mathcal{Y}$ is defined by \eqref{OP Rep of Solution}.
Moreover,
\begin{gather}\label{T+H h_n00}
h_{n-1} = - \lim_{z \to \infty} z^{n-1}/\mathcal{Y}_{21}(z; n).
\end{gather}
\item[{\rm 2.}] Suppose that the Riemann--Hilbert problem \emph{\textbf{RH-$\mathcal{Y}$1}} through \emph{\textbf{RH-$\mathcal{Y}$3}}
has a unique solution. Then $D_n \neq 0 $, $\operatorname{rank} (T_{n-1}[\phi;r] + H_{n-1}[w;s]) \geq n-2$, and $\mathcal{P}_n(z)
= \mathcal{Y}_{11}(z;n)$.
\item[{\rm 3.}] Suppose that the Riemann--Hilbert problem \emph{\textbf{RH-$\mathcal{Y}$1}} through \emph{\textbf{RH-$\mathcal{Y}$3}}
has a unique solution. Suppose also that
\begin{gather*}%\label{condition on y21}
\lim_{z \to \infty}\mathcal{Y}_{21}(z;n)z^{-n+1} \neq 0.
\end{gather*}
Then, as before, $D_n \neq 0$, $\mathcal{P}_n(z)
= \mathcal{Y}_{11}(z;n)$, and, in addition,
\begin{gather*}
D_{n-1} \neq 0, \qquad
h_{n-1} = -\lim_{z\to \infty} \mathcal{Y}^{-1}_{21}(z;n) z^{n-1},\qquad
\mathcal{P}_{n-1}(z) = -h_{n-1}\mathcal{Y}_{21}(z;n).
\end{gather*}
\end{enumerate}
\end{Theorem}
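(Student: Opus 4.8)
The plan is to exploit a structural feature of the problem: although \eqref{0023} is not an ordinary multiplicative matrix jump -- so that the usual device of proving $\det\mathcal{Y}\equiv1$ is unavailable -- the shift couples the two columns \emph{row by row}. Indeed, the additive jump of $\mathcal{Y}_{12}$ involves only the entries $\mathcal{Y}_{11,-}(z)$ and $\mathcal{Y}_{11,-}\big(z^{-1}\big)$, and that of $\mathcal{Y}_{22}$ only $\mathcal{Y}_{21,-}(z)$ and $\mathcal{Y}_{21,-}\big(z^{-1}\big)$. Hence each row $\big(\mathcal{Y}_{j1},\mathcal{Y}_{j2}\big)$ satisfies a self-contained $1\times2$ vector Riemann--Hilbert problem, the two rows differing only through the normalization prescribed by \eqref{Yinfty}. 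I would reduce every assertion to the linear algebra of the single matrix $M:=T_n[\phi;r]+H_n[w;s]$, whose determinant is $D_n$ and whose top-left $(n-1)\times(n-1)$ block is $M':=T_{n-1}[\phi;r]+H_{n-1}[w;s]$.

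For statement 1, existence is checked directly from \eqref{OP Rep of Solution}: the first-column entries are polynomials and the second-column entries are Cauchy transforms over $\T$, which gives \textbf{RH-$\mathcal{Y}$1}; the Sokhotski--Plemelj formula applied to the second column reproduces \eqref{0023} exactly, the factor $\xi^{-1}$ in the integrand supplying the prefactors $z^{-1+s}$ and $z^{-1+r}$. For \textbf{RH-$\mathcal{Y}$3} I would expand $(\xi-z)^{-1}$ in powers of $z^{-1}$ and note that, after the substitution $\xi\mapsto\xi^{-1}$ (which reverses the orientation of $\T$) turns the $\tilde\phi$- and $\tilde w$-integrals into the two terms of the orthogonality functional, the coefficient of $z^{-k-1}$ in $\mathcal{Y}_{12}$ is precisely the left-hand side of \eqref{T+H OP}; the relations for $k=0,\dots,n-1$ force $\mathcal{Y}_{12}=\mathcal{O}\big(z^{-n-1}\big)$, and the analogous relations for $\mathcal{P}_{n-1}$ give $\mathcal{Y}_{22}=z^{-n}+\mathcal{O}\big(z^{-n-1}\big)$; here $D_n,D_{n-1}\neq0$ guarantees that $\mathcal{P}_n,\mathcal{P}_{n-1}$ and $h_{n-1}=D_n/D_{n-1}$ exist. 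Uniqueness follows from the row reduction: any solution has $\mathcal{Y}_{11}$ entire with $\mathcal{Y}_{11}=z^n+\mathcal{O}\big(z^{n-1}\big)$, hence a monic polynomial $p$ of degree $n$; reconstructing $\mathcal{Y}_{12}$ as the Cauchy transform of its jump and imposing the decay $\mathcal{O}\big(z^{-n-1}\big)$ shows that the coefficient vector $\mathbf a$ of $p$ solves \eqref{uniquenessOfPolynomials}, i.e.\ $M\mathbf a=\mathbf b$, so that $D_n\neq0$ forces $p=\mathcal{P}_n$; the same argument on the second row pins down $\mathcal{Y}_{21}$, and reading off its leading coefficient yields \eqref{T+H h_n00}.

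For statement 2 the reconstruction is run in reverse. A unique solution furnishes a monic $p=\mathcal{Y}_{11}$ with $M\mathbf a=\mathbf b$. If $\det M=0$, pick $0\neq\mathbf v\in\ker M$ and let $v$ be the corresponding polynomial of degree $\le n-1$; then $p+v$ is again monic of degree $n$ and solves $M(\mathbf a+\mathbf v)=\mathbf b$, so replacing the first row of $\mathcal{Y}$ by $p+v$ together with the Cauchy transform of its jump, and leaving the second row untouched, produces a second solution of \textbf{RH-$\mathcal{Y}$1}--\textbf{RH-$\mathcal{Y}$3}, contradicting uniqueness. Hence $\ker M=\{0\}$, i.e.\ $D_n\neq0$, and then $p=\mathcal{P}_n$ as before. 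The rank bound is pure linear algebra: $M'$ is obtained from the invertible $M$ by deleting one row and one column, an operation that leaves rank at least $n-2$. For statement 3, the extra hypothesis asserts that the degree-$(n-1)$ coefficient $\ell=\lim_{z\to\infty}\mathcal{Y}_{21}(z)z^{-n+1}$ of $q=\mathcal{Y}_{21}$ does not vanish. Writing the normalization $\mathcal{Y}_{22}=z^{-n}+\mathcal{O}\big(z^{-n-1}\big)$ as $M\mathbf q=\mathbf e$ with $\mathbf e=(0,\dots,0,-1)^{T}$ and invoking Cramer's rule gives $\ell=q_{n-1}=-\det M'/\det M=-D_{n-1}/D_n$. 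Thus $\ell\neq0$ is equivalent to $D_{n-1}\neq0$; the chain $h_{n-1}=D_n/D_{n-1}=-1/\ell$ yields the stated formula for $h_{n-1}$, and $-h_{n-1}q$ is the monic degree-$(n-1)$ polynomial obeying the orthogonality of $\mathcal{P}_{n-1}$, so $\mathcal{P}_{n-1}=-h_{n-1}\mathcal{Y}_{21}$ by uniqueness.

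The step I expect to require the most care is the bookkeeping in statement 1: tracking the factor $\xi^{-1}$ and, above all, performing the inversion $\xi\mapsto\xi^{-1}$ with its orientation reversal so that the two Cauchy integrals in \eqref{OP Rep of Solution} assemble into the single orthogonality functional of \eqref{T+H OP}. Once this dictionary between the expansion coefficients of $\mathcal{Y}_{12},\mathcal{Y}_{22}$ and the matrix $M$ is in place, the remaining difficulty is conceptual rather than computational -- recognizing the row decoupling, which is what lets the non-multiplicative jump pass through the entire argument without obstruction.
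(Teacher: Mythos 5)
Your proposal is correct and follows essentially the same route as the paper: both identify the entries of any solution with polynomials via the row-decoupled jump and asymptotic conditions, translate the decay of the second-column entries into the linear systems governed by $T_n[\phi;r]+H_n[w;s]$, and settle all three statements through the linear algebra of that matrix (Cramer's rule, kernel vectors). Your only departures are small streamlinings of what the paper does implicitly: in part~2 you make the kernel-vector construction of a second solution explicit and get the rank bound from the fact that deleting one row and one column of an invertible matrix lowers the rank by at most two, and in part~3 Cramer's rule yields $\ell = -D_{n-1}/D_n$ directly, where the paper instead writes $\mathcal{R}_{n-1}=q_n\mathcal{Q}_{n-1}$ and argues by contradiction.
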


\begin{proof}
Assume that $D_n,D_{n-1} \neq 0$, and thus the function $\mathcal{Y}$ is uniquely defined by \eqref{OP Rep of Solution}, being identified with the \textit{unique} orthogonal polynomials satisfying the orthogonality conditions \eqref{T+H OP}. We now show that $\mathcal{Y}$ given by \eqref{OP Rep of Solution} is a unique solution of \textbf{RH-$\mathcal{Y}$1} through \textbf{RH-$\mathcal{Y}$3}. It is clear that \textbf{RH-$\mathcal{Y}$1} is satisfied due to general properties of Cauchy integrals. From (\ref{022}) we see that~$\mathcal{Y}_{11}$ and~$\mathcal{Y}_{21}$ are entire functions, and from (\ref{Yinfty}) we know that $\mathcal{Y}_{11}$ has to be a monic polynomial of degree $n$ and $\mathcal{Y}_{21}$ has to be a polynomial of degree $n-1$ or less,
\begin{gather*}%\label{pol000}
\mathcal{Y}_{11}(z; n) = z^n + \sum_{k=0}^{n-1} \al_kz^{k}, \qquad
\mathcal{Y}_{21}(z;n) = \sum_{k=0}^{n-1} \be_kz^{k}.
\end{gather*}
From (\ref{0023}) and what we just mentioned about $\mathcal{Y}_{11}$ we would have
\begin{gather*}%\label{n01}
\mathcal{Y}_{12,+}(z;n) - \mathcal{Y}_{12,-}(z;n) = z^{-1+s} \tilde{w}(z) \mathcal{Y}_{11}(z;n) + z^{-1+r}\tilde{\phi}(z)\widetilde{\mathcal{Y}}_{11}(z;n).
\end{gather*}
So by Plemelj--Sokhotskii formula we have
\begin{gather*}%\label{n02}
\mathcal{Y}_{12}(z;n)=\frac{1}{2\pi {\rm i}} \int_{\T} \frac{\xi^{-1+s} \tilde{w}(\xi) \mathcal{Y}_{11}(\xi;n) + \xi^{-1+r}\tilde{\phi}(\xi)\widetilde{\mathcal{Y}}_{11}(\xi;n)}{\xi-z}
\,{\rm d}\xi.
\end{gather*}
Using the identity
\begin{gather*}%\label{n03}
\frac{1}{\xi-z}=-\sum_{k=0}^{n} \frac{\xi^{k}}{z^{k+1}} + \frac{\xi^{n+1}}{(\xi-z)z^{n+1}},
\end{gather*}
we get
\begin{gather*}
\mathcal{Y}_{12}(z;n) = -\sum_{k=0}^{n} \frac{1}{z^{k+1}} \int_{\T} \big[ \xi^{-1+s} \tilde{w}(\xi) \mathcal{Y}_{11}(\xi;n) \xi^{k} + \xi^{-1+r}\tilde{\phi}(\xi)\widetilde{\mathcal{Y}}_{11}(\xi;n) \xi^{k} \big]\frac{{\rm d}\xi}{2\pi {\rm i}}
\\
\phantom{\mathcal{Y}_{12}(z;n)={}}+ \frac{1}{z^{n+1}} \int_{\T} \frac{\xi^{n+1}}{(\xi-z)} \big[ \xi^{-1+s} \tilde{w}(\xi) \mathcal{Y}_{11}(\xi;n) + \xi^{-1+r}\tilde{\phi}(\xi)\widetilde{\mathcal{Y}}_{11}(\xi;n)\big]\frac{{\rm d}\xi}{2\pi {\rm i}}.
\end{gather*}
Note that since $\mathcal{Y}_{12}(z;n)=O(z^{-n-1})$, we must have:
\begin{gather*}
\int_{\T} \tilde{w}(\xi) \mathcal{Y}_{11}(\xi;n) \xi^{k+s} \frac{{\rm d}\xi}{2\pi {\rm i} \xi} + \int_{\T} \tilde{\phi}(\xi)\widetilde{\mathcal{Y}}_{11}(\xi;n) \xi^{k+r} \frac{{\rm d}\xi}{2\pi {\rm i} \xi} = 0, \qquad 0\leq k \leq n-1.
\end{gather*}
In the second integral we make the change of variable $\xi \mapsto \tau := \xi^{-1}$ and as a result we will arrive at
\begin{gather*}
\int_{\T} \mathcal{Y}_{11}(\xi;n)\xi^{k+s} \tilde{w}(\xi) \frac{{\rm d}\xi}{2\pi {\rm i}\xi} + \int_{\T} \mathcal{Y}_{11}(\tau;n)\tau^{-k-r}\phi(\tau) \frac{{\rm d}\tau}{2\pi {\rm i} \tau} = 0 , \qquad
0 \leq k \leq n-1.
\end{gather*}
Since $\mathcal{Y}_{11}$ satisfies the orthogonality relations \eqref{T+H OP} we necessarily have \begin{gather*}
\mathcal{Y}_{11}(z;n) = \mathcal{P}_{n}(z).
\end{gather*}
In a similar fashion one can show that
\begin{gather*}
\mathcal{Y}_{22}(z;n)=\frac{1}{2\pi {\rm i}} \int_{\T} \frac{\xi^{-1+s} \tilde{w}(\xi) \mathcal{Y}_{21}(\xi;n) + \xi^{-1+r}\tilde{\phi}(\xi)\widetilde{\mathcal{Y}}_{21}(\xi;n)}{\xi-z}
{\rm d}\xi.
\end{gather*}
The asymptotic condition, $\mathcal{Y}_{22}(z;n) =z^{-n} + O\big(z^{-n-1}\big)$, would then yield the
orthogonality relations,
\begin{gather*}%\label{ort1000}
\int_{\T} \mathcal{Y}_{21}(\xi;n)\xi^{k+s} \tilde{w}(\xi) \frac{{\rm d}\xi}{2\pi {\rm i}\xi} + \int_{\T} \mathcal{Y}_{21}(\tau;n)\tau^{-k-r}\phi(\tau) \frac{{\rm d}\tau}{2\pi {\rm i} \tau} = -\delta_{k,n-1},\qquad
0 \leq k \leq n-1.
\end{gather*}
These relations in turn are equivalent to the following linear system for the coefficients $\be_j$,
\begin{gather*}
(T_{n}[\phi;r]+H_{n}[w;s]) \begin{pmatrix}
\be_0 \\
\be_1 \\
\vdots \\
\be_{n-1}
\end{pmatrix} = \begin{pmatrix}
0 \\
0 \\
\vdots \\
-1
\end{pmatrix}\!.
\end{gather*}
From this it follows that $\be_{n-1}$ is necessarily not zero.
Indeed, using Cramer's rule, we arrive at~the formula,
\begin{gather*}
\be_{n-1} = -\frac{D_{n-1}}{D_n} \neq 0.
\end{gather*}
This would also mean that $\be_{n-1} = -\frac{1}{h_{n-1}}$ and
\begin{gather}\label{0001}
\mathcal{Y}_{21}(z;n) =- \frac{1}{h_{n-1}} \mathcal{P}_{n-1}(z).
\end{gather}
This finishes the argument why \eqref{OP Rep of Solution} uniquely satisfies both \textbf{RH-$\mathcal{Y}$2} and \textbf{RH-$\mathcal{Y}$3} as well. Also, equation (\ref{0001}) implies equation (\ref{T+H h_n00}).

To prove the second statement, now assume that the Riemann--Hilbert problem \textbf{RH-$\mathcal{Y}$1} through \textbf{RH-$\mathcal{Y}$3} has a unique solution. Because of similar considerations mentioned above, it~can be written as
\begin{gather}\label{OP Rep of Solution_00}
\mathcal{Y}(z;n)=\begin{pmatrix}
\mathcal{Q}_n(z) & \di \int_{\T} \frac{\xi^{s} \tilde{w}(\xi) \mathcal{Q}_n(\xi) + \xi^{r}\tilde{\phi}(\xi)\tilde{\mathcal{Q}}_n(\xi)}{\xi-z}\frac{{\rm d}\xi}{2\pi {\rm i} \xi}
\\[1ex]
\mathcal{R}_{n-1}(z) & \di \int_{\T} \frac{\xi^{s} \tilde{w}(\xi) \mathcal{R}_{n-1}(\xi) + \xi^{r}\tilde{\phi}(\xi)\tilde{\mathcal{R}}_{n-1}(\xi)}{\xi-z}\frac{{\rm d}\xi}{2\pi {\rm i} \xi}
\end{pmatrix}\!,
\end{gather}
where $\mathcal{Q}_n(z)$ and $\mathcal{R}_{m}(z)$ are polynomials,
\begin{gather*}
\mathcal{Q}_n(z) = z^n + \sum_{k=0}^{n-1} \al_kz^{k}, \qquad
\mathcal{R}_{n-1}(z) = \sum_{k=0}^{n-1} \be_kz^{k},
\end{gather*}
satisfying the orthogonality conditions,
\begin{gather}\label{ort1}
\int_{\T} \mathcal{Q}_n(z)z^{-k-r}\phi(z) \frac{{\rm d}z}{2\pi {\rm i} z} +\int_{\T} \mathcal{Q}_n(z)z^{k+s} \tilde{w}(z) \frac{{\rm d}z}{2\pi {\rm i}z} = 0, \qquad
k=0,1,\dots,n-1,
\end{gather}
and
\begin{gather}\label{ort2}
\int_{\T} \mathcal{R}_{n-1}(z)z^{-k-r}\phi(z) \frac{{\rm d}z}{2\pi {\rm i} z} +\int_{\T} \mathcal{R}_{n-1}(z)z^{k+s} \tilde{w}(z) \frac{{\rm d}z}{2\pi {\rm i}z}
= -\de_{k,n-1},
\\
k=0,1,\dots,n-1.\nonumber
\end{gather}
Note that the above equations can be written as the following linear systems on the coeffi\-cients~$\al_j$ and~$\be_j$:
\begin{gather}\label{linsys22}
(T_{n}[\phi;r]+H_{n}[w;s]) \begin{pmatrix}
\al_0 \\
\al_1 \\
\vdots \\
\al_{n-1}
\end{pmatrix} = \begin{pmatrix}
-\phi_{-n+r} -w_{n+s} \\
-\phi_{1-n+r} -w_{1+n+s} \\
\vdots \\
-\phi_{-1+r} -w_{2n-1+s}
\end{pmatrix}\!,
\end{gather}
and
\begin{gather}\label{linsys12}
(T_{n}[\phi;r]+H_{n}[w;s]) \begin{pmatrix}
\be_0 \\
\be_1 \\
\vdots \\
\be_{n-1}
\end{pmatrix} = \begin{pmatrix}
0 \\
0\\
\vdots \\
-1
\end{pmatrix}\!.
\end{gather}
Note that since we have assumed that the solution of $\mathcal{Y}$-RHP exists, both linear systems \eqref{linsys22} and \eqref{linsys12} must have solutions; and since the solution of $\mathcal{Y}$-RHP is unique, it follows that the determinant of the systems (\ref{linsys22}) and \eqref{linsys12}, which is nothing but $D_n$, is nonzero. Thus, we~certainly have
that
\begin{gather*}
\mathcal{Q}_n(z) = \mathcal{P}_n(z).
\end{gather*}
It is now obvious that $\mbox{rank}\, (T_{n-1}[\phi;r] + H_{n-1}[w;s]) \geq n-2$. Because if $D_{n-1} = 0$ and $D_n \neq 0$, one can easily show that $\mbox{rank}\, (T_{n-1}[\phi;r] + H_{n-1}[w;s]) = n-2$. Although the proof of the second statement is complete, it is worthwhile to say more about $\mathcal{Y}_{21}$ when $D_{n-1}=0$, because then the objects $\mathcal{P}_{n-1}(z)$ and $h_{n-1}$ do not exist. Indeed the homogenous system,
\begin{gather}\label{homsys}
(T_{n-1}[\phi;r]+H_{n-1}[w;s]) \begin{pmatrix}
c_0 \\
c_1 \\
\vdots \\
c_{n-2}
\end{pmatrix} = 0,
\end{gather}
has infinitely many nontrivial solutions. Let us take one of them
and define the following polynomial of degree $n-2$
\begin{gather*}
\hat{\mathcal{R}}_{n-2}(z) = \sum_{j=0}^{n-2} c_jz^{j},
\end{gather*}
which automatically satisfies
\begin{gather*}%\label{ort2 n-2}
\int_{\T} \hat{\mathcal{R}}_{n-2}(z)z^{-k-r}\phi(z) \frac{{\rm d}z}{2\pi {\rm i} z} +\int_{\T} \hat{\mathcal{R}}_{n-2}(z)z^{k+s} \tilde{w}(z) \frac{{\rm d}z}{2\pi {\rm i}z}
= 0, \qquad
k=0,1,\dots,n-2.
\end{gather*}
Put
\begin{gather*}
\delta_{n-1}:= \int_{\T} \hat{\mathcal{R}}_{n-2}(z)z^{-n+1-r}\phi(z) \frac{{\rm d}z}{2\pi {\rm i} z} +\int_{\T}\hat{ \mathcal{R}}_{n-2}(z)z^{n-1+s} \tilde{w}(z) \frac{{\rm d}z}{2\pi {\rm i}z}.
\end{gather*}
Note that $\delta_{n-1} \neq 0$. Otherwise, we could redefine $\mathcal{R}_{n-1}(z)$ as
\begin{gather*}
\mathcal{R}_{n-1}(z) \to \mathcal{R}_{n-1}(z) + \hat{\mathcal{R}}_{n-2}(z),
\end{gather*}
which, on the contrary, means that the solution of the $\mathcal{Y}$-RHP is not unique. Hence, we can find a \textit{unique} normalization of the nontrivial solution of \eqref{homsys}, to assure that
\begin{gather*}
\delta_{n-1 } = -1.
\end{gather*}
This, together with the uniqueness of the solution of the $\mathcal{Y}$-RHP implies that $\mathcal{R}_{n-1}(z)= \hat{\mathcal{R}}_{n-2}(z)$.

To prove the third statement, we notice that, due to the additional condition at $z=\infty$, the unique solution of the Riemann--Hilbert problem \textbf{RH-$\mathcal{Y}$1} through \textbf{RH-$\mathcal{Y}$3} can be written in the same form (\ref{OP Rep of Solution_00})
with polynomial $\mathcal{R}_{n-1}(z)$ replaced by
\begin{gather*}
\mathcal{R}_{n-1}(z) = q_n\mathcal{Q}_{n-1}(z), \qquad q
\neq 0,
\end{gather*}
where both $\mathcal{Q}_n(z)$ and $\mathcal{Q}_{n-1}(z)$ are monic polynomials,
\begin{gather*}
\mathcal{Q}_n(z) = z^n + \sum_{k=0}^{n-1} \al_kz^{k}, \qquad
\mathcal{Q}_{n-1}(z) = z^{n-1} + \sum_{k=0}^{n-2} \be_kz^{k},
\end{gather*}
satisfying the orthogonality conditions,
\begin{gather*}%\label{ort12}
\int_{\T} \mathcal{Q}_n(z)z^{-k-r}\phi(z) \frac{{\rm d}z}{2\pi {\rm i} z}
+\int_{\T} \mathcal{Q}_n(z)z^{k+s} \tilde{w}(z) \frac{{\rm d}z}{2\pi {\rm i}z} = 0, \qquad
k=0,1,\dots,n-1,
\end{gather*}
and
\begin{gather}\label{ort22}
q_n\Biggl(\int_{\T} \mathcal{Q}_{n-1}(z)z^{-k-r}\phi(z) \frac{dz}{2\pi {\rm i} z} +\int_{\T} \mathcal{Q}_{n-1}(z)z^{k+s} \tilde{w}(z) \frac{dz}{2\pi {\rm i}z}\Biggr)
= -\de_{k,n-1},
\\
k=0,1,\dots,n-1.\nonumber
\end{gather}
As before with equations (\ref{ort1}) and (\ref{ort2}), the above equations can be written as the following linear systems on the coefficients $\al_j$ and $\be_j$:
\begin{gather*}%\label{linsys2}
(T_{n}[\phi;r]+H_{n}[w;s]) \begin{pmatrix}
\al_0 \\
\al_1 \\
\vdots \\
\al_{n-1}
\end{pmatrix} = \begin{pmatrix}
-\phi_{-n+r} -w_{n+s} \\
-\phi_{1-n+r} -w_{1+n+s} \\
\vdots \\
-\phi_{-1+r} -w_{2n-1+s}
\end{pmatrix}\!,
\end{gather*}
and
\begin{gather*}%\label{linsys1}
(T_{n-1}[\phi;r]+H_{n-1}[w;s]) \begin{pmatrix}
\be_0 \\
\be_1 \\
\vdots \\
\be_{n-2}
\end{pmatrix} = \begin{pmatrix}
-\phi_{-n+1+r} -w_{n-1+s} \\
-\phi_{1-n+1+r} -w_{1+n-1+s} \\
\vdots \\
-\phi_{-1+r} -w_{2n-3+s}
\end{pmatrix}\!.
\end{gather*}

We now argue, that both $D_n$ and $D_{n-1}$ are nonzero and thus $\mathcal{Q}_n=\mathcal{P}_{n}$ and $\mathcal{Q}_{n-1}=\mathcal{P}_{n-1}$ are given by \eqref{T+H OP Det rep}. Otherwise, we shall have more than one solution for one or both of the
above linear systems, which means that we could find distinct polynomials $\mathcal{Q}_n(z)$ (if $D_n=0$), or~distinct polynomials $\mathcal{Q}_{n-1}(z)$ (if $D_{n-1}=0$), or both (if $D_n,D_{n-1}=0$). Either way, using these distinct polynomials, we could construct distinct solutions of the $\mathcal{Y}$-RHP which contradicts our assumption. This finishes the proof of the theorem. Note that from \eqref{ort22} we necessarily have $q_n \neq 0$, and since $\mathcal{Q}_{n-1}=\mathcal{P}_{n-1}$ we conclude that $h_{n-1} = -1/q_n \neq 0,$ which could also be~seen from $h_{n-1}=D_n/D_{n-1}$.
\end{proof}

\begin{Corollary}\label{Corollary 2.1.1}
Suppose that the $\mathcal{Y}$-RH problem has a unique solution for $n$ and $n-1$.
Then
\begin{gather*}
D_n \neq 0, \qquad D_{n-1} \neq 0, \qquad
\mbox{and} \qquad
h_{n-1} \neq 0,
\end{gather*}
where $h_{n-1}$ can be reconstructed form the RHP data as
\begin{gather}\label{T+H h_n}
h_{n-1} = - \lim_{z \to \infty} z^{n-1}/\mathcal{Y}_{21}(z;n).
\end{gather}
\end{Corollary}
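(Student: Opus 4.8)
The plan is to reduce the Corollary entirely to the three statements of Theorem~\ref{THM RHP formulation circle}, which already carry all of the analytic content; the only genuinely new ingredient is that one applies the second statement at two consecutive levels and then reads off the reconstruction formula from the first.

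First I would invoke statement~2 of Theorem~\ref{THM RHP formulation circle} directly: since the $\mathcal{Y}$-RHP is assumed to have a unique solution for $n$, we conclude at once that $D_n \neq 0$. The key step is then to apply the \emph{same} statement~2 with $n$ replaced by $n-1$. The hypothesis of the Corollary supplies unique solvability of the $\mathcal{Y}$-RHP at level $n-1$, that is, of the problem \textbf{RH-$\mathcal{Y}$1}--\textbf{RH-$\mathcal{Y}$3} in which the normalization $z^{n\sigma_3}$ of \eqref{Yinfty} is replaced by $z^{(n-1)\sigma_3}$. The conclusion of statement~2 at this shifted level is precisely $D_{n-1} \neq 0$. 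Hence both $D_n$ and $D_{n-1}$ are nonzero.

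With both determinants nonzero, $h_{n-1} = D_n/D_{n-1}$ is well-defined and nonzero by \eqref{T+H h_n and Det}, which gives the nonvanishing assertion. Moreover, having $D_n, D_{n-1} \neq 0$ places us in exactly the hypothesis of statement~1 of Theorem~\ref{THM RHP formulation circle}, whose conclusion \eqref{T+H h_n00} is verbatim the reconstruction formula \eqref{T+H h_n} claimed here, completing the chain.

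The only point needing care is the bookkeeping of the index shift: one must confirm that ``unique solution for $n$ and $n-1$'' refers to the two Riemann--Hilbert problems normalized by $z^{n\sigma_3}$ and $z^{(n-1)\sigma_3}$, so that statement~2 may legitimately be invoked at level $n-1$. As an alternative route that sidesteps re-indexing, I could instead verify the leading-coefficient condition $\lim_{z\to\infty} \mathcal{Y}_{21}(z;n) z^{-n+1} \neq 0$ required by statement~3. This limit equals the top coefficient $\be_{n-1}$ of $\mathcal{Y}_{21}(z;n)$, and the proof of statement~2 shows that $\be_{n-1}$ vanishes exactly when $D_{n-1}=0$; thus its nonvanishing is equivalent to $D_{n-1}\neq 0$, and once established it yields the reconstruction formula directly through statement~3. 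Either way the main obstacle is purely organizational rather than analytic, since every substantive ingredient is already supplied by Theorem~\ref{THM RHP formulation circle}.
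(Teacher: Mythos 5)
Your proposal is correct and is essentially the argument the paper intends: the Corollary is stated as a direct consequence of Theorem~\ref{THM RHP formulation circle}, obtained exactly as you do — statement~2 applied at levels $n$ and $n-1$ gives $D_n\neq 0$ and $D_{n-1}\neq 0$, hence $h_{n-1}=D_n/D_{n-1}\neq 0$ by \eqref{T+H h_n and Det}, and statement~1 then supplies the reconstruction formula \eqref{T+H h_n00}, which is \eqref{T+H h_n}. Your index-shift bookkeeping (the level-$(n-1)$ problem being \textbf{RH-$\mathcal{Y}$1}--\textbf{RH-$\mathcal{Y}$3} with $z^{(n-1)\sigma_3}$ normalization) is precisely the right reading of the hypothesis.
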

Equation (\ref{T+H h_n}), as in the pure Toeplitz or Hankel cases, in conjunction with \eqref{T+H h_n and Det} reduces
the asymptotic analysis of the Toeplitz+Hankel determinants to the asymptotic analysis of the Riemann--Hilbert problem for $\mathcal{Y}$.

% , in will finally allow us to compute the asymptotics of the Toeplitz+Hankel determinants for specific choices of $\phi$ and %$w$.

\subsection[The associated $2 x 4$ and $4 x 4$ Riemann--Hilbert problems]{The associated $\boldsymbol{2\times 4}$ and $\boldsymbol{4 \times 4}$ Riemann--Hilbert problems}\label{subsection 24 44 Hankel on T}

In the rest of this section we will develop a $4 \times 4$ analogue of the Deift/Zhou non-linear steepest descent method for the Toeplitz+Hankel determinants \eqref{T+H Determinant}. The $\mathcal{Y}$-RHP is a particular case of the matrix {\it Riemann--Hilbert problem with a shift}, or the matrix analytical boundary problem of the \textit{Carleman} type. Indeed, the matrix form of the equations (\ref{022})--(\ref{0023}) reads as follows
\begin{gather*}
\mathcal{Y}_{+}(z;n) = \mathcal{Y}_{-}(z;n)G_1(z) + \mathcal{Y}_{-}(\kappa(z);n)G_2(z),
\end{gather*}
where
\begin{gather*}
G_1(z) = \begin{pmatrix}
1 & z^{s-1}\tilde{w}(z) \\
0 & 1
\end{pmatrix}\!,\qquad
G_2(z) = \begin{pmatrix}
0 & z^{r-1}\tilde{\phi}(z) \\
0 & 0
\end{pmatrix}\!,
\end{gather*}
and the ``shift'' $\kappa$ is the mapping
\begin{gather*}
\kappa(z) = \frac{1}{z}.
\end{gather*}
The presence of the shift makes it impossible to directly apply the usual $2\times 2$ version of the Deift--Zhou nonlinear steepest
descent method to the $\mathcal{Y}$-RHP. However,
the mapping $\kappa$ satisfies the Carleman condition, $\kappa(\kappa(z)) = z$, and hence we can translate the
$2 \times 2$ $\mathcal{Y}$-RHP to the usual matrix form by doubling the relevant matrix sizes.
More precisely, we first propose the associated $2 \times 4$ and then the associated $4 \times 4$ Riemann--Hilbert problems. Although more complicated, the analysis of the proposed $4 \times 4$ Riemann--Hilbert problem follows in the same spirit as the lower dimensional RHPs until we get to the model Riemann--Hilbert problem for Toeplitz+Hankel determinants introduced in Section~\ref{model}.

Let us define the $2\times 4$ matrix $\overset{\circ}{\mathcal{X}}$ out of the columns of $\mathcal{Y}$ as follows
\begin{gather}\label{X naught to Y}
\overset{\circ}{\mathcal{X}}(z;n) :=
\bigl(\mathcal{Y}^{(1)}(z;n), \widetilde{\mathcal{Y}}^{(1)}(z;n), \mathcal{Y}^{(2)}(z;n), \widetilde{\mathcal{Y}}^{(2)}(z;n)\bigr).
\end{gather}

\pagebreak

From (\ref{022}), (\ref{0023}) and (\ref{Yinfty}) we obtain the following Riemann--Hilbert problem for $\overset{\circ}{\mathcal{X}}$:
\begin{itemize}\itemsep=0pt
\item \textbf{RH-}$\overset{\circ}{\mathcal{X}}$\textbf{1} \ \ \ $\overset{\circ}{\mathcal{X}}$ is holomorphic in $\C \setminus \left(\T \cup\{0\}\right)$.
\item \textbf{RH-}$\overset{\circ}{\mathcal{X}}$\textbf{2} \quad For $z \in \T$, $\overset{\circ}{\mathcal{X}}$ satisfies
\begin{gather}\label{JXo}
\overset{\circ}{\mathcal{X}}_+(z;n)=\overset{\circ}{\mathcal{X}}_-(z;n)
\begin{pmatrix}
1 & 0 & z^{s-1}\tilde{w}(z) & -z^{-r+1}\phi(z) \\
0 & 1 & z^{r-1}\tilde{\phi}(z) & -z^{-s+1}w(z) \\
0 & 0 & 1 & 0 \\
0 & 0 & 0 & 1
\end{pmatrix}\!.
\end{gather}
\item \textbf{RH-}$\overset{\circ}{\mathcal{X}}$\textbf{3} \quad As $z \to \infty$ we have \begin{gather}
\overset{\circ}{\mathcal{X}}(z;n) = \begin{pmatrix}
1 + \mathcal{O}\big(z^{-1}\big) & C_1(n)+\mathcal{O}\big(z^{-1}\big) & \mathcal{O}\big(z^{-1}\big) & C_3(n) + \mathcal{O}\big(z^{-1}\big) \\
\mathcal{O}\big(z^{-1}\big) & C_2(n)+\mathcal{O}\big(z^{-1}\big) & 1 + \mathcal{O}\big(z^{-1}\big) & C_4(n) + \mathcal{O}\big(z^{-1}\big)
\end{pmatrix}\nonumber
\\
\label{n07}
\phantom{\overset{\circ}{\mathcal{X}}(z;n)=}{}\times\begin{pmatrix}
z^n & 0 & 0 & 0\\
0 & 1 & 0 & 0 \\
0 & 0 & z^{-n} & 0 \\
0 & 0 & 0 & 1
\end{pmatrix}\!.
\end{gather}

\item \textbf{RH-}$\overset{\circ}{\mathcal{X}}$\textbf{4} \quad As $z \to 0$ we have
\begin{gather}
\overset{\circ}{\mathcal{X}}(z;n) = \begin{pmatrix}
C_1(n) + \mathcal{O}(z) & 1 + \mathcal{O}(z) & C_3(n)+\mathcal{O}(z) & \mathcal{O}(z) \\
C_2(n) + \mathcal{O}(z) & \mathcal{O}(z) & C_4(n)+\mathcal{O}(z) & 1 + \mathcal{O}(z)
\end{pmatrix}\nonumber
\\
\label{n088}
\phantom{\overset{\circ}{\mathcal{X}}(z;n)=}{}\times\begin{pmatrix}
1 & 0 & 0 & 0\\
0 & z^{-n} & 0 & 0 \\
0 & 0 & 1 & 0 \\
0 & 0 & 0 & z^n
\end{pmatrix}\!,
\end{gather}
\end{itemize}
where
\begin{gather*}
C_1(n)= \mathcal{Y}_{11}(0;n), \qquad
C_3(n)= \mathcal{Y}_{12}(0;n), \qquad
C_2(n)= \mathcal{Y}_{21}(0;n), \qquad
C_4(n)= \mathcal{Y}_{22}(0;n).
\end{gather*}

\begin{Remark}\label{Remark r,s}
It is well-known that for construction of the global parametrix in the Riemann--Hilbert analysis, one has to construct the so-called Szeg{\H o} functions of the following type:
\begin{gather*}
\mathcal{S}_{f}(z):= \exp\Bigg[\frac{1}{2\pi {\rm i}} \int_{\T} \frac{\ln(f(\tau))}{\tau-z} {\rm d} \tau \Bigg],
\end{gather*}
where one assumes that $f$ is non-zero and continuous on the unit circle with zero winding number. The natural progression of the Riemann--Hilbert analysis with general offset values $r,s\in \Z$, leads us to constructing Szeg{\H o} functions for $f_1(z)=z^{1-r}\phi(z)$ and $f_2(z)=z^{1-s}w(z)$ (or, equivalently, for $f_1$ and $f_2/f_1$). Although there are ways to ``peel off'' the winding generating parts of $f_1$ and $f_2$ (see Section~\ref{section-general-r,s}). The first natural case to consider is when $f_1$ and $f_2$ have zero winding numbers. To that end, in this work we shall only consider the case $r=s=1$ and the symbols $\phi$ and $w$ which are of Szeg{\H o} type and analytic in a neighborhood of the unit circle. We~also discuss prospects of extension of our method to symbols with Fisher--Hartwig singularities in~Section~\ref{T+H Extension to FH}. Also it is worth mentioning that our method in the present work trivially extends to the case where we consider general, but fixed, $r, s \in \Z$ with symbols $\phi$ and $w$ respectively with winding numbers $r-1$ and $s-1$ (so that the overall winding number of $f_1$ and $f_2$ are zero, which is the main condition to be considered). It should be also mentioned that in the pure Toeplitz case there is another, and historically the first, way to
put the analysis of Toeplitz determinants in the framework of the Riemann--Hilbert scheme.
This approach was suggested by Jinho Baik, Percy Deift, and Kurt Johansson in~\cite{BDJ}.
The Riemann--Hilbert problem that appear there yields to the Szeg\H{o} function
for $f(z)$ {\it coinciding} with the original Toeplitz symbol~$\phi(z)$. Therefore, the Riemann--Hilbert
method of \cite{BDJ} is applied directly to the symbols with zero winding numbers. The crux of the matter is
that the construction of \cite{BDJ} can not be extended to the mixed, T+H, situation. The $\mathcal{Y}$-RHP
we are introducing in this paper seems to be the only way to put the mixed problem into
the Riemann--Hilbert setting; hence the inevitability of the choice $r=s=1$, or equivalently, starting with symbols that have nonzero winding numbers.
\end{Remark}

In view of this remark, in the rest of the paper we assume that $r =s=1$. In a natural way we now consider the following $4 \times 4$ Riemann--Hilbert problem which we will refer to as the $\mathcal{X}$-RHP.

\begin{itemize}\itemsep=0pt

\item \textbf{RH-$\mathcal{X}$1} \quad $\mathcal{X}$ is holomorphic in $\C \setminus \left(\T \cup\{0\}\right)$.

\item \textbf{RH-$\mathcal{X}$2} \quad For $z \in \T$, $\mathcal{X}$ satisfies
\begin{gather}\label{X-jump}
\mathcal{X}_+(z;n)=\mathcal{X}_-(z;n) \begin{pmatrix}
1 & 0 & \tilde{w}(z) & -\phi(z) \\
0 & 1 & \tilde{\phi}(z) & -w(z) \\
0 & 0 & 1 & 0 \\
0 & 0 & 0 & 1
\end{pmatrix}\!.
\end{gather}

\item \textbf{RH-$\mathcal{X}$3} \quad As $z \to \infty$ we have
\begin{gather}\label{Xinfinity}
\mathcal{X}(z;n)=\big(\di I+\mathcal{O}\big(z^{-1}\big)\big)\begin{pmatrix}
z^n & 0 & 0 & 0\\
0 & 1 & 0 & 0 \\
0 & 0 & z^{-n} & 0 \\
0 & 0 & 0 & 1
\end{pmatrix}\!.
\end{gather}

\item \textbf{RH-$\mathcal{X}$4} \quad As $z \to 0$ we have
\begin{gather}\label{Xzero}
\mathcal{X}(z;n)=P(n)(I+\mathcal{O}(z))\begin{pmatrix}
1 & 0 & 0 & 0\\
0 & z^{-n} & 0 & 0 \\
0 & 0 & 1 & 0 \\
0 & 0 & 0 & z^n
\end{pmatrix}\!.
\end{gather}
\end{itemize}

\begin{Remark}\label{Remark X-solution is unique} The uniqueness of the solution of $\mathcal{X}$-RHP is established using the standard
Liouville theorem-based arguments. We also note that the matrix factor $P(n)$ in (\ref{Xzero}) is not a~priori prescribed.
\end{Remark}

\begin{Remark}
It is easy to see that the solution $\mathcal{X}$ of the $\mathcal{X}$-RHP satisfies the symmetry
relation,
\begin{gather}\label{sym101}
WP^{-1}(n)\mathcal{X}\big(z^{-1};n\big)W = \mathcal{X}(z;n),
\end{gather}
where
\begin{gather}\label{W}
W = \begin{pmatrix}
0 & 1 & 0 & 0\\
1 & 0 & 0 & 0 \\
0 & 0 & 0 & 1 \\
0 & 0 & 1 & 0
\end{pmatrix}\!.
\end{gather}
Equation (\ref{sym101}) in turn yields the following symmetry equation for $P(n)$,
\begin{gather}\label{sym102}
P(n) = WP^{-1}(n)W,
\end{gather}
or (taking into account that $W^{-1} = W$),
\begin{gather*}%\label{sym1022}
\bigl(WP(n)\bigr)^2=\bigl(P(n)W\bigr)^2 = I_4.
\end{gather*}
\end{Remark}

\begin{Remark}\label{rank}
The matrix $P(n)W - I_4$ has rank 2. Here is the proof of this statement which is motivated by some of the referee's remarks.\footnote{These valuable remarks also led us to formulate and prove Lemmas \ref{Lemma Unique Solvability of C's} and \ref{Lemma unique solvability of C's}, and their analogues in Section~\ref{Section Hankel on I}.}

If we take the limit $z \to 1$, with $|z|>1$ in (\ref{sym101}) it follows that
\begin{gather*}
WP^{-1}(n) \mathcal{X}_+(1;n)W = \mathcal{X}_-(1;n),
\end{gather*}
or
\begin{gather*}
\mathcal{X}_+(1;n)W \mathcal{X}^{-1}_-(1;n) = P(n)W.
\end{gather*}
Using (\ref{X-jump}), the last equation reads,
\begin{gather*}
\mathcal{X}_-(1;n)G_{\mathcal{X}}(1)W \mathcal{X}^{-1}_-(1;n) = P(n)W,
\end{gather*}
where $G_{\mathcal{X}}(z)$ is the jump matrix in (\ref{X-jump}). Hence the matrices $P(n)W$ and
$G_{\mathcal{X}}(1)W$ are similar, and thus the claim is true if the matrix
$G_{\mathcal{X}}(1)W$ has rank $2$. We have,
\begin{gather*}
G_{\mathcal{X}}(1)W - I_4 =
\begin{pmatrix}
-1 & 1& -\phi & w\\
1 & -1 & -w & \phi\\
0 & 0 & -1 & 1 \\
0 & 0 & 1 & -1
\end{pmatrix}\!,
\end{gather*}
where $\phi \equiv \phi(1) = \tilde{\phi}(1)$, and $w\equiv w(1) = \tilde{w}(1)$. The row echelon form of
this matrix is
\begin{gather*}
\begin{pmatrix}
-1 & 1& -\phi & w\\
0 & 0 & -1& 1\\
0 & 0 & 0 & 0 \\
0 & 0 & 0 & 0
\end{pmatrix}\!,
\end{gather*}
and the statement about the rank of matrix $G_{\mathcal{X}}(1)W - I_4$ (and hence the rank of matrix $P(n)W - I_4$) follows.
\end{Remark}

\subsection[Relation of the $2 x 4$ and the $4 x 4$ Riemann--Hilbert problems]
{Relation of the $\boldsymbol{2\times4}$ and the $\boldsymbol{4\times4}$ Riemann--Hilbert problems}\label{Relation 24 44 Hankel T}

Put
\begin{gather}\label{24-to-44}
\mathfrak{R}(z;n) := \overset{\circ}{\mathcal{X}}(z;n) \mathcal{X}^{-1}(z;n).
\end{gather}
From (\ref{JXo}) and (\ref{X-jump}) it is clear that $\mathfrak{R}$ has no jumps. From (\ref{n088}) and (\ref{Xzero}) we can obtain the behavior of $\mathfrak{R}$ near zero:
\begin{gather*}
\mathfrak{R}(z;n) = \begin{pmatrix}
C_1(n) + \mathcal{O}(z) & 1 + \mathcal{O}(z) & C_3(n)+\mathcal{O}(z) & \mathcal{O}(z) \\
C_2(n) + \mathcal{O}(z) & \mathcal{O}(z) & C_4(n)+\mathcal{O}(z) & 1 + \mathcal{O}(z)
\end{pmatrix} P^{-1}(n).
\end{gather*}
Therefore $\mathfrak{R}$ is an entire function. Also note that from (\ref{n07}) and (\ref{Xinfinity}) we have
\begin{gather*}
\mathfrak{R}(z;n) = \begin{pmatrix}
1 + \mathcal{O}\big(z^{-1}\big) & C_1(n)+\mathcal{O}\big(z^{-1}\big) & \mathcal{O}\big(z^{-1}\big) & C_3(n) + \mathcal{O}\big(z^{-1}\big) \\
\mathcal{O}\big(z^{-1}\big) & C_2(n)+\mathcal{O}\big(z^{-1}\big) & 1 + \mathcal{O}\big(z^{-1}\big) & C_4(n) + \mathcal{O}\big(z^{-1}\big)
\end{pmatrix}\!, \qquad z\to \infty.
\end{gather*}
Therefore by Liouville's theorem we conclude that
\begin{gather}\label{mathfrakR}
\mathfrak{R}(z;n) = \begin{pmatrix}
1 & C_1(n) & 0 & C_3(n) \\
0 & C_2(n) & 1 & C_4(n)
\end{pmatrix}\!.
\end{gather}
And therefore we have
\begin{gather}\label{C's-to-P}
\begin{pmatrix}
1 & C_1(n) & 0 & C_3(n) \\
0 & C_2(n) & 1 & C_4(n)
\end{pmatrix} = \begin{pmatrix}
C_1(n) & 1 & C_3(n) & 0 \\
C_2(n) & 0 & C_4(n) & 1
\end{pmatrix} P^{-1}(n).
\end{gather}
We argue that this system, under certain generic assumptions, is a well-defined linear system on $C_j(n)$ which is uniquely solvable.
To see this, let us first double the system, that is consider instead of \eqref{C's-to-P}, the system
\begin{gather}\label{C's-to-P_2}
\begin{pmatrix}
1 & C_1(n) & 0 & C_3(n) \\
0 & C_2(n) & 1 & C_4(n)
\end{pmatrix} = \begin{pmatrix}
C'_1(n) & 1 & C'_3(n) & 0 \\
C'_2(n) & 0 & C'_4(n) & 1
\end{pmatrix} P^{-1}(n).
\end{gather}
This is an $8\times 8$ system of linear equations for eight unknowns -- $C_j(n)$ and $C'_j(n)$. By a~straightforward and,
in fact, rather simple calculations one finds that the determinant of this $8\times 8$ system~is
\begin{gather*}
\big(P_{22}(n)P_{44}(n) - P_{42}(n)P_{24}(n)\big)^2,
\end{gather*}
where $P_{jk}(n)$, $j, k = 1, ..., 4$ denote the entries of matrix $P(n)$.
Hence, assuming that
\begin{gather}\label{Csolvcond0}
P_{22}(n)P_{44}(n) - P_{42}(n)P_{24}(n) \neq 0,
\end{gather}
we would have the unique solvability of system (\ref{C's-to-P_2}). That is, there is only
one 8-vector,
\begin{gather}\label{vecC}
\vec{C}(n): =\big(C_1(n), C_2(n), C_{3}(n), C_4(n), C'_1(n), C'_2(n), C'_{3}(n), C'_4(n)\big),
\end{gather}
which solves (\ref{C's-to-P_2}). At the same time, the symmetry relation (\ref{sym102}) would imply that
\begin{gather*}
\begin{pmatrix}
1 & C_1(n) & 0 & C_3(n) \\
0 & C_2(n) & 1 & C_4(n)
\end{pmatrix}W = \begin{pmatrix}
C'_1(n) & 1 & C'_3(n) & 0 \\
C'_2(n) & 0 & C'_4(n) & 1
\end{pmatrix}W P(n),
\end{gather*}
or
\begin{gather*}
\begin{pmatrix}
1 & C'_1(n) & 0 & C'_3(n) \\
0 & C'_2(n) & 1 & C'_4(n)
\end{pmatrix} = \begin{pmatrix}
C_1(n) & 1 & C_3(n) & 0 \\
C_2(n) & 0 & C_4(n) & 1
\end{pmatrix} P^{-1}(n).
\end{gather*}
In other words, together with (\ref{vecC}), the system (\ref{C's-to-P_2}) will be also solved by the vector
\begin{gather*}%\label{tildevecC}
\vec{C}'(n): =\big(C'_1(n), C'_2(n), C'_{3}(n), C'_4(n), C_1(n), C_2(n), C_{3}(n), C_4(n)\big).
\end{gather*}
Because of the uniqueness, we conclude that the vectors $\vec{C}'(n)$ and $\vec{C}(n)$ must coincide and hence we must have,
\begin{gather*}
C_{j}(n) = C'_{j}(n), \qquad
j = 1, 2, 3, 4.
\end{gather*}
Therefore we have the unique solvability of the original system \eqref{C's-to-P} under the generic condition~(\ref{Csolvcond0}).

Indeed, we can make the solvability condition of the system \eqref{C's-to-P} more flexible. This is shown in the next lemma.
\begin{Lemma}\label{Lemma Unique Solvability of C's}
Assume that at least one of the following six inequalities is true,
\begin{gather}
P_{22}(n)P_{44}(n) - P_{42}(n)P_{24}(n) \neq 0,\label{Csolvcond01} \\
(1-P_{21}(n))P_{42}(n) + P_{22}(n)P_{41}(n) \neq 0,\label{gencond} \\
(1-P_{43}(n))P_{22}(n) + P_{23}(n)P_{42}(n) \neq 0,\label{gencond1}\\
(1-P_{21}(n))P_{44}(n) +P_{41}(n)P_{24}(n) \neq 0,\label{Csolvcond1}\\
(1-P_{21}(n))(P_{43}(n) - 1) +P_{41}(n)P_{23}(n) \neq 0,\label{Csolvcond2}\\
(1-P_{43}(n))P_{24}(n) + P_{23}(n)P_{44}(n) \neq 0.\label{gencond2}
\end{gather}
Then, the system \eqref{C's-to-P} is a well-defined linear system on $C_j(n)$ which is uniquely solvable.
\end{Lemma}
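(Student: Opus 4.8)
The plan is to write the matrix identity \eqref{C's-to-P} out entry by entry and to exploit the block structure that the right-hand side inherits from its two special rows $(C_1,1,C_3,0)$ and $(C_2,0,C_4,1)$. Setting $Q := P^{-1}(n)$ with entries $Q_{jk}$, the first row of \eqref{C's-to-P} produces four scalar equations involving only $C_1$ and $C_3$, while the second row produces four equations involving only $C_2$ and $C_4$. Thus the eight equations decouple into two overdetermined $4\times 2$ systems, and a short computation shows that \emph{both} of them carry the \emph{same} coefficient matrix
\[
M = \begin{pmatrix} Q_{11} & Q_{31} \\ Q_{12}-1 & Q_{32} \\ Q_{13} & Q_{33} \\ Q_{14} & Q_{34}-1 \end{pmatrix}.
\]
Since the genuine values $C_j(n)$ coming from \eqref{mathfrakR} already solve \eqref{C's-to-P}, each subsystem is consistent; hence unique solvability of \eqref{C's-to-P} is equivalent to $M$ having full column rank $2$.

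The second step is to rewrite $M$ in terms of the entries of $P(n)$ rather than of $Q=P^{-1}(n)$, so as to connect with the six inequalities. Here I would invoke the symmetry \eqref{sym102}: since $W^{-1}=W$, the relation $P=WP^{-1}W$ is equivalent to $P^{-1}=WPW$, and because conjugation by the matrix $W$ of \eqref{W} interchanges the indices $1\leftrightarrow2$ and $3\leftrightarrow4$, one reads off $Q_{jk}=P_{\sigma(j)\sigma(k)}$ with $\sigma=(12)(34)$. Substituting $Q_{11}=P_{22}$, $Q_{12}=P_{21}$, $Q_{13}=P_{24}$, $Q_{14}=P_{23}$, $Q_{31}=P_{42}$, $Q_{32}=P_{41}$, $Q_{33}=P_{44}$, $Q_{34}=P_{43}$ turns $M$ into the explicit matrix with rows $(P_{22},P_{42})$, $(P_{21}-1,P_{41})$, $(P_{24},P_{44})$, $(P_{23},P_{43}-1)$.

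Finally, a $4\times 2$ matrix has rank $2$ precisely when at least one of its $\binom{4}{2}=6$ maximal $2\times 2$ minors is nonzero, so I would simply compute these six minors. The minor built from rows $1,3$ is $P_{22}P_{44}-P_{42}P_{24}$, reproducing \eqref{Csolvcond01}; the minor from rows $1,2$ is $(1-P_{21})P_{42}+P_{22}P_{41}$, i.e.\ \eqref{gencond}; and the remaining four minors (rows $1,4$; $2,3$; $2,4$; $3,4$) coincide, up to an overall sign, with \eqref{gencond1}, \eqref{Csolvcond1}, \eqref{Csolvcond2}, \eqref{gencond2} respectively. A nonzero value of any minor is of course unaffected by sign, so the assertion follows; this also explains why the doubled system \eqref{C's-to-P_2} sees only the single condition \eqref{Csolvcond01}, since its $8\times8$ determinant is exactly the square of the $(1,3)$ minor. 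The only real care needed is the bookkeeping of the index translation $Q_{jk}=P_{\sigma(j)\sigma(k)}$ and the sign matching in the last four minors; there is no analytic difficulty, the entire content being the rank-$2$ criterion applied to the single decoupled coefficient matrix $M$.
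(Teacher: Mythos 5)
Your algebra is correct as far as it goes, and it is essentially the paper's own algebra in transposed form: writing $Q=P^{-1}(n)=WP(n)W$ (from \eqref{sym102}) gives $Q_{jk}=P_{\sigma(j)\sigma(k)}$ with $\sigma=(12)(34)$, your $4\times2$ coefficient matrix $M$ is then exactly the transpose of the block formed by the second and fourth rows of $P(n)W-I_4$, and its six $2\times2$ minors are, up to sign, the six quantities \eqref{Csolvcond01}--\eqref{gencond2}. The paper phrases this by rewriting \eqref{C's-to-P} as $\mathfrak{R}(n)\bigl(P(n)W-I_4\bigr)=0$ and reading the six conditions as linear independence of those two rows, so the uniqueness half of your argument is sound and matches the paper.

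The gap is in the existence half. You obtain consistency of the two overdetermined subsystems by asserting that ``the genuine values $C_j(n)$ coming from \eqref{mathfrakR} already solve \eqref{C's-to-P}.'' Those genuine values exist only if the $\overset{\circ}{\mathcal{X}}$-RHP (equivalently the $\mathcal{Y}$-RHP) is solvable, and that is precisely what this lemma is later used to establish: in Lemma \ref{Lemma Unique reconstruction of Y from X} one assumes only that the $\mathcal{X}$-RHP has a solution, computes $P(n)$ from \eqref{T+H P from X}, and must then \emph{solve} \eqref{C's-to-P} in order to build $\overset{\circ}{\mathcal{X}}$ via \eqref{2 by 4 solution unique}. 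Your proof therefore presupposes the conclusion of that downstream reconstruction and would make it circular; and full column rank of $M$ cannot repair this, since an overdetermined rank-$2$ system need not be consistent. This is also why the paper states Lemma \ref{Lemma unique solvability of C's} (uniqueness \emph{if} a solution exists) separately: the added content of the present lemma is existence. The paper gets existence from Remark \ref{rank}, where $\operatorname{rank}\bigl(P(n)W-I_4\bigr)=2$ is proved using only the $\mathcal{X}$-RHP data (the jump at $z=1$ and the symmetry \eqref{sym101}). Granting that rank identity, independence of rows two and four of $P(n)W-I_4$ forces rows one and three --- whose transposes are exactly the negatives of the right-hand sides of your two subsystems --- to lie in their span, which is the missing consistency. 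With the appeal to \eqref{mathfrakR} replaced by this rank-$2$ argument, your proof becomes correct and essentially coincides with the paper's.
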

\begin{proof}
We first notice that system \eqref{C's-to-P} can be rewritten as
\begin{gather*}
\begin{pmatrix}
1 & C_1(n) & 0 & C_3(n) \\
0 & C_2(n) & 1 & C_4(n)
\end{pmatrix}P(n)W = \begin{pmatrix}
C_1(n) & 1 & C_3(n) & 0 \\
C_2(n) & 0 & C_4(n) & 1
\end{pmatrix}W =
\begin{pmatrix}
1 & C_1(n) & 0 & C_3(n) \\
0 & C_2(n) & 1 & C_4(n)
\end{pmatrix}\!,
\end{gather*}
or
\begin{gather}\label{C's-to-P-22}
\begin{pmatrix}
1 & C_1(n) & 0 & C_3(n) \\
0 & C_2(n) & 1 & C_4(n)
\end{pmatrix}(P(n)W - I_4) = 0.
\end{gather}
Notice that
\begin{gather*}
P(n)W - I_4 =
\begin{pmatrix}
P_{12}(n ) - 1 & P_{11}(n) & P_{14}(n)& P_{13}(n) \\
P_{22}(n ) & P_{21}(n) -1 & P_{24}(n)& P_{23}(n) \\
P_{32}(n ) & P_{31}(n) & P_{34}(n) - 1& P_{33}(n) \\
P_{42}(n ) & P_{41}(n) & P_{44}(n)& P_{43}(n)-1 \\
\end{pmatrix}\!.
\end{gather*}
Therefore, the conditions of this lemma mean that the second and the forth rows of $P(n)W - I_4$ are linearly independent. This, in conjunction with Remark \ref{rank},
implies that the other two rows are linear combinations of the second and the fourth rows. In other words, there exists a {\it unique} collection of four numbers, $C_1(n)$, $C_2(n)$, $C_3(n)$, $C_4(n)$
such that
\begin{gather}
\bigl(P_{12}(n ) - 1, P_{11}(n), P_{14}(n), P_{13}(n)\bigr) =
-C_1(n) \bigl(P_{22}(n ), P_{21}(n) -1,P_{24}(n), P_{23}(n)\bigr) \nonumber
\\ \label{PC1PC2}\qquad
{}-C_3(n) \bigl(P_{42}(n ), P_{41}(n), P_{44}(n), P_{43}(n) - 1\bigr),
\end{gather}
and
\begin{gather}
\bigl(P_{32}(n ), P_{31}(n), P_{34}(n) -1, P_{33}(n)\bigr) =
-C_2(n) \bigl(P_{22}(n ), P_{21}(n) -1,P_{24}(n), P_{23}(n)\bigr)\nonumber
\\ \label{PC3PC4}\qquad
{}-C_4(n) \bigl(P_{42}(n ), P_{41}(n), P_{44}(n), P_{43}(n) - 1\bigr).
\end{gather}
Vector equations (\ref{PC1PC2}), (\ref{PC3PC4}) are just the rows of matrix equation (\ref{C's-to-P-22}).
\end{proof}

Investigation of the possibility of linear independence of other rows of the matrix $P(n)W-I_4$, leads us to the following observation about the uniqueness of a solution to~\eqref{C's-to-P}, if one exists.
\begin{Lemma}\label{Lemma unique solvability of C's}
If the system \eqref{C's-to-P} has a solution, it has to be unique.
\end{Lemma}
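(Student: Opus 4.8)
The plan is to recast the system \eqref{C's-to-P} in its equivalent left-kernel form \eqref{C's-to-P-22} and then to exploit the rank count already established in Remark \ref{rank}. Writing $M(n) := P(n)W - I_4$, the system \eqref{C's-to-P-22} asserts precisely that the two row vectors
\[
v_1 := \bigl(1, C_1(n), 0, C_3(n)\bigr), \qquad v_2 := \bigl(0, C_2(n), 1, C_4(n)\bigr)
\]
both belong to the left null space $\mathcal{N} := \bigl\{ v \in \C^4 \colon vM(n) = 0 \bigr\}$. By Remark \ref{rank} the matrix $M(n) = P(n)W - I_4$ has rank $2$, so $\dim \mathcal{N} = 4 - 2 = 2$. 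This is the single nontrivial input; the remainder is elementary linear algebra.

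Next I would introduce the coordinate projection $\pi\colon \C^4 \to \C^2$ reading off the first and third entries, $\pi(x_1, x_2, x_3, x_4) = (x_1, x_3)$. The role of the special shape of $v_1$ and $v_2$ is exactly that their first and third entries form the identity pattern, so that $\pi(v_1) = (1,0)$ and $\pi(v_2) = (0,1)$. Hence, under the standing hypothesis that a solution of \eqref{C's-to-P} exists, the restriction $\pi|_{\mathcal{N}}$ is already surjective onto $\C^2$, since its image contains the spanning pair $(1,0)$, $(0,1)$. Because $\dim \mathcal{N} = 2 = \dim \C^2$, this surjection is an isomorphism; in particular $\pi|_{\mathcal{N}}$ is injective.

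Finally I would deduce uniqueness directly. Given any two solutions of \eqref{C's-to-P}, with associated rows $v_1, v_2$ and $v_1', v_2'$ all lying in $\mathcal{N}$, the relations $\pi(v_1) = (1,0) = \pi(v_1')$ and $\pi(v_2) = (0,1) = \pi(v_2')$ combined with the injectivity of $\pi|_{\mathcal{N}}$ force $v_1 = v_1'$ and $v_2 = v_2'$. Reading off the second and fourth coordinates then yields $C_j(n) = C_j'(n)$ for $j = 1,2,3,4$, which is the assertion of the lemma.

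I do not expect a genuine obstacle here: the only point that must be handled carefully is the bookkeeping that converts the rank-$2$ statement of Remark \ref{rank} into the dimension count $\dim \mathcal{N} = 2$, and the observation that the normalizing $1$'s and $0$'s in the first and third columns of $v_1, v_2$ are precisely what makes the coordinate projection an isomorphism on $\mathcal{N}$. Equivalently, one may avoid $\pi$ altogether: the difference of two solutions produces vectors of the form $(0, \ast, 0, \ast) \in \mathcal{N}$, and expanding such a vector in the basis $\{v_1, v_2\}$ of the two-dimensional space $\mathcal{N}$ and matching first and third coordinates forces both coefficients, and hence the difference vector, to vanish.
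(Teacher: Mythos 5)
Your proof is correct, and it takes a genuinely different---and more economical---route than the paper. The paper also starts from the left-kernel form \eqref{C's-to-P-22} and the rank-$2$ statement of Remark~\ref{rank}, but instead of a dimension count it argues by cases: since $P(n)W-I_4$ has rank $2$, at least one of the six pairs of rows $\mathcal{R}_i$, $\mathcal{R}_j$ must be linearly independent, and the paper treats each possibility separately, expressing the remaining rows in the chosen basis and manipulating the resulting coefficient relations explicitly (for instance, showing $C_1C_4-C_2C_3\neq 0$ when $\mathcal{R}_1$, $\mathcal{R}_3$ are independent, or $C_3\neq 0$ when $\mathcal{R}_1$, $\mathcal{R}_2$ are) to pin down the $C_j$ uniquely; the final case ($\mathcal{R}_2$, $\mathcal{R}_4$ independent) is delegated to Lemma~\ref{Lemma Unique Solvability of C's}, where the stronger assertion of unique solvability is proved. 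Your argument replaces all of this with a single dimension count: the left null space $\mathcal{N}$ of $P(n)W-I_4$ has dimension exactly $4-2=2$, and the normalization pattern of the two solution rows $(1,C_1,0,C_3)$ and $(0,C_2,1,C_4)$ makes the projection onto the first and third coordinates a surjection, hence an isomorphism, from $\mathcal{N}$ onto $\C^2$, after which uniqueness is immediate. What you lose relative to the paper is only the explicit expressions for the $C_j$ in terms of the linear-dependence coefficients arising in each case (which the paper does not reuse anyway---the asymptotics of $C_2(n)$, $C_4(n)$ in Section~\ref{Section T+H Asymp h_n} are extracted directly from the system); what you gain is a case-free, self-contained proof that does not lean on Lemma~\ref{Lemma Unique Solvability of C's}, and that isolates the precise role of the $1$'s and $0$'s in the two rows.
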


\begin{proof}
For simplicity of notation let us drop the dependence on $n$ in $C_j(n)$ and $P(n)$, and respectively write $C_j$ and $P$ instead. Denote the rows of the matrix $PW - I_4$ by $\mathcal{R}_j$, $1 \leq j \leq 4$, and thus the system of equations \eqref{C's-to-P-22} can be written as the following vector equations
\begin{gather}\label{our system 1}
\mathcal{R}_1+C_1\mathcal{R}_2+C_3\mathcal{R}_4=0,
\end{gather}
and
\begin{gather}\label{our system 2}
\mathcal{R}_3+C_2\mathcal{R}_2+C_4\mathcal{R}_4=0.
\end{gather}
Since $PW - I_4$ is of rank $2$, necessarily, at least one of the six pairs of vectors has to be linearly independent. We consider each possibility separately.

First, let us assume that $\mathcal{R}_1$ and $\mathcal{R}_3$ are linearly independent. Note that $C_1C_4-C_2C_3 \neq 0$, because otherwise, multiplying \eqref{our system 1} by $-C_2$ and \eqref{our system 2} by $C_1$ and adding the results yields that $ \mathcal{R}_1$ and $\mathcal{R}_3$ are linearly dependent to the contrary. There exist unique constants $a_1$, $a_2$, $b_1$, and~$b_2$ such that
\begin{gather}\label{13 lin ind 24}
\mathcal{R}_2=a_1\mathcal{R}_1+a_2\mathcal{R}_3, \qquad
\mbox{and} \qquad \mathcal{R}_4=b_1\mathcal{R}_1+b_2\mathcal{R}_3.
\end{gather}
Also \looseness=-1 observe that, $a_1 b_2 - a_2b_1=0$ contradicts our assumption that $\mathcal{R}_1$ and $\mathcal{R}_3$ are linearly independent. Indeed, multiplying the first member and second member of \eqref{13 lin ind 24} respectively by $b_2$ and $-a_2$ and adding the results yields that $\mathcal{R}_2$ and $\mathcal{R}_4$ are linearly dependent, and consequently the equations~\eqref{our system 1} and \eqref{our system 2} imply that $\mathcal{R}_1$ and $\mathcal{R}_3$ are linearly dependent, a~contradiction.

Solving the system \eqref{our system 1}--\eqref{our system 2} for $\mathcal{R}_2$ and $\mathcal{R}_4$ and comparing with \eqref{13 lin ind 24} yields
\begin{alignat*}{3}
& a_1=-\frac{C_4}{C_1C_4-C_2C_3},\qquad &&
a_2=\frac{C_3}{C_1C_4-C_2C_3},&
\\
& b_1=\frac{C_2}{C_1C_4-C_2C_3}, \qquad &&
b_2=-\frac{C_1}{C_1C_4-C_2C_3}. &
\end{alignat*}
Using these equalities we find $a_1b_2-a_2b_1 = 1/(C_1C_4-C_2C_3)$, and since $a_1 b_2 - a_2b_1 \neq 0$, $C_1C_4-C_2C_3$ is \textit{uniquely} expressed in terms of $a_1 b_2 - a_2b_1$ and hence we can write the unique solution as
\begin{alignat*}{3}
& C_1= -\frac{b_2}{a_1 b_2 - a_2b_1}, \qquad&&
C_2 = \frac{b_1}{a_1 b_2 - a_2b_1},& \\
& C_3=\frac{a_2}{a_1 b_2 - a_2b_1}, \qquad&&
C_4= -\frac{a_1}{a_1 b_2 - a_2b_1}.&
\end{alignat*}

Next, assume that $\mathcal{R}_1$ and $\mathcal{R}_2$ are linearly independent and thus from \eqref{our system 1} we necessarily have
\begin{gather*}
C_3 \neq 0.
\end{gather*}
 So there exist unique constants $a_1$, $a_2$, $b_1$, and $b_2$ such that
\begin{gather}\label{12 lin ind 34}
\mathcal{R}_3=a_1\mathcal{R}_1+a_2\mathcal{R}_2, \qquad
\mbox{and} \qquad
\mathcal{R}_4=b_1\mathcal{R}_1+b_2\mathcal{R}_2.
\end{gather}
Note that $b_1=0$ contradicts our assumption that $\mathcal{R}_1$ and $\mathcal{R}_2$ are linearly independent, because then $\mathcal{R}_4=b_2 \mathcal{R}_2$ and substitution into \eqref{our system 1} yields $\mathcal{R}_1=-(C_1+b_2C_3)\mathcal{R}_2$, contrary to our assumption. Solving the system \eqref{our system 1}--\eqref{our system 2} for $\mathcal{R}_3$ and $\mathcal{R}_4$ and comparing with \eqref{12 lin ind 34} yields the equalities:
\begin{alignat*}{3}
& a_1=\frac{C_4}{C_3} , \qquad&&
a_2=\frac{C_4C_1-C_2C_3}{C_3}, &\\
& b_1=-\frac{1}{C_3} , \qquad&&
b_2=-\frac{C_1}{C_3} .&
\end{alignat*}
These relationships yield the unique solution for the system \eqref{our system 1}--\eqref{our system 2}, because $C_3$ is uniquely determined from $b_1$ (since $b_1 \neq 0$), then $C_1$ is uniquely determined from $C_3$ and $b_2$, and simultaneously $C_4$ is uniquely determined from $C_3$ and $a_1$, and finally $C_2$ is uniquely determined from $C_3$, $C_1$, $C_4$ and $a_2$. Thus, we can write the unique solution as
\begin{alignat*}{3}
& C_1= \frac{b_2}{b_1}, \qquad &&
C_2 = \frac{a_1b_2-b_1a_2}{b_1}, &\\
& C_3=-\frac{1}{b_1}, \qquad &&
C_4= -\frac{a_1}{b_1}.&
\end{alignat*}
Among the four remaining cases,the argument for each of the following three:

\begin{itemize}\itemsep=0pt
\item $\mathcal{R}_1$ and $\mathcal{R}_4$ are linearly independent,
\item $\mathcal{R}_2$ and $\mathcal{R}_3$ are linearly independent, or
\item $\mathcal{R}_3$ and $\mathcal{R}_4$ are linearly independent,
\end{itemize}
is similar to the one presented above for linear independence of $\mathcal{R}_1$ and $\mathcal{R}_2$ and thus we do not provide the details here. Finally, for the case where $\mathcal{R}_2$ and $\mathcal{R}_4$ are linearly independent, we refer to Lemma \ref{Lemma Unique Solvability of C's} where we prove the stronger assertion that the system \eqref{C's-to-P} is uniquely solvable.
\end{proof}

\begin{Lemma}\label{Lemma Unique reconstruction of Y from X}
Suppose that the solution of the $\mathcal{X}$-RHP exists. Then, if at least one of the con\-di\-tions \eqref{Csolvcond01} through \eqref{gencond2} holds, one can uniquely reconstruct the solution of the $\mathcal{Y}$-RHP.
\end{Lemma}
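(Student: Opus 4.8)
The plan is to reverse the passage that produced the $\mathcal{X}$-RHP from the $\mathcal{Y}$-RHP, using the unique solvability of \eqref{C's-to-P} as the single device that both fixes the free constants and restores the column symmetry of \eqref{X naught to Y}. Since the solution $\mathcal{X}$ is assumed to exist, I first read off the matrix $P(n)$ from the behaviour \eqref{Xzero} as $z\to0$. By Lemma \ref{Lemma Unique Solvability of C's}, the hypothesis that at least one of \eqref{Csolvcond01}--\eqref{gencond2} holds guarantees a unique solution $(C_1(n),C_2(n),C_3(n),C_4(n))$ of \eqref{C's-to-P}. With these I define the candidate
\begin{gather*}
\overset{\circ}{\mathcal{X}}(z;n):=\mathfrak{R}(n)\mathcal{X}(z;n),\qquad
\mathfrak{R}(n)=\begin{pmatrix}
1 & C_1(n) & 0 & C_3(n)\\
0 & C_2(n) & 1 & C_4(n)
\end{pmatrix}\!,
\end{gather*}
and show that it satisfies the $\overset{\circ}{\mathcal{X}}$-RHP \eqref{JXo}--\eqref{n088}.

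Next I would verify the analytic requirements on $\overset{\circ}{\mathcal{X}}$. The jump is immediate: as $\mathfrak{R}(n)$ is constant in $z$, one has $\overset{\circ}{\mathcal{X}}_+=\mathfrak{R}(n)\mathcal{X}_-G_{\mathcal{X}}=\overset{\circ}{\mathcal{X}}_-G_{\mathcal{X}}$, and at $r=s=1$ the jump of \eqref{X-jump} is exactly that of \eqref{JXo}. Inserting \eqref{Xinfinity} gives $\mathfrak{R}(n)\big(I+\mathcal{O}(z^{-1})\big)=\mathfrak{R}(n)+\mathcal{O}(z^{-1})$, reproducing RH-$\overset{\circ}{\mathcal{X}}$3 with precisely these $C_j(n)$. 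The behaviour at the origin is the decisive point: from \eqref{Xzero},
\begin{gather*}
\overset{\circ}{\mathcal{X}}(z;n)=\mathfrak{R}(n)P(n)(I+\mathcal{O}(z))\begin{pmatrix}1&0&0&0\\0&z^{-n}&0&0\\0&0&1&0\\0&0&0&z^{n}\end{pmatrix}\!,
\end{gather*}
so RH-$\overset{\circ}{\mathcal{X}}$4 holds precisely when $\mathfrak{R}(n)P(n)=\mathfrak{R}(n)W$; multiplying on the right by $W$ and using $W^2=I_4$ shows this is identical to the rewritten system \eqref{C's-to-P-22}, hence it is automatic.

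The heart of the argument is that $\overset{\circ}{\mathcal{X}}$ genuinely has the column structure \eqref{X naught to Y}, which is what allows $\mathcal{Y}$ to be extracted. I obtain this from the symmetry \eqref{sym101}, written as $\mathcal{X}(z^{-1};n)=WP^{-1}(n)\mathcal{X}(z;n)W$, so that
\begin{gather*}
\overset{\circ}{\mathcal{X}}(z^{-1};n)=\mathfrak{R}(n)\mathcal{X}(z^{-1};n)=\mathfrak{R}(n)WP^{-1}(n)\mathcal{X}(z;n)W.
\end{gather*}
This equals $\overset{\circ}{\mathcal{X}}(z;n)W$ if and only if $\mathfrak{R}(n)W=\mathfrak{R}(n)P(n)$, i.e.\ again \eqref{C's-to-P-22}. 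Thus the very condition that normalises $\overset{\circ}{\mathcal{X}}$ at the origin also forces $\overset{\circ}{\mathcal{X}}(z^{-1};n)=\overset{\circ}{\mathcal{X}}(z;n)W$, which states that the second and fourth columns of $\overset{\circ}{\mathcal{X}}$ are the $z\mapsto z^{-1}$ images of the first and third. Setting $\mathcal{Y}:=\big(\overset{\circ}{\mathcal{X}}^{(1)},\overset{\circ}{\mathcal{X}}^{(3)}\big)$ therefore recovers \eqref{X naught to Y} identically.

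Finally I would confirm that this $\mathcal{Y}$ solves the $\mathcal{Y}$-RHP by reversing the derivation of the $\overset{\circ}{\mathcal{X}}$-RHP. Columns $1$ and $3$ carry no negative power of $z$ in \eqref{n088}, so they are holomorphic at the origin and, being holomorphic off $\T$, yield RH-$\mathcal{Y}$1. Reading the jump \eqref{JXo} column by column reproduces \eqref{022} from the first column and, from the third, the shifted relation \eqref{0023}: its shift term comes from the second column, which by the symmetry just established equals $\mathcal{Y}^{(1)}(z^{-1};n)$, and since \eqref{022} makes $\mathcal{Y}^{(1)}$ continuous across $\T$ no $\pm$ ambiguity arises in this substitution. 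Restricting \eqref{n07} to columns $1$ and $3$ is exactly RH-$\mathcal{Y}$3. Uniqueness is twofold: the $C_j(n)$ are unique by Lemma \ref{Lemma Unique Solvability of C's}, so $\overset{\circ}{\mathcal{X}}$ and hence $\mathcal{Y}$ are determined without ambiguity, while independently the $\mathcal{Y}$-RHP has at most one solution. I expect the main care to lie in this last step—the column-by-column passage of the $4\times4$ jump to the shifted $2\times2$ jump \eqref{0023} and the tracking of the inside/outside boundary values under $z\mapsto z^{-1}$—even though the continuity of $\mathcal{Y}^{(1)}$ ultimately removes the apparent obstruction.
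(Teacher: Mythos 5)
Your construction is the paper's own route: read $P(n)$ off \eqref{T+H P from X}, fix the constants $C_j(n)$ by Lemma \ref{Lemma Unique Solvability of C's}, set $\overset{\circ}{\mathcal{X}}=\mathfrak{R}(n)\mathcal{X}$, and use \eqref{C's-to-P-22} --- equivalently $\mathfrak{R}(n)P(n)W=\mathfrak{R}(n)$ --- to obtain both the required behaviour at the origin and the symmetry $\overset{\circ}{\mathcal{X}}\big(z^{-1};n\big)=\overset{\circ}{\mathcal{X}}(z;n)W$, which restores the column structure \eqref{X naught to Y} and lets you define $\mathcal{Y}=\big(\overset{\circ}{\mathcal{X}}^{(1)},\overset{\circ}{\mathcal{X}}^{(3)}\big)$. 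All of this, including the handling of boundary values in the shifted jump \eqref{0023}, is correct and essentially identical to the paper's argument.

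The gap is your very last assertion, that ``independently the $\mathcal{Y}$-RHP has at most one solution.'' This is not an independent fact, and you give no argument for it; nor can there be a generic one, because the $\mathcal{Y}$-RHP is a problem with a shift, so the standard Liouville uniqueness argument (form the quotient of two solutions) fails: the jump \eqref{0023} involves $\mathcal{Y}^{(1)}_-\big(z^{-1};n\big)$ and does not cancel in a ratio. Concretely, by the proof of Theorem \ref{THM RHP formulation circle}, solutions of the $\mathcal{Y}$-RHP correspond to solutions of the linear systems \eqref{linsys22} and \eqref{linsys12}; if $D_n=0$ while those systems are consistent, the $\mathcal{Y}$-RHP has infinitely many solutions. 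Hence ``at most one solution'' is (given existence) equivalent to $D_n\neq 0$, which is precisely part of what this lemma must deliver downstream (Corollary \ref{corollary2.8.1} rests on Corollary \ref{Corollary 2.1.1}, whose hypothesis is unique solvability of the $\mathcal{Y}$-RHP); assuming it is circular. The correct closing step, which is how the paper argues and for which you already have every ingredient, stays in the $4\times4$ framework: if $\mathcal{Y}'$ is any solution of the $\mathcal{Y}$-RHP, then \eqref{X naught to Y} produces from it a solution $\overset{\circ}{\mathcal{X}}{}'$ of the $\overset{\circ}{\mathcal{X}}$-RHP; the Liouville argument of Section \ref{Relation 24 44 Hankel T} shows that $\overset{\circ}{\mathcal{X}}{}'\mathcal{X}^{-1}$ is a constant matrix of the form \eqref{mathfrakR} whose entries solve \eqref{C's-to-P}; and Lemma \ref{Lemma unique solvability of C's} (the uniqueness-of-solutions lemma, as opposed to the solvability Lemma \ref{Lemma Unique Solvability of C's}) forces this constant matrix to equal $\mathfrak{R}(n)$, whence $\overset{\circ}{\mathcal{X}}{}'=\overset{\circ}{\mathcal{X}}$ and $\mathcal{Y}'=\mathcal{Y}$. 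Replacing the unsupported claim with this argument completes the proof.
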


\begin{proof}
If the solution of the $\mathcal{X}$-RHP exists, then the expression for $P(n)$ can be found from \begin{gather}\label{T+H P from X}
P(n)= \mathcal{X}(z;n)\left.\begin{pmatrix}
1 & 0 & 0 & 0\\
0 & z^{n} & 0 & 0 \\
0 & 0 & 1 & 0 \\
0 & 0 & 0 & z^{-n}
\end{pmatrix}\right|_{z=0},
\end{gather}
and due to our assumption, the constants $C_j(n)$ can be uniquely found according to Lemma \ref{Lemma Unique Solvability of C's}. Then, according to \eqref{24-to-44} and \eqref{mathfrakR} we find the solution to the $\overset{\circ}{\mathcal{X}}$-RHP as
\begin{gather}\label{2 by 4 solution unique}
\overset{\circ}{\mathcal{X}}(z;n)= \begin{pmatrix}
1 & C_1(n) & 0 & C_3(n) \\
0 & C_2(n) & 1 & C_4(n)
\end{pmatrix}\mathcal{X}(z;n).
\end{gather}
Because, if it exists, the solution of the $\mathcal{X}$-RHP is unique (recall Remark~\ref{Remark X-solution is unique}), we note that~\eqref{2 by 4 solution unique} is the unique solution of the $\overset{\circ}{\mathcal{X}}$-RHP. Indeed,
suppose that $\overset{\circ}{\mathcal{X}}'$ is another solution of the same $\overset{\circ}{\mathcal{X}}$-RHP.
Put
\begin{gather*}
\mathfrak{R}'(z;n) := \overset{\circ}{\mathcal{X}'}(z;n) \mathcal{X}^{-1}(z;n).
\end{gather*}
Then, as with $\mathfrak{R}(z;n)$ before, we will arrive to the conclusion that
\begin{gather*}
\mathfrak{R}'(z;n) \equiv \begin{pmatrix}
1 & C'_1(n) & 0 & C'_3(n) \\
0 & C'_2(n) & 1 & C'_4(n)\end{pmatrix}\!,
\end{gather*}
where the constants $C'_j(n)$ satisfy the system \eqref{C's-to-P}. Since the solution of this system is unique according to Lemma \ref{Lemma unique solvability of C's}, we conclude that $ \mathfrak{R}'(z;n) = \mathfrak{R}(z;n)$ and, hence,
\begin{gather*}
\overset{\circ}{\mathcal{X}}'(z;n) = \overset{\circ}{\mathcal{X}}(z;n).
\end{gather*}
Now, observe that the symmetry relation (\ref{sym101}) implies
\begin{gather}\label{sym101_22}
\overset{\circ}{\mathcal{X}}\big(z^{-1};n\big) = \overset{\circ}{\mathcal{X}}(z;n)W.
\end{gather}
Indeed, we have
\begin{gather*}
\overset{\circ}{\mathcal{X}}\big(z^{-1} ;n\big) = \mathfrak{R}(n)\mathcal{X}(z^{-1};n) =
\mathfrak{R}(n)P(n)W\mathcal{X}(z;n)W.
\end{gather*}
But, by \eqref{C's-to-P-22}, $\mathfrak{R}(n)P(n)W=\mathfrak{R}(n)$, therefore
\begin{gather*}
\overset{\circ}{\mathcal{X}}\big(z^{-1} ;n\big) = \mathfrak{R}(n)\mathcal{X}(z;n)W
= \overset{\circ}{\mathcal{X}}(z;n)W.
\end{gather*}
Now, by $\overset{\circ}{\mathcal{X}^{(j)}}$, $j = 1, 2, 3, 4$ denote the columns of the $2\times4$ matrix $\overset{\circ}{\mathcal{X}}$.
Equation (\ref{sym101_22}) means that
\begin{gather*}
\overset{\circ}{\mathcal{X}}^{(2)}(z;n) = \overset{\circ}{\mathcal{X}}^{(1)}\big(z^{-1};n\big),\qquad
\mbox{and}\qquad
\overset{\circ}{\mathcal{X}}^{(4)}(z;n) = \overset{\circ}{\mathcal{X}}^{(3)}\big(z^{-1};n\big).
\end{gather*}
In other words, the matrix valued function $\overset{\circ}{\mathcal{X}}$ can be written in the form (\ref{X naught to Y}), i.e.,
\begin{gather*}%\label{X naught to Y1}
\overset{\circ}{\mathcal{X}}(z;n) =
\big(\mathcal{Y}^{(1)}(z;n), \widetilde{\mathcal{Y}}^{(1)}(z;n), \mathcal{Y}^{(2)}(z;n), \widetilde{\mathcal{Y}}^{(2)}(z;n)\big),
\end{gather*}
with
\begin{gather*}
\mathcal{Y}^{(1)}(z;n):= \overset{\circ}{\mathcal{X}}^{(1)}(z;n), \qquad
\mbox{and}\qquad
\mathcal{Y}^{(2)}(z;n):= \overset{\circ}{\mathcal{X}}^{(3)}(z;n).
\end{gather*}
Furthermore, the $2\times2$ matrix valued function
\begin{gather*}
\mathcal{Y}(z;n):= \begin{pmatrix}
\mathcal{Y}^{(1)}(z;n), \mathcal{Y}^{(2)}(z;n)\end{pmatrix}\!,
\end{gather*}
will be a solution of the $2\times2$ $\mathcal{Y}$-RHP. From the unique solvability of the $\overset{\circ}{\mathcal{X}}$-RHP, it follows that this solution of the $\mathcal{Y}$-RHP is unique. Because otherwise, if $\mathcal{Y}'$ is another solution of the $\mathcal{Y}$-RHP, via \eqref{X naught to Y} we can construct another solution of the $\overset{\circ}{\mathcal{X}}$-RHP on the contrary.
\end{proof}

The following corollary is the direct consequence of Corollary \ref{Corollary 2.1.1} and Lemma \ref{Lemma Unique reconstruction of Y from X}.
\begin{Corollary}\label{corollary2.8.1}
Suppose that the solution of the $\mathcal{X}$-RHP exists for $n$ and $n-1$, then if at least one of the conditions \eqref{Csolvcond01} through \eqref{gencond2} holds also for $n$ and $n-1$, then we have
\begin{gather*}
D_n \neq 0, \qquad
D_{n-1}\neq 0, \qquad
\mbox{and} \qquad
h_{n-1}\neq0.
\end{gather*}
Moreover,
\begin{gather}\label{T+H h_n000}
h_{n-1} = - \lim_{z \to \infty} z^{n-1}/\mathcal{Y}_{21}(z;n).
\end{gather}
\end{Corollary}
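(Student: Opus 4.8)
The plan is to obtain this corollary by chaining the two immediately preceding results, applied separately at the indices $n$ and $n-1$. Since the statement is advertised as a direct consequence, the task reduces to checking that the hypotheses of Corollary \ref{Corollary 2.1.1} are met.

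First I would apply Lemma \ref{Lemma Unique reconstruction of Y from X} at index $n$. By assumption the $\mathcal{X}$-RHP is solvable at $n$ and at least one of the conditions \eqref{Csolvcond01} through \eqref{gencond2} holds at $n$; the lemma then guarantees not merely existence but the unique solvability of the $\mathcal{Y}$-RHP at index $n$. I would then repeat the identical argument at index $n-1$, using the hypothesis that both the solvability of the $\mathcal{X}$-RHP and at least one of the conditions \eqref{Csolvcond01} through \eqref{gencond2} persist there.

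With the unique solvability of the $\mathcal{Y}$-RHP secured at both $n$ and $n-1$, I would invoke Corollary \ref{Corollary 2.1.1}, whose sole hypothesis is precisely that the $\mathcal{Y}$-RHP have a unique solution at $n$ and $n-1$. Its conclusion yields at once $D_n \neq 0$, $D_{n-1} \neq 0$, and $h_{n-1} \neq 0$, together with the reconstruction formula $h_{n-1} = -\lim_{z\to\infty} z^{n-1}/\mathcal{Y}_{21}(z;n)$, which is exactly \eqref{T+H h_n000}.

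The only point requiring care---and hence the main, if modest, obstacle---is to confirm that Lemma \ref{Lemma Unique reconstruction of Y from X} genuinely supplies \emph{uniqueness} of the reconstructed $\mathcal{Y}$, since Corollary \ref{Corollary 2.1.1} demands a unique $\mathcal{Y}$-RHP solution rather than mere existence. This is indeed the content of that lemma: its reconstruction passes through the unique solution of the $\mathcal{X}$-RHP (Remark \ref{Remark X-solution is unique}) and the relation \eqref{X naught to Y}, so that any second solution $\mathcal{Y}'$ of the $\mathcal{Y}$-RHP would produce, via \eqref{X naught to Y}, a second solution of the $\overset{\circ}{\mathcal{X}}$-RHP, contradicting its established unique solvability. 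Hence the chain of implications closes and the corollary follows.
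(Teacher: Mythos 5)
Your proposal is correct and matches the paper's own reasoning: the paper states this corollary without proof, introducing it as ``the direct consequence of Corollary \ref{Corollary 2.1.1} and Lemma \ref{Lemma Unique reconstruction of Y from X}'', and your argument simply spells out that chaining---apply the lemma at indices $n$ and $n-1$ to get unique solvability of the $\mathcal{Y}$-RHP at both, then invoke Corollary \ref{Corollary 2.1.1}. Your closing observation that the lemma yields genuine \emph{uniqueness} (needed as the hypothesis of Corollary \ref{Corollary 2.1.1}) is exactly the point established at the end of the lemma's proof via the unique solvability of the $\overset{\circ}{\mathcal{X}}$-RHP.
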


\subsection{The primary opening of the lenses}
Let us consider the contour $\Ga:=\Ga_i \cup \T \cup \Ga_o$ shown in Fig.~\ref{Figure Primary opening of the lenses}. Define the function $\mathcal{Z}$ as
\begin{gather*}%\label{T+H Def Z}
\mathcal{Z}(z;n) := \mathcal{X}(z;n)
\begin{cases}
J^{-1}_{\mathcal{X},i}(z), & z \in \Om_{1}, \vspace{1mm}\\
J_{\mathcal{X},o}(z), & z \in \Om_{2}, \vspace{1mm}\\
I, & z \in \Om_{0} \cup \Om_{\infty},
\end{cases}
\end{gather*}
where $J_{\mathcal{X},i}$ and $J_{\mathcal{X},o}$ are defined in the following factorization for the jump matrix of the $\mathcal{X}$-RHP, which we denote by $J_{\mathcal{X}}$:
\begin{gather}
J_{\mathcal{X}}(z) :=
\begin{pmatrix}
1 & 0 & \tilde{w}(z) & -\phi(z) \\
0 & 1 & \tilde{\phi}(z) & -w(z) \\
0 & 0 & 1 & 0 \\
0 & 0 & 0 & 1
\end{pmatrix}
= \begin{pmatrix}
1 & 0 & 0 & 0 \\
0 & 1 & 0 & -w(z) \\
0 & 0 & 1 & 0 \\
0 & 0 & 0 & 1
\end{pmatrix}
\begin{pmatrix}
1 & 0 & 0 & -\phi(z) \\
0 & 1 & \tilde{\phi}(z) & 0 \\
0 & 0 & 1 & 0 \\
0 & 0 & 0 & 1
\end{pmatrix} \nonumber
\\
\label{GXfactorization}\phantom{J_{\mathcal{X}}(z):=}\times
\begin{pmatrix}
1 & 0 & \tilde{w}(z) & 0 \\
0 & 1 & 0 & 0 \\
0 & 0 & 1 & 0 \\
0 & 0 & 0 & 1
\end{pmatrix}
\equiv J_{\mathcal{X},o}(z)J_{\mathcal{X},\T}(z)J_{\mathcal{X},i}(z).
\end{gather}
We remind that the symbol $w$, and hence $\tilde{w}$, are analytic in the neighborhood $U_0$ (cf.~\eqref{annulus}) of~$\T$
which is supposed to include the domains $\Omega_1$ and $\Omega_2$.
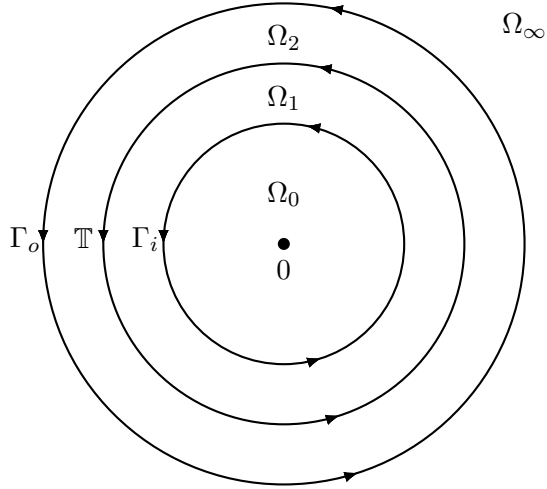
\begin{figure}[t]
\centering

\begin{tikzpicture}[scale=0.8]
\draw[ ->-=0.22,->-=0.5,->-=0.8,thick] (-3,0) circle (3cm);

\draw[ ->-=0.22,->-=0.5,->-=0.8,thick] (-3,0) circle (2cm);

\draw[ ->-=0.22,->-=0.5,->-=0.8,thick] (-3,0) circle (4cm);

%\draw[ ->-=0.22,->-=0.5,->-=0.8,green,thick] (-3,0) circle (3.5cm);

%\draw[ ->-=0.22,->-=0.5,->-=0.8,green,thick] (-3,0) circle (5.5cm);

\fill[black] (-3,0) circle (.1cm);

\node at (-3,-0.1) [below] {$0$};

\node at (-3,2.8) [below] {$\Om_1$};

\node at (-3,3.8) [below] {$\Om_2$};

\node at (-7.3,-0.3) [above] {$\Ga_{o}$};

%\node at (-2,5.4) [above] {$\Ga_6$};

%\node at (-2.2,3.4) [below] {$\Ga_7$};

\node at (-6,0.10) [left] {$\T$};

\node at (-5.3,0.45) [below] {$\Ga_{i}$};

\node at (-3,1.2) [below] {$\Om_0$};

%\node at (-3,5.3) [below] {$\Om_6$};

%\node at (-3,4.3) [below] {$\Om_7$};

\node at (1,4) [below] {$\Om_{\infty}$};
\end{tikzpicture}

\caption{The jump contour $\Ga$ for the $\mathcal{Z}$, $T$ and the global parametrix Riemann--Hilbert problems.}
\label{Figure Primary opening of the lenses}

\end{figure}

The function $\mathcal{Z}$ satisfies the following Riemann--Hilbert problem: \begin{itemize}\itemsep=0pt
\item \textbf{RH-$\mathcal{Z}$1} \quad $\mathcal{Z}$ is holomorphic in $\C \setminus \left(\Ga \cup \{0\} \right)$.

\item \textbf{RH-$\mathcal{Z}$2} \quad $\mathcal{Z}_+(z;n)=\mathcal{Z}_-(z;n)J_{\mathcal{Z}}(z)$, \ where \begin{gather*}%\label{T+H Jump Z}
J_{\mathcal{Z}}(z)=\begin{cases}
J_{\mathcal{X},\T}(z), & z \in \T, \\
J_{\mathcal{X},i}(z), & z\in \Ga_{i}, \\
J_{\mathcal{X},o}(z), & z\in \Ga_{o}.
\end{cases}
\end{gather*}

\item \textbf{RH-$\mathcal{Z}$3} \quad As $z \to \infty$ we have
\begin{gather*}
\mathcal{Z}(z;n)=\big(\di I+\mathcal{O}\big(z^{-1}\big)\big) \begin{pmatrix}
z^n & 0 & 0 & 0 \\
0 & 1 & 0 & 0 \\
0 & 0 & z^{-n} & 0 \\
0 & 0 & 0 & 1
\end{pmatrix}\!.
\end{gather*}
\item \textbf{RH-$\mathcal{Z}$4} \quad As $z \to 0$ we have
\begin{gather*}
\mathcal{Z}(z;n)=P(n)(I+\mathcal{O}(z)) \begin{pmatrix}
1 & 0 & 0 & 0 \\
0 & z^{-n} & 0 & 0 \\
0 & 0 & 1 & 0 \\
0 & 0 & 0 & z^n
\end{pmatrix}\!.
\end{gather*}
\end{itemize}

\begin{Remark}
The term ``opening of the lenses'' is usually used to describe situations where the jump matrix on the added contours is exponentially close to the identity matrix for large values of the parameter $n$. The passage $\T \mapsto \Ga$, corresponding to the RH transformation $\mathcal{X}\mapsto \mathcal{Z}$, is~clearly not of this type. However, our secondary opening of the lenses (the passage $\Ga\mapsto \Ga_S$ which corresponds to the RH transformation $T \mapsto S$) in Section~\ref{Section: The secondary opening of the lenses} is an example of a usual opening of the lenses.
\end{Remark}

\begin{Remark}\label{Why primary openning of lenses?!}\looseness=1
The primary opening of the lenses is essential for the progression of the RH analysis in the following sections. This is due to a~technical reason that will be elaborated at~the end of next section. Since the structure of jump matrices is different in Section~\ref{Section Hankel on I}, we do not have an analogous step when the Hankel symbol is supported on $[a,b]$, \mbox{$0<a<b<1$}.
\end{Remark}

\subsection[Normalization of behaviours at 0 and infty]{Normalization of behaviours at $\mathbf{0}$ and $\boldsymbol{\infty}$}

Following the natural steps of Riemann--Hilbert analysis, we will normalize the behavior of $\mathcal{Z}$ at $0$ and $\infty$; to this end let us define
\begin{gather}\label{1}
T(z;n):=\mathcal{Z}(z;n)\begin{cases}
\begin{pmatrix}
z^{-n} & 0 & 0 & 0 \\
0 & 1 & 0 & 0 \\
0 & 0 & z^{n} & 0 \\
0 & 0 & 0 & 1
\end{pmatrix}\!, & |z|>1,\vspace{1mm}
\\
\begin{pmatrix}
1 & 0 & 0 & 0 \\
0 & z^{n} & 0 & 0 \\
0 & 0 & 1 & 0 \\
0 & 0 & 0 & z^{-n}
\end{pmatrix}\!, & |z|<1.
\end{cases}
\end{gather}
It is very important to note that in order to have a suitable Riemann--Hilbert analysis, the normalization of behaviors at $0$ and $\infty$ can only be carried out only after the undressing $\mathcal{X} \mapsto \mathcal{Z}$; this is due to technical reasons that will be further commented about at the end of this section. We have the following RHP for $T$:

\begin{itemize}\itemsep=0pt
\item \textbf{RH-$T$1} \quad $T$ is holomorphic in $\C \setminus \left( \T \cup \Ga_{i} \cup \Ga_{o} \right)$.

\item \textbf{RH-$T$2} \quad $T_+(z;n)=T_-(z;n)J_T(z;n)$, where
\begin{gather}\label{Jump matrices T Hankel on T}
J_T(z;n)= \begin{cases}
\widehat{J}(z;n), & z \in \T, \\
J_{\mathcal{X},i}(z), & z\in \Ga_{i}, \\
J_{\mathcal{X},o}(z), & z\in \Ga_{o},
\end{cases} \quad
\mbox{where} \quad
\widehat{J}(z;n) = \begin{pmatrix}
z^n & 0 & 0 & -\phi(z) \\
0 & z^n & \tilde{\phi}(z) & 0 \\
0 & 0 & z^{-n} & 0 \\
0 & 0 & 0 & z^{-n}
\end{pmatrix}\!,\!\!\!
\end{gather}
and the matrices $J_{\mathcal{X},i}$ and $J_{\mathcal{X},o}$ are defined by \eqref{GXfactorization}.
\item \textbf{RH-$T$3} \quad As $z \to \infty$, we have $T(z;n)=\big(\di I+\mathcal{O}\big(z^{-1}\big)\big)$.

\end{itemize}
We observe that for $z \in \T$, $J_T$ can be factorized as follows
\begin{gather}
\widehat{J}(z;n)= \begin{pmatrix}
I_2 & 0_2 \\
z^{-n} \Phi^{-1}(z) & I_2
\end{pmatrix} \begin{pmatrix}
0_2 & \Phi(z) \\
- \Phi^{-1}(z) & 0_2
\end{pmatrix} \begin{pmatrix}
I_2 & 0_2 \\
z^{n} \Phi^{-1}(z) & I_2
\end{pmatrix} \nonumber
\\
\label{GTfactorization}\phantom{\widehat{J}(z;n)}
\equiv J_{T,o}(z;n)\overset{\circ}{J}(z)J_{T,i}(z;n),
\end{gather}
where $0_2$ and $I_2$ are respectively $2 \times 2$ zero and identity matrices and
\begin{gather*}
\Phi(z)= \begin{pmatrix}
0 & -\phi(z) \\
\tilde{\phi}(z) & 0
\end{pmatrix}\!.
\end{gather*}
Note that $J_{T,i}$ is exponentially close to the identity matrix for $z$ inside of the unit circle and $J_{T,o}$ is exponentially close to the identity matrix for $z$ outside of the unit circle.

Now we are in a position to address Remark~\ref{Why primary openning of lenses?!} in the previous section. Indeed, if one normalizes the behaviors at $0$ and $\infty$ without the undressing transformation $\mathcal{X} \mapsto \mathcal{Z}$; i.e., by~directly defining the function $\mathcal{T}$ as \begin{gather*}
\mathcal{T}(z;n):= \mathcal{X}(z;n) \begin{cases}
\begin{pmatrix}
z^{-n} & 0 & 0 & 0 \\
0 & 1 & 0 & 0 \\
0 & 0 & z^{n} & 0 \\
0 & 0 & 0 & 1
\end{pmatrix}\!, & |z|>1, \vspace{1mm}
\\
\begin{pmatrix}
1 & 0 & 0 & 0 \\
0 & z^{n} & 0 & 0 \\
0 & 0 & 1 & 0 \\
0 & 0 & 0 & z^{-n}
\end{pmatrix}\!, & |z|<1.
\end{cases}
\end{gather*}
Then the jump matrix $J_{\mathcal{T}}:= \mathcal{T}^{-1}_-\mathcal{T}_+$ on the unit circle would be
\begin{gather*}
J_{\mathcal{T}}(z;n) = \begin{pmatrix}
z^n & 0 & z^{n}\tilde{w}(z) & -\phi(z) \\
0 & z^n & \tilde{\phi}(z) & -z^{-n}w(z) \\
0 & 0 & z^{-n} & 0 \\
0 & 0 & 0 & z^{-n}
\end{pmatrix}\!,
\end{gather*}
for which finding a factorization like (\ref{GTfactorization}) remains a challenge, mainly due to presence of the large parameter $n$ in the $13$ and $24$ elements of $J_{\mathcal{T}}$. This fact justifies the necessity of the undressing step $\mathcal{X} \mapsto \mathcal{Z}$. Indeed, due to the specific matrix structure of the jump matrices $J_{\mathcal{X},i}$ and $J_{\mathcal{X},o}$ they {\it do not} change
under the transformation~\eqref{1}.

\subsection{The secondary opening of the lenses}\label{Section: The secondary opening of the lenses}

The next Riemann--Hilbert transformation $T \mapsto S$, provides us with a problem with jump conditions on five contours where three jump matrices do not depend on $n$ and the other two converge exponentially fast to the identity matrix as $n \to \infty$. Let us define the function~$S$, suggested by~(\ref{GTfactorization}), as
\begin{gather*}%\label{6}
S(z;n):=T(z;n) \times \begin{cases}
J^{-1}_{T,i}(z;n), & z \in \Om'_1, \\
J_{T,o}(z;n), & z \in \Om'_2, \\
I, & z \in \Om''_1 \cup \Om''_2 \cup \Om_0 \cup \Om_{\infty},
\end{cases}
\end{gather*}
where the regions $\Om'_1$, $\Om'_2$, $\Om''_1$ and $\Om''_2$ are shown in Fig.~\ref{Fig2}. We have the following Riemann--Hilbert problem for $S$

\begin{itemize}\itemsep=0pt
\item \textbf{RH-$S$1} \quad $S$ is holomorphic in $\C \setminus \left( \T \cup \Ga_{i} \cup \Ga_{o} \cup \Ga'_{i} \cup \Ga'_{o} \right)$.

\item \textbf{RH-$S$2} \quad $S_+(z;n)=S_-(z;n)J_S(z;n)$, where
\begin{gather*}%\label{T+H S jump}
J_S(z;n)=\begin{cases}
\overset{\circ}{J}(z), & z \in \T, \\
J_{T,i}(z;n), & z \in \Ga'_{i}, \\
J_{T,o}(z;n), & z \in \Ga'_{o}, \\
J_{\mathcal{X},i}(z), & z \in \Ga_{i}, \\
J_{\mathcal{X},o}(z), & z \in \Ga_{o}.
\end{cases}
\end{gather*}

\item \textbf{RH-$S$3} \quad As $z \to \infty$, we have $S(z;n)=\di I+\mathcal{O}\big(z^{-1}\big)$.
\end{itemize}

\begin{figure}
\centering

\begin{tikzpicture}[scale=1]

\draw[ ->-=0.22,->-=0.5,->-=0.8,thick] (-3,0) circle (3cm);

\draw[ ->-=0.22,->-=0.5,->-=0.8,dashed] (-3,0) circle (3.5cm);

\draw[ ->-=0.22,->-=0.5,->-=0.8,dashed] (-3,0) circle (2.5cm);

\draw[ ->-=0.22,->-=0.5,->-=0.8,thick] (-3,0) circle (2cm);

\draw[ ->-=0.22,->-=0.5,->-=0.8,thick] (-3,0) circle (4cm);

%\draw[ ->-=0.22,->-=0.5,->-=0.8,green,thick] (-3,0) circle (3.5cm);

%\draw[ ->-=0.22,->-=0.5,->-=0.8,green,thick] (-3,0) circle (5.5cm);

\fill[black] (-3,0) circle (.05cm);

\node at (-3,-0.1) [below] {$0$};

\node at (-3,3) [below] {$\Om'_1$};
\node at (-4,2.3) [below] {$\Om''_1$};

\node at (-4,3.87) [below] {$\Om''_2$};
\node at (-3,3.5) [below] {$\Om'_2$};

\node at (-6.9,0) [left] {$\Ga_{o}$};
\node at (-6.4,0) [left] {$\Ga'_{o}$};

%\node at (-2,5.4) [above] {$\Ga_6$};

%\node at (-2.2,3.4) [below] {$\Ga_7$};

\node at (-6.05,0) [right] {$\T$};

\node at (-5,0) [right] {$\Ga_{i}$};
\node at (-5.58,0) [right] {$\Ga'_{i}$};

\node at (-3,1.2) [below] {$\Om_0$};

%\node at (-3,5.3) [below] {$\Om_6$};

%\node at (-3,4.3) [below] {$\Om_7$};

\node at (1,4) [below] {$\Om_{\infty}$};
\end{tikzpicture}

\caption{The jump contour $\Ga_S$ of the $S$-RHP.}
\label{Fig2}
\end{figure}
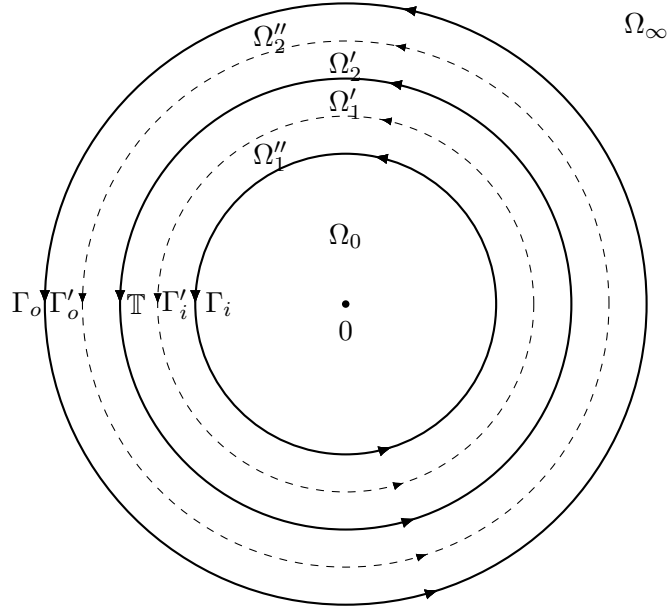

In the usual way, we will first try to solve this Riemann--Hilbert problem by disregarding the jump matrices which depend on $n$, this solution is denoted by $\overset{\circ}{S}$ and will be referred to as the global parametrix. Once we construct the global parametrix, we will consider the small-norm Riemann--Hilbert problem for the ratio $R:=S(\overset{\circ}{S})^{-1}$ and discuss its solvability in the forthcoming sections.

\subsection[The global parametrix and the model Riemann--Hilbert problem for the pair $(phi,w)$]{The global parametrix and the model Riemann--Hilbert problem \\for the pair $\boldsymbol{(\phi,w)}$}\label{model}

The $S$-RHP reduces to the following Riemann--Hilbert problem for the global parametrix $\overset{\circ}{S}$, when we ignore the jump matrices which are exponentially close to the identity matrix:

\begin{itemize}\itemsep=0pt
\item \textbf{RH-$\overset{\circ}{S}$1} \quad $\overset{\circ}{S}$ is holomorphic in $\C \setminus (\T \cup \Ga_{i} \cup \Ga_{o})$.

\item \textbf{RH-$\overset{\circ}{S}$2} \quad $\overset{\circ}{S}_+(z)=\overset{\circ}{S}_-(z)J_{\overset{\circ}{S}}(z)$, where \begin{gather*}%\label{T+H Snaught Jump }
J_{\overset{\circ}{S}}(z)=\begin{cases}
\overset{\circ}{J}(z), & z \in \T, \\
J_{\mathcal{X},i}(z), & z \in \Ga_{i}, \\
J_{\mathcal{X},o}(z), & z \in \Ga_{o}. \\
\end{cases}
\end{gather*}

\item \textbf{RH-$\overset{\circ}{S}$3} \quad As $z\to \infty$, we have $\overset{\circ}{S}(z)=\di I+\mathcal{O}\big(z^{-1}\big)$.
\end{itemize}

\noindent
And we finally \textit{dress} the $\overset{\circ}{S}$-RHP to obtain a model problem for the global parametrix having jumps only on the unit circle. We define the function $\La$ as
\begin{gather}\label{9}
\La(z) := \overset{\circ}{S}(z) \times\begin{cases}
J_{\mathcal{X},i}(z), & z \in \Om_1, \\
J^{-1}_{\mathcal{X},o}(z), & z \in \Om_2, \\
I, & z \in \Om_{0} \cup \Om_{\infty}.
\end{cases}
\end{gather}
Now we arrive at the following Riemann--Hilbert problem for $\La$ that from now on we will refer to as \textit{the model Riemann--Hilbert problem for the pair $(\phi,w)$}:

\begin{itemize}\itemsep=0pt
\item \textbf{RH-$\La$1} \quad $\La$ is holomorphic in $\C \setminus \T$.

\item \textbf{RH-$\La$2} \quad $\La_+(z)=\La_-(z)J_{\La}(z)$, for $z \in \T$, where
\begin{gather*}%\label{T+H model Jump}
J_{\La}(z) = \begin{pmatrix}
0 & 0 & 0 & -\phi(z) \\
\di-\frac{w(z)}{\phi(z)} & 0 & \di \tilde{\phi}(z) - \frac{w(z)\tilde{w}(z)}{\phi(z)} & 0 \\
0 & \di -\frac{1}{\tilde{\phi}(z)} & 0 & 0 \\
\di\frac{1}{\phi(z)} & 0 & \di \frac{\tilde{w}(z)}{\phi(z)} & 0
\end{pmatrix}\!.
\end{gather*}

\item \textbf{RH-$\La$3} \quad As $z \to \infty$, we have $\La(z)=\di I+\mathcal{O}\big(z^{-1}\big)$.
\end{itemize}

\looseness=1 The conditions on $w$ and $\phi$ which ensure the solvability of this model problem are not completely known and categorized at this point. We also want to stress that the appearance of the $4\times4$
model $\Lambda$-problem in the asymptotic analysis of the original $\mathcal{X}$-RHP is the crucial
difference of the Toeplitz+Hankel case we consider in this work comparing to the pure Toeplitz or
pure Hankel or Toeplitz+Hankel with the same symbols cases. Indeed, even if the pair $(\phi, w)$ is such that
the $\Lambda$-RHP is solvable it does not mean that it is {\it explicitly} solvable. Hence, one should not
expect the closed form of the asymptotic answer in the case of the generic pair
$(\phi, w)$.\footnote{In this respect
the Toeplitz+Hankel determinants with generic symbols are similar to the block Toeplitz determinants, where
the explicit answers can be obtained only in two cases: (a) the Fourier expansion of the
corresponding matrix symbol is one side truncated or (b) one can produce an explicit Wiener--Hopf factorization of the
symbol.} However, in Section~\ref{T+H solvable} we will present a detailed analysis of this model problem for a specific family of~pairs $(\phi, w)$ within the broader class of Toeplitz and Hankel weights considered by~E.~Basor and T.~Ehrhardt in~\cite{BE} for which the model $\Lambda$-problem is explicitly solvable.

Remarkably, we arrive at the \textit{same} model $\Lambda$-Riemann--Hilbert problem, if we start with a~Hankel weight supported on the interval $[a,b]$, $0<a<b<1$. This will be shown in the next section.

\section[Toeplitz+Hankel determinants: Hankel weight supported on the interval $[a,b{]}$, $0<a<b<1$]{Toeplitz+Hankel determinants: Hankel weight supported\\ on~the interval $\boldsymbol{[a,b]}$, $\boldsymbol{0<a<b<1}$}\label{Section Hankel on I}

In this section we consider the determinant \eqref{T+H Determinant} where $w_k$ exist and are given by \eqref{moments of w}, with $I=[a,b]$, $0<a<b<1$. We further assume that $w$ does not have Fisher--Hartwig singularities (see \cite{Charlier,CharlierGharakhloo, ItsKrasovsky}, or \cite{Krasovsky} for instances of Fisher--Hartwig singularities on the real line). Let us again assume that the symbol $\phi$ is of Szeg{\H o}-type. The Riemann--Hilbert approach outlined in this section can be naturally extended to the three other cases: \textit{i})~$-1<a<b<0$, \textit{ii})~$-\infty<a<b<-1$, and \textit{iii})~$1<a<b<\infty$.
We consider the system of orthogonal polynomials $\{P_n(z)\}$, $\deg P_n(z)=n$, satisfying the following orthogonality conditions
\begin{gather}\label{T+H2 Orthogonality}
\int^{b}_{a} P_n(x)x^{k+s} w(x)\, {\rm d}x + \int_{\T} P_n(z)z^{-k-r}\phi(z) \frac{{\rm d}z}{2\pi {\rm i} z} = h_n \delta_{n,k}, \qquad
k=0,1,\dots, n.
\end{gather}
One can write a determinantal formula for $P_n$ like in \eqref{T+H OP Det rep}, which yields
\begin{gather*}
h_n=\frac{D_{n+1}(\phi,w;r,s)}{D_{n}(\phi,w;r,s)}.
\end{gather*}
By similar considerations as those mentioned in Section~\ref{Section Hankel on T}, the orthogonal polynomials $P_n$ exist and are unique if $D_n \neq 0$. Now, assuming that $D_n$, $D_{n-1}\neq0$, we consider the function $Y$ defined as
\begin{gather}\label{OP Rep of Solution interval}
Y(z;n)
\!=\!\begin{pmatrix}
P_n(z) & \di \int^{b}_{a} \frac{P_n(x)x^sw(x)}{x-z}{\rm d}x + \int_{\T} \frac{ \tilde{\phi}(\xi)\xi^{r}\tilde{P}_n(\xi)}{\xi-z} \frac{{\rm d}\xi}{2\pi {\rm i} \xi}
\\[2ex]
-\di \frac{P_{n-1}(z)}{h_{n-1}} & -\di \frac{1}{h_{n-1}} \Bigg\lbrace \int^{b}_{a} \frac{P_{n-1}(x)x^sw(x)}{x-z}{\rm d}x+ \!\! \int_{\T} \!\frac{ \tilde{\phi}(\xi)\xi^r\tilde{P}_{n-1}(\xi)}{\xi-z} \frac{{\rm d}\xi}{2\pi {\rm i} \xi} \Bigg\rbrace
\end{pmatrix}\!,
\end{gather}
built from the orthogonal polynomials $P_n$ satisfying \eqref{T+H2 Orthogonality}. Consider the following Riemann--Hilbert problem for finding the $2 \times 2$ matrix $Y$ satisfying
\begin{itemize}\itemsep=0pt
\item \textbf{RH-Y1} \quad $Y$ is holomorphic in $\C \setminus (\T \cup [a,b])$.
\item \textbf{RH-Y2} \quad For $z \in \T$ we have
\begin{gather*}%\label{0222}
Y_+^{(1)}(z;n)=Y^{(1)}_-(z;n),
\end{gather*}
and
\begin{gather*}%\label{jYT2}
Y_+^{(2)}(z;n)=Y^{(2)}_-(z;n) + z^{r-1}\tilde{\phi}(z)Y^{(1)}_-\big(z^{-1};n\big).
\end{gather*}

\item \textbf{RH-Y3} \quad For $x \in (a,b)$ we have
\begin{gather*}%\label{jYI1}
Y_+^{(1)}(x;n)=Y^{(1)}_-(x;n),
\end{gather*}
and
\begin{gather*}%\label{jYI2}
Y_+^{(2)}(x;n)=Y^{(2)}_-(x;n) + 2\pi {\rm i} x^s w(x)Y^{(1)}_-(x;n).
\end{gather*}

\item \textbf{RH-Y4} \quad As $z \to \infty$
\begin{gather*}%\label{asymYinfty}
Y(z;n)=\bigg( I + O\bigg(\frac{1}{z}\bigg) \bigg) z^{n \sigma_3} = \begin{pmatrix}
z^n+\mathcal{O}\big(z^{n-1}\big) & \mathcal{O}\big(z^{-n-1}\big) \\
\mathcal{O}\big(z^{n-1}\big) & z^{-n} + \mathcal{O}\big(z^{-n-1}\big)
\end{pmatrix}\!,
\end{gather*}
\end{itemize}
where $Y^{(1)}$ and $Y^{(2)}$ are the first and second columns of $Y$, respectively. We have the analogue of Theorem \ref{THM RHP formulation circle} here as well.

\begin{Theorem}\label{THM RHP formulation interval}
The following statements are true.
\begin{enumerate}\itemsep=0pt
\item[$\rm 1.$] Suppose that $D_n,D_{n-1} \neq 0$. Then, the Riemann--Hilbert problem \emph{\textbf{RH-$Y$1}} through \emph{\textbf{RH-$Y$4}} is uniquely solvable and its solution $\mathcal{Y}$ is defined by \eqref{OP Rep of Solution interval}. Moreover,
\begin{gather*}%\label{T+H h_n00 interval}
h_{n-1} = - \lim_{z \to \infty} z^{n-1}/Y_{21}(z; n).
\end{gather*}
\item[$\rm 2.$] Suppose that the Riemann--Hilbert problem \emph{\textbf{RH-$Y$1}} through \emph{\textbf{RH-$Y$4}}
has a unique solution. Then $D_n \neq 0 $, $\operatorname{rank} (T_{n-1}[\phi;r] + H_{n-1}[w;s]) \geq n-2$, and $P_n(z)
= Y_{11}(z;n)$.
\item[$\rm 3.$] Suppose that the Riemann--Hilbert problem \emph{\textbf{RH-$Y$1}} through \emph{\textbf{RH-$Y$4}}
has a unique solution. Suppose also that
\begin{gather*}
\lim_{z \to \infty}Y_{21}(z;n)z^{-n+1} \neq 0.
\end{gather*}
Then, as before, $D_n \neq 0$, $P_n(z)
= Y_{11}(z;n)$, and, in addition,
\begin{gather*}
D_{n-1} \neq 0, \qquad
h_{n-1} = -\lim_{z\to \infty} \mathcal{Y}^{-1}_{21}(z;n) z^{n-1},\qquad
P_{n-1}(z) = -h_{n-1}Y_{21}(z;n).
\end{gather*}
\end{enumerate}
\end{Theorem}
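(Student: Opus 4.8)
The plan is to follow the proof of Theorem~\ref{THM RHP formulation circle} almost verbatim, the only genuine new feature being that $Y$ now has jumps on the two distinct contours $\T$ and $[a,b]$, which carry structurally different jump relations. For the first statement I would assume $D_n,D_{n-1}\neq0$, so that $P_n$ and $P_{n-1}$ exist and are unique (via a determinantal representation analogous to~\eqref{T+H OP Det rep}) and $Y$ in~\eqref{OP Rep of Solution interval} is well defined. Condition \textbf{RH-Y1} is immediate from the holomorphicity of Cauchy transforms off their contours. Since the entries $P_n$ and $-P_{n-1}/h_{n-1}$ of the first column are polynomials, that column is entire and has no jump across either $\T$ or $[a,b]$, which gives the first-column equalities in \textbf{RH-Y2} and \textbf{RH-Y3}; the second-column jumps then follow from the Plemelj--Sokhotskii formula applied to each contour separately, the interval integral producing the jump $2\pi{\rm i}\,x^{s}w(x)Y^{(1)}_-(x)$ across $(a,b)$ (using~\eqref{moments of w}) and the circle integral, which carries $\tilde{\phi}$ and $\tilde{P}_n$, producing the shifted jump $z^{r-1}\tilde{\phi}(z)Y^{(1)}_-\big(z^{-1}\big)$ across $\T$.

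The crux of the first statement is verifying the decay required in \textbf{RH-Y4}. Here I would expand both Cauchy kernels, $\frac{1}{x-z}=-\sum_{k=0}^{n}\frac{x^{k}}{z^{k+1}}+\frac{x^{n+1}}{(x-z)z^{n+1}}$ and the analogous expansion of $\frac{1}{\xi-z}$, insert them into $Y_{12}$, and observe that the requirement $Y_{12}=\mathcal{O}\big(z^{-n-1}\big)$ forces
\[
\int_a^b P_n(x)x^{k+s}w(x)\,{\rm d}x+\frac{1}{2\pi{\rm i}}\int_\T \tilde{\phi}(\xi)\xi^{k+r-1}\tilde{P}_n(\xi)\,{\rm d}\xi=0,\qquad 0\le k\le n-1.
\]
The substitution $\xi\mapsto\tau=\xi^{-1}$ in the second integral turns $\tilde{\phi},\tilde{P}_n$ and $\xi^{k+r-1}{\rm d}\xi$ into $\phi,P_n$ and $\tau^{-k-r}\frac{{\rm d}\tau}{\tau}$, reproducing exactly the orthogonality relations~\eqref{T+H2 Orthogonality} with vanishing right-hand side; hence $Y_{11}=P_n$. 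The analogous analysis of $Y_{22}=z^{-n}+\mathcal{O}\big(z^{-n-1}\big)$ produces the normalized relations that force $Y_{21}=-P_{n-1}/h_{n-1}$, and since $P_{n-1}$ is monic of degree $n-1$ this yields $h_{n-1}=-\lim_{z\to\infty}z^{n-1}/Y_{21}(z;n)$. Uniqueness is automatic, since every solution is driven into precisely this form by the same chain of deductions.

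For the second and third statements I would argue exactly as in Theorem~\ref{THM RHP formulation circle}. Writing any solution in the form~\eqref{OP Rep of Solution interval} with a monic $Q_n$ and a polynomial $R_{n-1}$ of degree at most $n-1$, the orthogonality relations become linear systems with coefficient matrix $T_n[\phi;r]+H_n[w;s]$ (respectively $T_{n-1}[\phi;r]+H_{n-1}[w;s]$), just as in~\eqref{linsys22}--\eqref{linsys12}; the existence and uniqueness of the solution of the RHP then force $D_n\neq0$, so $Q_n=P_n=Y_{11}$, and the rank bound follows by the homogeneous-system argument around~\eqref{homsys} (if $D_{n-1}=0$ one could otherwise add a nontrivial $\hat{R}_{n-2}$ to $R_{n-1}$ without disturbing the RHP, contradicting uniqueness). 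Under the extra hypothesis $\lim_{z\to\infty}Y_{21}z^{-n+1}\neq0$ of the third statement, $R_{n-1}=q_n Q_{n-1}$ with $Q_{n-1}$ monic and $q_n\neq0$, so the second system becomes inhomogeneous with matrix $T_{n-1}[\phi;r]+H_{n-1}[w;s]$; its unique solvability forces $D_{n-1}\neq0$, whence $Q_{n-1}=P_{n-1}$, $h_{n-1}=-1/q_n=D_n/D_{n-1}$, and the stated reconstruction formulas for $P_{n-1}$ and $h_{n-1}$.

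The step I expect to demand the most care is the simultaneous treatment of the two contours in \textbf{RH-Y4}: one must track the factor $2\pi{\rm i}$ supplied by the interval integral against the factor $\frac{1}{2\pi{\rm i}}$ of the circle integral, and confirm that the inversion $\xi\mapsto\xi^{-1}$ turns the shifted $\T$-contribution into precisely the $z^{-k-r}\phi$-weighted term of~\eqref{T+H2 Orthogonality}, so that the two pieces genuinely fuse into the single mixed orthogonality functional rather than remaining two separate conditions. Everything else is the determinantal and rank bookkeeping already carried out in the circle case.
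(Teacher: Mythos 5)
Your proposal is correct and matches the paper's intent exactly: the paper omits the proof of Theorem~\ref{THM RHP formulation interval} altogether, stating only that it is similar to that of Theorem~\ref{THM RHP formulation circle}, and your argument is precisely that adaptation, with the two-contour bookkeeping (the $2\pi{\rm i}$ from the interval integral, the inversion $\xi\mapsto\xi^{-1}$ fusing the $\T$-contribution into the mixed orthogonality functional \eqref{T+H2 Orthogonality}) carried out correctly.
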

We omit the proof here as it is similar to the proof of Theorem \ref{THM RHP formulation circle}.

\begin{Corollary}\label{Corollary 3.1.1} Suppose that the $Y$-RH problem has a unique solution for $n$ and $n-1$.
Then
\begin{gather*}
D_n \neq 0, \qquad
D_{n-1} \neq 0, \qquad
\mbox{and} \qquad
h_{n-1} \neq 0.
\end{gather*}
Moreover,
\begin{gather*}%\label{T+H h_n interval}
h_{n-1} = - \lim_{z \to \infty} z^{n-1}/Y_{21}(z;n).
\end{gather*}
\end{Corollary}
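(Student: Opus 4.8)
The plan is to derive this corollary directly from Theorem~\ref{THM RHP formulation interval}, applying its three statements to the $Y$-RHP at the two parameter values $n$ and $n-1$. The point is that unique solvability at a single level controls only the determinant at that level, so the hypothesis of unique solvability at both $n$ and $n-1$ is precisely what is needed to reach all three nonvanishing conclusions together with the reconstruction formula.

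First I would invoke part~2 of Theorem~\ref{THM RHP formulation interval} at level $n$: unique solvability of the $Y$-RHP forces $D_n\neq0$ (and incidentally $P_n(z)=Y_{11}(z;n)$). Applying the same statement with $n$ replaced by $n-1$ — legitimate since by hypothesis the $Y$-RHP is also uniquely solvable at level $n-1$ — yields $D_{n-1}\neq0$. Once both determinants are nonzero, the quotient $h_{n-1}=D_n/D_{n-1}$ is well defined and nonzero, which gives the three displayed nonvanishing assertions.

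For the reconstruction formula I would then feed $D_n,D_{n-1}\neq0$ back into part~1 of the theorem. That part asserts that, under these conditions, the $Y$-RHP has a unique solution equal to the explicit expression \eqref{OP Rep of Solution interval}; since we are assuming uniqueness, the solution in hand must coincide with \eqref{OP Rep of Solution interval}. In particular $Y_{21}(z;n)=-P_{n-1}(z)/h_{n-1}$, a polynomial of exact degree $n-1$ with leading coefficient $-1/h_{n-1}$, whence $z^{n-1}/Y_{21}(z;n)\to -h_{n-1}$ as $z\to\infty$ and the formula $h_{n-1}=-\lim_{z\to\infty}z^{n-1}/Y_{21}(z;n)$ follows.

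The main subtlety — and the reason the hypothesis must be imposed at both levels — is that part~2 alone is too weak: at level $n$ it returns $D_n\neq0$ and only the rank bound $\operatorname{rank}(T_{n-1}[\phi;r]+H_{n-1}[w;s])\geq n-2$, not $D_{n-1}\neq0$. Equivalently, the reconstruction formula requires $Y_{21}$ to have exact degree $n-1$, i.e.\ $\lim_{z\to\infty}Y_{21}(z;n)z^{-n+1}\neq0$, which is exactly the extra hypothesis of part~3; this degree drops precisely in the degenerate case $D_{n-1}=0$ (the analogue of the $\hat{\mathcal{R}}_{n-2}$ phenomenon in the proof of Theorem~\ref{THM RHP formulation circle}). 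Ruling this out is what unique solvability at level $n-1$ buys us, so the only care needed is to apply part~2 at the shifted index rather than trying to extract $D_{n-1}\neq0$ from the level-$n$ problem.
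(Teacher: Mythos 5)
Your proposal is correct and is essentially the paper's intended derivation: the paper states this corollary (like its circle analogue, Corollary~\ref{Corollary 2.1.1}) as a direct consequence of Theorem~\ref{THM RHPformulation interval}\ref{THM RHP formulation interval}, obtained exactly as you do — part~2 applied at both indices $n$ and $n-1$ gives $D_n\neq0$ and $D_{n-1}\neq0$, hence $h_{n-1}=D_n/D_{n-1}\neq0$, and then part~1 identifies the solution with \eqref{OP Rep of Solution interval} so that $Y_{21}(z;n)=-P_{n-1}(z)/h_{n-1}$ yields the limit formula. Your closing remark correctly pinpoints why the hypothesis is needed at both levels (part~2 alone only gives the rank bound, matching the degenerate $\hat{\mathcal{R}}_{n-2}$ scenario), which is consistent with the paper's treatment.
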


\subsection[The associated 2 x 4 and 4 x 4 Riemann--Hilbert problems]{The associated $\boldsymbol{2\times 4}$ and $\boldsymbol{4 \times 4}$ Riemann--Hilbert problems}

The formulation of the $2\times 4$ and $4\times 4$ Riemann--Hilbert problems are very similar to those of Section~\ref{subsection 24 44 Hankel on T}, however there are minor differences that convinces us to practice clarity in our exposition. Let us consider the following $2\times 4$ matrix function, constructed from the columns of $Y$ given by \eqref{OP Rep of Solution interval}:
\begin{figure}[t]
\centering
\begin{tikzpicture}[scale=0.5]
\draw[ ->-=0.22,->-=0.5,->-=0.8,thick] (-3,0) circle (4.5cm);
\draw [,->-=0.7,thick] (-2,0)
to (0,0);
\draw [->-=0.5,thick] (3,0)
to (9,0);

\fill[black] (-3,0) circle (.1cm);
\fill[black] (-2,0) circle (.1cm);
\fill[black] (0,0) circle (.1cm);
\fill[black] (3,0) circle (.1cm);
\fill[black] (9,0) circle (.1cm);

\node at (-3,-0.1) [below] {$0$};
\node at (-2,-0.1) [below] {$a$};
\node at (0,-0.1) [below] {$b$};
\node at (2.9,-0.1) [below] {$b^{-1}$};
\node at (9.1,-0.1) [below] {$a^{-1}$};
\node at (-7.5,0.05) [left] {$\T$};
\end{tikzpicture}
\caption{The jump contour $\Sigma$.}
\label{SIGMA CONTOUR}
\end{figure}
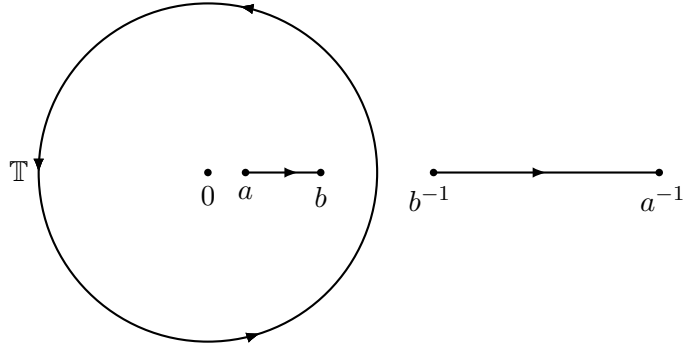
\begin{gather}\label{2by4 Hankel on I}
\overset{\circ}{X}(z;n) :=
\big(Y^{(1)}(z;n), \widetilde{Y}^{(1)}(z;n), Y^{(2)}(z;n), \widetilde{Y}^{(2)}(z;n)\big).
\end{gather}
Let us define $\Sigma := \T \cup [a,b] \cup \big[b^{-1},a^{-1}\big]$, and $\Sigma':=\Sigma \setminus\big\{a,b,b^{-1},a^{-1}\big\}$. $\overset{\circ}{X}(z;n)$ satisfies the following Riemann--Hilbert problem
\begin{itemize}\itemsep=0pt
\item \textbf{RH-$\overset{\circ}{X}$1} \qquad $\overset{\circ}{X}$ is analytic in
$\C \setminus \left( \Sigma \cup \{0\} \right)$,
\item \textbf{RH-$\overset{\circ}{X}$2} \qquad For $z\in \Sigma' $, we have $\overset{\circ}{X}_+(z;n)=\overset{\circ}{X}_-(z;n)J_{\overset{\circ}{X}}(z)$, where
\begin{gather*}%\label{2by4jumps Hankel on I a}
J_{\overset{\circ}{X}}(z) = \begin{cases}
\begin{pmatrix}
1 & 0 & 0 & -z^{-r+1}\phi(z) \\
0 & 1 & z^{r-1}\tilde{\phi}(z) & 0 \\
0 & 0 & 1 & 0 \\
0 & 0 & 0 & 1
\end{pmatrix}\!, & z \in \T,\vspace{1mm}
\\
\begin{pmatrix}
1 & 0 & 2\pi {\rm i}x^sw(x) & 0 \\
0 & 1 & 0 & 0 \\
0 & 0 & 1 & 0 \\
0 & 0 & 0 & 1
\end{pmatrix}\!, & z\equiv x \in (a,b),\vspace{1mm}
\\
\begin{pmatrix}
1 & 0 & 0 & 0 \\
0 & 1 & 0 & -2 \pi {\rm i} x^{-s} \tilde{w}(x) \\
0 & 0 & 1 & 0 \\
0 & 0 & 0 & 1
\end{pmatrix}\!, & z\equiv x \in \big(b^{-1},a^{-1}\big).
\end{cases}
\end{gather*}
\item \textbf{RH-$\overset{\circ}{X}$3} \qquad As $z\to \infty$
\begin{gather*}
\overset{\circ}{X}(z;n) =
\begin{pmatrix}
1 + O\big(z^{-1}\big) & E_1(n)+O\big(z^{-1}\big) & O\big(z^{-1}\big) & E_3(n) + O\big(z^{-1}\big) \vspace{1mm}\\
O\big(z^{-1}\big) & E_2(n)+O\big(z^{-1}\big) & 1 + O\big(z^{-1}\big) & E_4(n) + O\big(z^{-1}\big)
\end{pmatrix}
\\
\phantom{\overset{\circ}{X}(z;n) =}\times
\begin{pmatrix}
z^n & 0 & 0 & 0\\
0 & 1 & 0 & 0 \\
0 & 0 & z^{-n} & 0 \\
0 & 0 & 0 & 1
\end{pmatrix}\!.
\end{gather*}
\item \textbf{RH-$\overset{\circ}{X}$4} \qquad As $z\to 0$
\begin{gather*}%\label{n08}
\overset{\circ}{X}(z;n) \!=\! \begin{pmatrix}
E_1(n) + O(z) & 1 + O(z) & E_3(n)+O(z) & O(z) \\
E_2(n) + O(z) & O(z) & E_4(n)+O(z) & 1 + O(z)
\end{pmatrix}\begin{pmatrix}
1 & 0 & 0 & 0\\
0 & z^{-n} & 0 & 0 \\
0 & 0 & 1 & 0 \\
0 & 0 & 0 & z^n
\end{pmatrix}\!,
\end{gather*}
\end{itemize}
where \begin{alignat*}{3}
& E_1(n)= Y_{11}(0;n), \qquad&&
E_3(n)= Y_{12}(0;n), &\\
& E_2(n)= Y_{21}(0;n), \qquad &&
E_4(n)= Y_{22}(0;n). &
\end{alignat*}
It is straightforward to check that $\overset{\circ}{X}$ given by \eqref{2by4 Hankel on I} and \eqref{OP Rep of Solution interval} satisfies the Riemann--Hilbert problem \textbf{RH-$\overset{\circ}{X}$1} through \textbf{RH-$\overset{\circ}{X}$4}.

\looseness=1 One of the differences between the case when the Hankel symbol is supported on the unit circle versus the case when it is supported on the interval $[a,b]$, is discussed in the following remark about the values of offsets that can be handled without much difficulty in each case.

\begin{Remark}\label{Remark r,s Hankel on the line}
Let $u$ be defined by
\begin{gather}\label{u definition Hankel on the line}
u(z):= z\int^b_{a} \frac{t^{s-1}w(t)}{t-z}{\rm d}t.
\end{gather}
When the Hankel symbol is supported on the interval, the natural progression of the Riemann--Hilbert analysis with general offset values $r, s \in \Z$, finally requires us to construct Szeg{\H o} functions for the functions $f_1(z)=z^{r-1}\phi(z)$ and $f_2(z)=\tilde{u}(z)$ (Compare with Remark~\ref{Remark r,s}). Note that the function $u$ given by~\eqref{u definition Hankel on the line} (and $\tilde{u}$) has no winding number for all $s \in \Z$. Therefore, in this work it seems natural for us to focus on the determinants of the type $D_n(\phi,w;1,s)$, in view of~the points mentioned above and in Remark~\ref{Remark r,s}.
\end{Remark}

Similar to our approach in Section~\ref{subsection 24 44 Hankel on T}, we introduce the following Riemann--Hilbert problem of finding the $4 \times 4$ matrix function $X$ satisfying:
\begin{itemize}\itemsep=0pt
\item \textbf{RH-X1} \qquad $X$ is analytic in
$\C \setminus (\Sigma \cup \{0\})$.

\item \textbf{RH-X2} \qquad For $z\in \Sigma'$, we have $X_+(z;n)=X_-(z;n)J_{X}(z)$, where
\begin{gather}\label{2by4jumps Hankel on I}
J_{X}(z) = \begin{cases}
\begin{pmatrix}
1 & 0 & 0 & -\phi(z) \\
0 & 1 & \tilde{\phi}(z) & 0 \\
0 & 0 & 1 & 0 \\
0 & 0 & 0 & 1
\end{pmatrix}\!, & z \in \T,\vspace{1mm}
\\
\begin{pmatrix}
1 & 0 & 2\pi {\rm i}x^sw(x) & 0 \\
0 & 1 & 0 & 0 \\
0 & 0 & 1 & 0 \\
0 & 0 & 0 & 1
\end{pmatrix}\!, & z\equiv x \in (a,b), \vspace{1mm}
\\
\begin{pmatrix}
1 & 0 & 0 & 0 \\
0 & 1 & 0 & -2 \pi {\rm i} x^{-s} \tilde{w}(x) \\
0 & 0 & 1 & 0 \\
0 & 0 & 0 & 1
\end{pmatrix}\!, & z\equiv x \in \big(b^{-1},a^{-1}\big),
\end{cases}
\end{gather}
that is, $J_X$ is exactly equal to $J_{\overset{\circ}{X}}$ when $r=1$.
\item \textbf{RH-X3} \qquad As $z\to \infty$
\begin{gather*}
X(z;n) = \big(I+\bigO{\big(z^{-1}\big)}\big)\begin{pmatrix}
z^n & 0 & 0 & 0\\
0 & 1 & 0 & 0 \\
0 & 0 & z^{-n} & 0 \\
0 & 0 & 0 & 1
\end{pmatrix}\!.
\end{gather*}

\item \textbf{RH-X4} \qquad As $z\to 0$ \begin{gather}\label{Xzero line}
X(z;n) =Q(n) (I+\bigO{(z)})\begin{pmatrix}
1 & 0 & 0 & 0\\
0 & z^{-n} & 0 & 0 \\
0 & 0 & 1 & 0 \\
0 & 0 & 0 & z^n
\end{pmatrix}\!,
\end{gather}
\end{itemize}
where we emphasize that the matrix factor $Q(n)$ in (\ref{Xzero line}) is not a priori prescribed. Now we are going to briefly mention some facts about this Riemann--Hilbert problem which are similar to those of the $\mathcal{X}$-RHP. Using the usual
Liouville theorem-based arguments one can easily show that the solution of $X$-RHP is unique, if it exists. Also, without much difficulty one can show that the function $WP^{-1}(n)X\big(z^{-1};n\big)W$ is also a solution of the $X$-RHP, and thus due to the uniqueness of the solution, we get the symmetry relation
\begin{gather}\label{sym101 line}
WP^{-1}(n)\mathcal{X}\big(z^{-1};n\big)W = \mathcal{X}(z;n),
\end{gather}
where $W$ is given by \eqref{W}. Equation (\ref{sym101 line}) yields the following symmetry equation for $Q(n)$,
\begin{gather*}%\label{sym102 line}
Q(n) = WQ^{-1}(n)W,
\end{gather*}
or,
\begin{gather*}%\label{sym1022 line}
(WQ(n))^2=(Q(n)W)^2 = I_4.
\end{gather*}
Exact similar argument used in Remark \ref{rank} proves that the matrix $Q(n)W-I_4$ has rank $2$. Here we also have the relationship between $\overset{\circ}{X}$ and $X$ given by
\begin{gather*}
\overset{\circ}{X}(z;n)=\begin{pmatrix}
1 & E_1(n) & 0 & E_3(n) \\
0 & E_2(n) & 1 & E_4(n)
\end{pmatrix}X(z;n),
\end{gather*}
and moreover,
\begin{gather}\label{E's-to-P}
\begin{pmatrix}
1 & E_1(n) & 0 & E_3(n) \\
0 & E_2(n) & 1 & E_4(n)
\end{pmatrix} = \begin{pmatrix}
E_1(n) & 1 & E_3(n) & 0 \\
E_2(n) & 0 & E_4(n) & 1
\end{pmatrix} Q^{-1}(n).
\end{gather}
Since this is exactly the system \eqref{C's-to-P}, where $C_j(n)$ and $P(n)$ are respectively replaced by $E_j(n)$ and $Q(n)$, $1\leq j \leq 4$, and because $P(n)$ and $Q(n)$ enjoy the same symmetry and rank properties, we readily have the following statements, whose counterparts are already proven in Section~\ref{Section Hankel on T}.
\begin{Lemma}\label{Lemma Unique Solvability of C's line}
Let $Q_{jk}(n)$, $1 \leq j,k \leq 4$, be the entries of the matrix $Q(n)$. Assume that at least one of the following six inequalities is true,
\begin{gather}
Q_{22}(n)Q_{44}(n) - Q_{42}(n)Q_{24}(n) \neq 0,\label{Csolvcond01 line} \\
(1-Q_{21}(n))Q_{42}(n) + Q_{22}(n)Q_{41}(n) \neq 0,\label{gencond line} \\
(1-Q_{43}(n))Q_{22}(n) + Q_{23}(n)Q_{42}(n) \neq 0,\label{gencond1 line}\\
(1-Q_{21}(n))Q_{44}(n) +Q_{41}(n)Q_{24}(n) \neq 0,\label{Csolvcond1 line}\\
(1-Q_{21}(n))(Q_{43}(n) - 1) +Q_{41}(n)Q_{23}(n) \neq 0,\label{Csolvcond2 line}\\
(1-Q_{43}(n))Q_{24}(n) + Q_{23}(n)Q_{44}(n) \neq 0.\label{gencond2 line}
\end{gather}
Then, the system \eqref{E's-to-P} is a well-defined linear system on $E_j(n)$ which is uniquely solvable.
\end{Lemma}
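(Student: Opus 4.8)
The plan is to reduce the statement to the already-established circle case. As the text preceding the lemma observes, the system \eqref{E's-to-P} is formally identical to \eqref{C's-to-P}: one need only replace the constants $C_j(n)$ by $E_j(n)$ and the matrix $P(n)$ by $Q(n)$. Moreover, $Q(n)$ obeys the same symmetry relation $Q(n)=WQ^{-1}(n)W$ coming from \eqref{sym101 line}, and the matrix $Q(n)W-I_4$ has rank $2$ by the same argument given in Remark \ref{rank}. Since these two structural facts are exactly the ingredients used in the proof of Lemma \ref{Lemma Unique Solvability of C's}, the entire argument transcribes verbatim, and the proof will amount to pointing this out and verifying that the six listed inequalities play the same role here as their counterparts do there.

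First I would rewrite \eqref{E's-to-P} in homogeneous form, exactly as \eqref{C's-to-P} was turned into \eqref{C's-to-P-22}. Multiplying \eqref{E's-to-P} on the right by $Q(n)W$ cancels the factor $Q^{-1}(n)$, and since right multiplication by $W$ merely swaps the column pairs $(1,2)$ and $(3,4)$, the right-hand side collapses back to the left-hand coefficient matrix. The system thereby becomes
\begin{gather*}
\begin{pmatrix}
1 & E_1(n) & 0 & E_3(n) \\
0 & E_2(n) & 1 & E_4(n)
\end{pmatrix}(Q(n)W - I_4) = 0.
\end{gather*}
Writing out $Q(n)W-I_4$ entrywise as in the matrix displayed after \eqref{C's-to-P-22}, this single matrix equation splits into two vector (row) equations; these express the first and third rows of $Q(n)W-I_4$ as linear combinations of the second and fourth rows, with coefficients $-E_1(n),-E_3(n)$ and $-E_2(n),-E_4(n)$ respectively, in precise analogy with \eqref{PC1PC2}--\eqref{PC3PC4}.

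The key step is then to read off that each of the six inequalities \eqref{Csolvcond01 line}--\eqref{gencond2 line} is exactly the assertion that one of the six $2\times 2$ minors formed from the second and fourth rows of $Q(n)W-I_4$ is nonzero. Consequently the hypothesis guarantees that these two rows are linearly independent. Combined with the rank-$2$ property of $Q(n)W-I_4$, this forces the first and third rows to lie in their span, and, because the second and fourth rows are independent, the expansion coefficients are \emph{uniquely} determined. Those coefficients are precisely $E_1(n),E_2(n),E_3(n),E_4(n)$, so \eqref{E's-to-P} is a well-defined and uniquely solvable system, as claimed.

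The only point demanding genuine verification -- and the main, though entirely routine, obstacle -- is confirming the bijection between the six labelled inequalities and the six $2\times 2$ minors of the second and fourth rows of $Q(n)W-I_4$ (each matching up to an irrelevant overall sign). This is a direct entrywise computation identical in spirit to the one underlying \eqref{PC1PC2}--\eqref{PC3PC4}. Once it is checked, the remainder of the proof is an immediate transcription of the argument for Lemma \ref{Lemma Unique Solvability of C's}, so no new analytic input is required.
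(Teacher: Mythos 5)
Your proposal is correct and follows essentially the same route as the paper: the paper itself gives no separate proof of Lemma \ref{Lemma Unique Solvability of C's line}, justifying it by exactly your observation that the system \eqref{E's-to-P} coincides with \eqref{C's-to-P} under $C_j(n)\mapsto E_j(n)$, $P(n)\mapsto Q(n)$, and that $Q(n)$ enjoys the same symmetry and rank-$2$ properties, so the proof of Lemma \ref{Lemma Unique Solvability of C's} transcribes verbatim. Your identification of the six inequalities with the six $2\times2$ minors of the second and fourth rows of $Q(n)W-I_4$ (up to sign) is precisely the mechanism used in that proof.
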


\begin{Lemma}\label{Lemma unique solvability of C's line}
If the system \eqref{E's-to-P} has a solution, it has to be unique.
\end{Lemma}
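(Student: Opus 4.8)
The plan is to reduce the statement to its circle-case analogue, Lemma~\ref{Lemma unique solvability of C's}, whose proof transcribes verbatim. The crucial observation is that the linear system \eqref{E's-to-P} has exactly the same shape as \eqref{C's-to-P}, with $C_j(n)$ replaced by $E_j(n)$ and $P(n)$ by $Q(n)$, and that the only two structural inputs used in the earlier proof are likewise available here: first, that the system can be recast in homogeneous form by a column swap induced by $W$; and second, that $Q(n)W-I_4$ has rank $2$, which is precisely the fact recorded just above the lemma via the argument of Remark~\ref{rank} applied to $J_X(1)W$.

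First I would put \eqref{E's-to-P} into homogeneous form. Multiplying it on the right by $Q(n)$ clears the inverse, and a further multiplication on the right by $W$ interchanges columns $1\leftrightarrow 2$ and $3\leftrightarrow 4$, returning the right-hand side to the left-hand matrix; this yields
\begin{gather*}
\begin{pmatrix}
1 & E_1(n) & 0 & E_3(n) \\
0 & E_2(n) & 1 & E_4(n)
\end{pmatrix}\bigl(Q(n)W - I_4\bigr) = 0,
\end{gather*}
the exact analogue of \eqref{C's-to-P-22}. Suppressing the dependence on $n$ and writing $\mathcal{R}_1,\dots,\mathcal{R}_4$ for the rows of $QW-I_4$, this matrix identity is equivalent to the two vector equations $\mathcal{R}_1 + E_1\mathcal{R}_2 + E_3\mathcal{R}_4 = 0$ and $\mathcal{R}_3 + E_2\mathcal{R}_2 + E_4\mathcal{R}_4 = 0$, which are exactly the equations \eqref{our system 1}--\eqref{our system 2} with $C_j$ replaced by $E_j$.

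Next, because $QW-I_4$ has rank $2$, at least one of the six pairs of rows is linearly independent, and I would run through the identical case analysis. In each case one expresses the two dependent rows as linear combinations of an independent pair and then matches coefficients against the two vector equations, which determines each $E_j$ uniquely in terms of those combination coefficients (for instance, if $\mathcal{R}_1,\mathcal{R}_3$ are independent one recovers the closed forms for $E_1,\dots,E_4$ exactly as in the circle case); the sole remaining possibility, that $\mathcal{R}_2$ and $\mathcal{R}_4$ are independent, is absorbed into the stronger unique-solvability assertion of Lemma~\ref{Lemma Unique Solvability of C's line}. Since the algebra is letter-for-letter that of Lemma~\ref{Lemma unique solvability of C's}, I would simply invoke that lemma rather than reproduce the computations.

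There is no genuine obstacle here: the entire content beyond bookkeeping is the verification of the two structural facts about $Q(n)$, namely the column-swap reformulation (which is purely algebraic) and the rank-$2$ property of $Q(n)W-I_4$ (which follows from the symmetry relation \eqref{sym101 line} exactly as in Remark~\ref{rank}). Both are already established in the text immediately preceding the lemma, so the ``hard part'' was in fact done in Section~\ref{Section Hankel on T} and is merely imported here.
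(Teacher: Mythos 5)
Your proposal is correct and follows exactly the paper's route: the paper gives no separate proof for this lemma, stating only that \eqref{E's-to-P} is the system \eqref{C's-to-P} with $C_j(n)$, $P(n)$ replaced by $E_j(n)$, $Q(n)$, and that since $Q(n)$ enjoys the same symmetry and rank-$2$ properties, the proof of Lemma~\ref{Lemma unique solvability of C's} carries over verbatim. Your verification of the two structural inputs (the homogeneous reformulation via $W$ and the rank of $Q(n)W-I_4$) is precisely the content the paper relies on.
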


\begin{Lemma}\label{Lemma Unique reconstruction of Y from X line}
Suppose that the solution of the $X$-RHP exists. Then, if at least one of the conditions \eqref{Csolvcond01 line} through \eqref{gencond2 line} holds, one can uniquely reconstruct the solution of the $Y$-RHP.
\end{Lemma}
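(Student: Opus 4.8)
The plan is to transcribe, \emph{mutatis mutandis}, the proof of Lemma \ref{Lemma Unique reconstruction of Y from X} from the circle case, exploiting the observation (made just before the present lemma) that $Q(n)$ and $P(n)$ obey identical symmetry and rank properties and that the linear system \eqref{E's-to-P} is literally \eqref{C's-to-P} with $C_j(n)$, $P(n)$ replaced by $E_j(n)$, $Q(n)$. First, assuming the $X$-RHP is solvable, I would read off $Q(n)$ from the prescribed $z \to 0$ behaviour \eqref{Xzero line}, exactly as $P(n)$ was extracted via \eqref{T+H P from X}; this uses only the known local data of $X$ at the origin, so $Q(n)$ is determined.

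Next, invoking the hypothesis that one of \eqref{Csolvcond01 line}--\eqref{gencond2 line} holds, Lemma \ref{Lemma Unique Solvability of C's line} guarantees that $E_1(n),\dots,E_4(n)$ are uniquely fixed by \eqref{E's-to-P}, and I would reconstruct $\overset{\circ}{X}$ through the matrix relation displayed immediately above \eqref{E's-to-P}, the analogue of \eqref{2 by 4 solution unique}. To see that this is the \emph{unique} solution of the $\overset{\circ}{X}$-RHP, I would rerun the $\mathfrak{R}$-argument: given any second solution $\overset{\circ}{X}'$, the ratio $\overset{\circ}{X}' X^{-1}$ is checked to be entire and bounded, hence by Liouville a constant matrix of the same shape, whose entries solve \eqref{E's-to-P}; Lemma \ref{Lemma unique solvability of C's line} then forces it to coincide with $\overset{\circ}{X}$.

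Finally, I would feed the symmetry relation \eqref{sym101 line} (derived exactly as \eqref{sym101}) into the reconstructed $\overset{\circ}{X}$ to obtain the column identity $\overset{\circ}{X}\big(z^{-1};n\big) = \overset{\circ}{X}(z;n)W$, the analogue of \eqref{sym101_22}. This says the second and fourth columns of $\overset{\circ}{X}$ are the $z \mapsto z^{-1}$ images of the first and third, so $\overset{\circ}{X}$ has the form \eqref{2by4 Hankel on I}; setting the first and second columns of a new $2\times 2$ matrix $Y$ equal to the first and third columns of $\overset{\circ}{X}$ then yields a solution of the $Y$-RHP, whose uniqueness follows from that of $\overset{\circ}{X}$ by the converse construction through \eqref{2by4 Hankel on I}.

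The only genuine departure from the circle case, and thus the main point to verify, is the entirety and boundedness of $\mathfrak{R} = \overset{\circ}{X} X^{-1}$ across the \emph{additional} contour pieces $[a,b]$ and $\big[b^{-1},a^{-1}\big]$ and at their endpoints $a$, $b$, $b^{-1}$, $a^{-1}$, none of which are present on $\T$. Here I would observe that on these arcs both $\overset{\circ}{X}$ and $X$ carry the identical jump from \eqref{2by4jumps Hankel on I} (recall $J_X = J_{\overset{\circ}{X}}$ when $r=1$), so the jumps cancel and $\mathfrak{R}$ continues analytically across them, while at the four endpoints $\overset{\circ}{X}$ and $X$ share the same at-worst-integrable local singular behaviour, leaving $\mathfrak{R}$ bounded so that the endpoints are removable. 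Once this is in place, the Liouville step and the rest of the argument go through verbatim.
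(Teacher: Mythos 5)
Your proposal is correct and takes essentially the same approach as the paper: in fact the paper offers no separate proof for the interval case, stating only that \eqref{E's-to-P} is literally the system \eqref{C's-to-P} with $C_j(n)$, $P(n)$ replaced by $E_j(n)$, $Q(n)$, and that $Q(n)$ enjoys the same symmetry and rank properties, so the proofs of Lemmas \ref{Lemma Unique Solvability of C's}, \ref{Lemma unique solvability of C's} and \ref{Lemma Unique reconstruction of Y from X} carry over verbatim. Your transcription is exactly that intended argument, and your added check --- that $\mathfrak{R}=\overset{\circ}{X}X^{-1}$ has cancelling jumps on $[a,b]\cup\big[b^{-1},a^{-1}\big]$ (since $J_X=J_{\overset{\circ}{X}}$ for $r=1$) and only removable singularities at the four endpoints --- is the one genuinely new point the interval case requires.
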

\begin{Corollary}
Suppose that the solution of the $X$-RHP exists for $n$ and $n-1$, then if at least one of the conditions \eqref{Csolvcond01 line} through \eqref{gencond2 line} also holds for $n$ and $n-1$, then we have
\begin{gather*}
D_n \neq 0, \qquad D_{n-1}\neq 0, \qquad \mbox{and} \qquad h_{n-1}\neq0,
\end{gather*}
 where $h_{n-1}$ can be reconstructed form the RHP data as
\begin{gather*}%\label{T+H h_n interval1}
h_{n-1} = - \lim_{z \to \infty} z^{n-1}/Y_{21}(z;n).
\end{gather*}
\end{Corollary}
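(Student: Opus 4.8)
The plan is to prove this statement in precisely the same way its circle-case counterpart, Corollary~\ref{corollary2.8.1}, was obtained: as a direct consequence of the reconstruction lemma together with the unique-solvability corollary for the $Y$-RHP. All of the genuine content has already been absorbed into the preceding lemmas (in particular the rank-$2$ property of $Q(n)W-I_4$ and the ensuing unique solvability of the system~\eqref{E's-to-P}), so what remains is purely a chaining of those results, applied at the two indices $n$ and $n-1$.

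First I would apply Lemma~\ref{Lemma Unique reconstruction of Y from X line} separately at level $n$ and at level $n-1$. Its hypotheses are exactly the two standing assumptions of the corollary: the solution of the $X$-RHP exists at each of $n$ and $n-1$, and at least one of the inequalities~\eqref{Csolvcond01 line} through \eqref{gencond2 line} holds for each of $n$ and $n-1$. Under these assumptions the lemma guarantees that the constants $E_j(n)$ (resp.\ $E_j(n-1)$) are uniquely determined from $Q(n)$ (resp.\ $Q(n-1)$), and that the solution of the $Y$-RHP can be uniquely reconstructed from that of the $X$-RHP via the symmetry argument recorded there. Thus the $Y$-RHP is uniquely solvable both for $n$ and for $n-1$.

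With unique solvability of the $Y$-RHP at both indices in hand, I would then invoke Corollary~\ref{Corollary 3.1.1} verbatim. It converts the unique solvability of the $Y$-RHP for $n$ and $n-1$ into the conclusions $D_n\neq 0$, $D_{n-1}\neq 0$, $h_{n-1}\neq 0$, together with the reconstruction formula $h_{n-1}=-\lim_{z\to\infty} z^{n-1}/Y_{21}(z;n)$, which is exactly the assertion to be proved.

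There is essentially no analytic obstacle here, since the substantive steps were all carried out earlier; the only point that must not be overlooked is that the hypotheses of the reconstruction lemma have to be verified at \emph{both} $n$ and $n-1$ — which is why the corollary insists that one of the six inequalities hold for each of the two indices — because Corollary~\ref{Corollary 3.1.1} requires unique solvability of the $Y$-RHP at both levels simultaneously.
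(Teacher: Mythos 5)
Your proposal is correct and matches the paper's own route exactly: the paper treats this corollary (like its circle-case counterpart, which it declares to be ``the direct consequence of Corollary~\ref{Corollary 3.1.1}'s analogue and the reconstruction lemma'') as an immediate chaining of Lemma~\ref{Lemma Unique reconstruction of Y from X line} applied at both indices $n$ and $n-1$, followed by Corollary~\ref{Corollary 3.1.1}. Your emphasis that the hypotheses must be verified at both indices, since Corollary~\ref{Corollary 3.1.1} needs unique solvability of the $Y$-RHP at both levels, is precisely the one point of substance and is handled correctly.
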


\subsection[Normalization of behaviors at 0 and infty]
{Normalization of behaviors at $\boldsymbol{0}$ and $\boldsymbol{\infty}$}

Unlike the situation in Section~\ref{Section Hankel on T} where we had to make the transformation $\mathcal{X} \mapsto \mathcal{Z}$ before normalization of behaviors at zero and infinity, when the Hankel symbol is supported on the interval $[a,b]$ we can immediately normalize the asymptotic behaviors at $0$ and infinity, due to the desired structure of jump matrices. Indeed, it is natural to define
\begin{gather}\label{T Hankel on I}
T(z;n):=X(z;n)\begin{cases}
\begin{pmatrix}
z^{-n} & 0 & 0 & 0 \\
0 & 1 & 0 & 0 \\
0 & 0 & z^{n} & 0 \\
0 & 0 & 0 & 1
\end{pmatrix}\!, & |z|>1, \vspace{1mm}
\\
\begin{pmatrix}
1 & 0 & 0 & 0 \\
0 & z^{n} & 0 & 0 \\
0 & 0 & 1 & 0 \\
0 & 0 & 0 & z^{-n}
\end{pmatrix}\!, & |z|<1.
\end{cases}
\end{gather}
The function $T$ satisfies the following Riemann--Hilbert problem:

\begin{itemize}\itemsep=0pt
\item \textbf{RH-$T$1} \quad $T$ is holomorphic in $\C \setminus \Sigma$.

\item \textbf{RH-$T$2} \quad For $z\in \Sigma'$, we have $T_+(z;n)=T_-(z;n)J_T(z;n)$, where
\begin{gather*}%\label{2'}
J_T(z;n)= \begin{cases}
\widehat{J}(z;n), & z \in \T, \\
J_{X}(z), & z\in (a,b)\cup\big(b^{-1},a^{-1}\big).
\end{cases}
\end{gather*}
We recall that $\widehat{J}$ is given by \eqref{Jump matrices T Hankel on T} and the matrices $J_{X}$ for $z\in(a,b)$ and $z \in \big(b^{-1},a^{-1}\big)$ are given by \eqref{2by4jumps Hankel on I}.
\item \textbf{RH-$T$3} \quad As $z \to \infty$, we have $T(z;n)=\big(\di I+\mathcal{O}\big(z^{-1}\big)\big)$.
\end{itemize}
We bring the reader's attention to the fact that the transformation (\ref{T Hankel on I}) does not change the jump matrices $J_{X}$.
\subsection{Opening of the lenses}

Using \eqref{GTfactorization}, we open the lenses off the unit circle as shown in the Fig.~\ref{Opening lenses Hankel on I} and we define
\begin{gather*}%\label{S Hankel on I}
S(z;n):=T(z;n) \times \begin{cases}
J^{-1}_{T,i}(z;n), & z \in \Om_1, \\
J_{T,o}(z;n), & z \in \Om_2, \\
I, & z \in \C \setminus \left( \overline{\Om_1} \cup \overline{\Om_2} \cup [a,b] \cup \big[b^{-1},a^{-1}\big] \right),
\end{cases}
\end{gather*}
where $J_{T,i}$ and $J_{T,o}$ are defined in \eqref{GTfactorization}. Let $\Sigma_S \equiv \Sigma \cup \Sigma_o \cup \Sigma_i$ and $\Sigma'_S \equiv \Sigma' \cup \Sigma_o \cup \Sigma_i$ (see Fig.~\ref{Opening lenses Hankel on I}). It is straightforward to check that $S$ satisfies the following Riemann--Hilbert problem

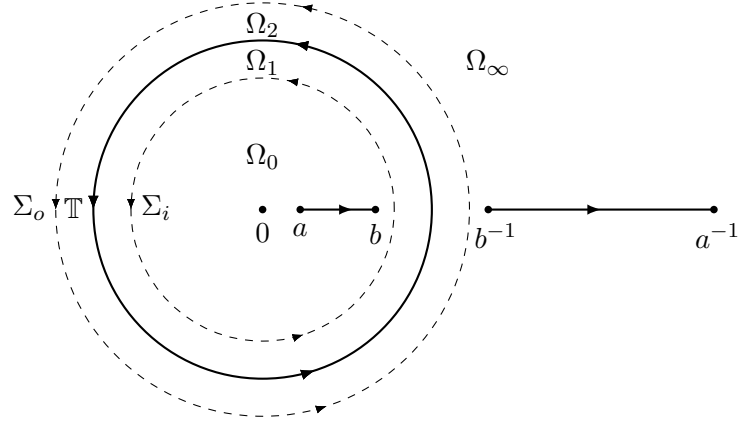
\begin{figure}[t]
\centering
\begin{tikzpicture}[scale=0.5]

\draw[ ->-=0.22,->-=0.5,->-=0.8,thick] (-3,0) circle (4.5cm);
\draw[ ->-=0.22,->-=0.5,->-=0.8,dashed] (-3,0) circle (3.5cm);
\draw[ ->-=0.22,->-=0.5,->-=0.8,dashed] (-3,0) circle (5.5cm);
\draw [,->-=0.7,thick] (-2,0)
to (0,0);
\draw [->-=0.5,thick] (3,0)
to (9,0);

\fill[black] (-3,0) circle (.1cm);
\fill[black] (-2,0) circle (.1cm);
\fill[black] (0,0) circle (.1cm);
\fill[black] (3,0) circle (.1cm);
\fill[black] (9,0) circle (.1cm);

\node at (-3,-0.1) [below] {$0$};
\node at (-2,-0.1) [below] {$a$};
\node at (0,-0.1) [below] {$b$};
\node at (3.2,-0.1) [below] {$b^{-1}$};
\node at (9.1,-0.1) [below] {$a^{-1}$};
\node at (-7.5,0.05) [left] {$\T$};
\node at (-8.5,0.05) [left] {$\Sigma_o$};
\node at (-6.5,0.05) [right] {$\Sigma_i$};
\node at (-3,2) [below] {$\Om_0$};
\node at (-3,4.5) [below] {$\Om_1$};
\node at (-3,5.5) [below] {$\Om_2$};
\node at (3,4.5) [below] {$\Om_{\infty}$};

\end{tikzpicture}
\caption{The jump contour $\Sigma_S \equiv \Sigma \cup \Sigma_o \cup \Sigma_i$ of the $S$-RHP.}
\label{Opening lenses Hankel on I}
\end{figure}

\begin{itemize}\itemsep=0pt
\item \textbf{RH-$S$1} \quad $S$ is holomorphic in $\C \setminus \Sigma_S$,

\item \textbf{RH-$S$2} \quad For $z \in \Sigma'_S$ we have $S_+(z;n)=S_-(z;n)J_S(z;n)$, where \begin{gather*}%\label{T+H S jump Hankel on I}
J_S(z;n)=\begin{cases}
\overset{\circ}{J}(z), & z \in \T, \\
J_{T,i}(z;n), & z \in \Sigma_{i}, \\
J_{T,o}(z;n), & z \in \Sigma_{o}, \\
J_{X}(z), & z \in (a,b) \cup \big(b^{-1},a^{-1}\big), \\
\end{cases}
\end{gather*}where these matrices are defined in \eqref{GTfactorization} and \eqref{2by4jumps Hankel on I}.

\item \textbf{RH-$S$3} \quad As $z \to \infty$, we have $S(z;n)=\di I+\mathcal{O}\big(z^{-1}\big)$.
\end{itemize}

\subsection{The global parametrix and a model Riemann--Hilbert problem}\label{model2}

Let us consider the following Riemann--Hilbert for $\overset{\circ}{S}$, or the global parametrix, which is expected to be a good approximation to $S$ for large parameter~$n$. This RHP is simply obtained from the $S$-RHP by ignoring the jumps on $\Sigma_i$ and $\Sigma_o$:

\begin{itemize}\itemsep=0pt
\item \textbf{RH-}$\overset{\circ}{S}$\textbf{1} \quad $\overset{\circ}{S}$ is holomorphic in $\C \setminus \Sigma$ (see Fig.~\ref{SIGMA CONTOUR}).

\item \textbf{RH-$\overset{\circ}{S}$2} \quad For $z\in \Sigma'$, we have
$\overset{\circ}{S}_+(z)=\overset{\circ}{S}_-(z)J_{\overset{\circ}{S}}(z)$, where
\begin{gather*}%\label{T+H S-naught jump Hankel on I}
J_{\overset{\circ}{S}}(z)=\begin{cases}
\overset{\circ}{J}(z), & z \in \T, \\
J_{X}(z), & z \in (a,b) \cup \big(b^{-1},a^{-1}\big). \\
\end{cases}
\end{gather*}

\item \textbf{RH-$\overset{\circ}{S}$3} \quad As $z \to \infty$, we have $\overset{\circ}{S}(z)=\di I+\mathcal{O}\big(z^{-1}\big)$.
\end{itemize}
Let us recall the function $u$ defined in Remark \ref{Remark r,s Hankel on the line}:
\begin{gather*}
u(z)= z\int^b_{a} \frac{t^{s-1}w(t)}{t-z}\,{\rm d}t.
\end{gather*}
The Plemelj--Sokhotskii formula implies that
\begin{gather*}
u_+(x)-u_-(x)=2\pi {\rm i} x^sw(x), \qquad x \in (a,b),
\\
\tilde{u}_+(x)-\tilde{u}_-(x)= -2\pi {\rm i} x^{-s}\tilde{w}(x), \qquad x \in \big(b^{-1},a^{-1}\big).
\end{gather*}
%Now let us define the function $\Theta$ as
Put
\begin{gather*}
\Theta(z):= \overset{\circ}{S}(z) \begin{cases}
\begin{pmatrix}
1 & 0 & -u(z) & 0 \\
0 & 1 & 0 & 0 \\
0 & 0 & 1 & 0 \\
0 & 0 & 0 & 1
\end{pmatrix}\!, & |z|<1, \vspace{1mm}
\\
\begin{pmatrix}
1 & 0 & 0 & 0 \\
0 & 1 & 0 & -\tilde{u}(z) \\
0 & 0 & 1 & 0 \\
0 & 0 & 0 & 1
\end{pmatrix}\!, & |z|>1.
\end{cases}
\end{gather*}
It can be checked that $\Theta$ does not have jumps on the intervals $(a,b)$ and $\big(b^{-1},a^{-1}\big)$. We have arrived at the following \textit{model Riemann--Hilbert problem} on the unit circle:

\begin{itemize}\itemsep=0pt
\item \textbf{RH-$\Theta$1} \quad $\Theta$ is holomorphic in $\C \setminus \T$.

\item \textbf{RH-$\Theta$2} \quad $\Theta_+(z)=\Theta_-(z)J_{\Theta}(z)$, for $z \in \T$, where
\begin{gather*}%\label{T+H model Jump theta}
J_{\Theta}(z) = \begin{pmatrix}
0 & 0 & 0 & -\phi(z) \\
\di\frac{\tilde{u}(z)}{\phi(z)} & 0 & \di \tilde{\phi}(z) - \frac{u(z)\tilde{u}(z)}{\phi(z)} & 0 \\
0 & \di -\frac{1}{\tilde{\phi}(z)} & 0 & 0 \\
\di\frac{1}{\phi(z)} & 0 & \di -\frac{u(z)}{\phi(z)} & 0
\end{pmatrix}\!.
\end{gather*}

\item \textbf{RH-$\Theta$3} \quad As $z \to \infty$, we have $\Theta(z)=\di I+\mathcal{O}\big(z^{-1}\big)$,
\end{itemize}
where, in \textbf{RH-$\Theta$3} we have used the fact that $\tilde{u}(\infty)=u(0)=0$.

\begin{Remark}\label{Remark Hankel on Line}
Recalling Section~\ref{model}, we note that this is exactly the model Riemann--Hilbert problem for the pair $(\phi,-\tilde{u})$. Hence, it can be concluded that the study of the Toeplitz+Hankel determinants both when the Hankel symbol is supported on the unit circle and also when it is supported on the interval $[a,b]$, reduces to the study of the model Riemann--Hilbert problem \textbf{RH-$\La$1} through \textbf{RH-$\La$3}.
\end{Remark} For a specific class of symbols $\phi$ and $w$, we will present the solution to the model Riemann--Hilbert problem \textbf{RH-$\La$1} through \textbf{RH-$\La$3} for the pair $(\phi,w)$ in the next section.

\section{Analysis of the model problem and a solvable pair}\label{T+H solvable}

As mentioned before, it is an ambitious task to classify all the pairs $(\phi,w)$ for which the $\La$-model Riemann--Hilbert problem is solvable. However, it is reasonable to start our analysis with the class of symbols \eqref{BEsymbols} considered in~\cite{BE}. Since in our work the symbols are not assumed to be of the Fisher--Hartwig type (which needs a more delicate treatment, see Section~\ref{T+H Extension to FH}), we should still expect that the model Riemann--Hilbert problem be solvable for the class of symbols \eqref{BEsymbols} when there is no Fisher--Hartwig singularity($a_0(z)=b_0(z) \equiv 1$). Indeed this is the case as will be elaborated in this section. As commented in the beginning of Section~\ref{subsection 24 44 Hankel on T}, asymptotics of $D_n(\phi,d\phi;r,s)$, for general $r$ and $s$ requires a more delicate approach (see Section~\ref{section-general-r,s}) and we do not discuss the details here. So let us consider $D_n(\phi,d\phi;1,1)$, where $d$ is of Szeg{\H o}-type and further satisfies the condition $d(z)\tilde{d}(z)=1$ on the unit circle. For instance, a class of functions satisfying these conditions is given by
\begin{gather}\label{T+H Def d}
d(z)= \prod^{m}_{j=1} d_j(z), \qquad
d_j(z)=\pm\bigg(\frac{z-b_j}{z-a_j}\bigg)^{\al_j}\bigg(\frac{a_{j}z-1}{b_{j}z-1}\bigg)^{\al_j},
\end{gather}
where $\al_j\in \C$, all factors are defined by their principal branch, and
\begin{gather*}
0<a_1<b_1<a_2<b_2< \cdots < a_m< b_m < 1.
\end{gather*}
Note that a similar construction can be found for $-1< b_m < a_m < \cdots < b_1< a_1 < 0$, and thus a larger class of functions can be found from multiplying functions of the first class with those of the second class. Although we have a class of functions satisfying the required properties, a complete categorization of functions satisfying the four required properties for $d$ is yet to be found. We emphasize that the conditions $d(\pm 1)=1$ required in~\cite{BE} do not play a role in the Riemann--Hilbert analysis. Indeed, for $d$ as defined in \eqref{T+H Def d} one can check that $d(\pm 1)=(-1)^{\epsilon_0}$,
where $\epsilon_0$ is the number of the $d_j$-factors in whose definition the sign ``$-$'' is taken. So in this sense we are considering functions $d$ which are slightly more general than those considered in~\cite{BE}. At~the same time, we have our technical assumption of analyticity
of the symbols in a~neighborhood of the unit circle which is not needed in the analysis of \cite{BE}.

Note that the condition $d\tilde{d}=1$ on the unit circle renders the 23-element of the jump mat\-rix~$J_{\La}$ zero; indeed
\begin{gather*}
J_{\La,23}(z) = \di \tilde{\phi}(z) - \frac{w(z)\tilde{w}(z)}{\phi(z)} = \tilde{\phi}(z) \big( 1 - d(z) \tilde{d}(z)\big)=0.
\end{gather*}
Hence, for the particular choices made above, the jump matrix $G_{\La}$ reduces to
\begin{gather}\label{T+H model jump reduced}
J_{\La}(z) = \begin{pmatrix}
0 & 0 & 0 & -\phi(z) \\
\di- d(z) & 0 & 0 & 0 \\
0 & \di -\frac{1}{\tilde{\phi}(z)} & 0 & 0 \\
\di\frac{1}{\phi(z)} & 0 & \di \frac{\tilde{w}(z)}{\phi(z)} & 0
\end{pmatrix}\!. \end{gather}
In order to factorize $J_{\La}$, let us first consider the following Szeg{\H o} functions
\begin{gather}\label{alpha}
\al(z)=\exp \bigg[ \frac{1}{2 \pi {\rm i} } \int_{\T} \frac{\ln(\phi(\tau))}{\tau-z}\,{\rm d}\tau \bigg], \qquad
\be(z)=\exp \bigg[ \frac{1}{2 \pi {\rm i} } \int_{\T} \frac{\ln(d(\tau))}{\tau-z}\,{\rm d}\tau \bigg].
\end{gather}
By Plemelj--Sokhotskii formula $\al$, $\be$, $\tilde{\al}$ and $\tilde{\be}$ satisfy the following jump conditions on the unit circle:
\begin{gather}\label{T+H jumps of Szego functions}
\begin{split}
& \al_+(z)=\al_-(z) \phi(z), \qquad \be_+(z)=\be_-(z) d(z), \\ & \tilde{\al}_-(z)=\tilde{\al}_+(z) \tilde{\phi}(z), \qquad \tilde{\be}_-(z)=\tilde{\be}_+(z) \tilde{d}(z).
\end{split}
\end{gather}
It turns out that knowing the value of $\be(0)$ is crucial for finding an asymptotic expression for~$h_n$ (see Section~\ref{Section T+H Asymp h_n}) and the condition $d\tilde{d}=1$ on the unit circle allows us to evaluate $\be(0)$ easily. Indeed
\begin{gather*}
\int_{\T}\ln(d(\tau)) \,\frac{{\rm d}\tau}{\tau} = \int_{\T}\ln\big(\tilde{d}(\tau)\big) \,\frac{{\rm d}\tau}{\tau}= \int_{\T}\ln\big(d^{-1}(\tau)\big) \,\frac{{\rm d}\tau}{\tau}= - \int_{\T}\ln(d(\tau)) \,\frac{{\rm d}\tau}{\tau}.
\end{gather*}
Thus
\begin{gather}\label{T+H be0}
\int_{\T}\ln(d(\tau)) \frac{{\rm d}\tau}{\tau}=0, \qquad \mbox{and therefore,} \qquad \be(0)=1.
\end{gather}
Next, we show that $\be=\tilde{\be}$. Note that
\begin{gather*}
\begin{split}
\Tilde{\be}(z) & = \exp \bigg[ \frac{1}{2 \pi {\rm i} } \int_{\T} \frac{\ln(d(\tau))}{\tau-z^{-1}}\,{\rm d}\tau \bigg] = \exp \bigg[ -\frac{z}{2 \pi {\rm i} } \int_{\T} \frac{\ln(d(\tau))}{\tau^{-1}-z} \,\frac{{\rm d}\tau}{\tau} \bigg]
\\
& = \exp \bigg[ -\frac{z}{2 \pi {\rm i} } \int_{\T} \frac{\ln\big(\Tilde{d}(\tau)\big)}{\tau-z} \frac{{\rm d}\tau}{\tau} \bigg] = \exp \bigg[ \frac{z}{2 \pi {\rm i} } \int_{\T} \frac{\ln(d(\tau))}{\tau-z} \,\frac{{\rm d}\tau}{\tau} \bigg],
\end{split}
\end{gather*}
where we have again used the fact that $d\tilde{d}\equiv 1$ on the unit circle. Using
\begin{gather*}
\frac{1}{(\tau-z)\tau} = \frac{z^{-1}}{\tau-z} - \frac{z^{-1}}{\tau},
\end{gather*}
we can write the last expression for $\tilde{\be}$ as
\begin{gather*}
\Tilde{\be}(z) = \exp \bigg[ \frac{1}{2 \pi {\rm i} } \int_{\T} \frac{\ln(d(\tau))}{\tau-z}\,{\rm d}\tau - \frac{1}{2 \pi {\rm i} } \int_{\T} \frac{\ln(d(\tau))}{\tau} \, {\rm d}\tau \bigg] = \frac{\be(z)}{\be(0)} = \be(z),
\end{gather*}
by \eqref{T+H be0}. To show that $\be=\tilde{\be}$ one could also argue that they both solve \textit{the same} scalar RHP which has a unique solution. We also note that $\al(z),\be(z) = 1 + \mathcal{O}(z^{-1})$, and $\tilde{\al}(z)= \al(0)(1+\mathcal{O}(z^{-1}))$ as $z \to \infty$. Now we can write the solution of the $\La$-RHP (in the case $d\tilde{d}\equiv 1$ on $\T$) as
\begin{gather}\label{Lambda}
\La(z)= \La^{-1}_{\infty}\!\begin{pmatrix}
1 & 0 & 0 & 0 \\
\mathcal{C}_{\rho}(z) & 1 & 0 & 0 \\
0 & 0 & 1 & 0 \\
0 & 0 & 0 & 1
\end{pmatrix} \di \!\times \!\begin{cases}
\begin{pmatrix}
\di -\be(z) & 0 & 0 & 0 \\
0 & 0 & \di \frac{1}{\tilde{\al}(z)\be(z) \al(z)} & 0 \\
0 & \di -\tilde{\al}(z) & 0 & 0 \\
0 & 0 & 0 & \di -\al(z)
\end{pmatrix}\!, & |z|<1, \\
\begin{pmatrix}
0 & \be(z) & 0 & 0 \\
0 & 0 & 0 & \di \frac{1}{\be(z)\tilde{\al}(z)\al(z)} \\
0 & 0 & \tilde{\al}(z) & 0 \\
\al(z) & 0 & 0 & 0
\end{pmatrix}\!, & |z|>1,
\end{cases}\hspace{-10mm}
\end{gather}
where $\mathcal{C}_f(z)$ is the Cauchy-transform of $f(z)$:
\begin{gather*}
\mathcal{C}_f(z) = \frac{1}{2\pi {\rm i} } \int_{\T} \frac{f(\tau)}{\tau-z}\, {\rm d}\tau,
\end{gather*}
and
\begin{gather}\label{Lambda infty}
\La^{-1}_{\infty} = \begin{pmatrix}
0 & 0 & 0 & 1 \\
1 & 0 & 0 & 0 \\
0 & 0 & \di \frac{1}{\al(0)} & 0 \\
0 & \al(0) & 0 & 0
\end{pmatrix}\!, \qquad \rho(z) = \di -\frac{1}{\be_-(z) \be_+(z) \tilde{\al}_-(z) \al_+(z) }.
\end{gather}
Using \eqref{T+H jumps of Szego functions}, the Plemelj--Sokhotskii formula and general properties of the Cauchy integral, it~can be checked that $\La$ given by (\ref{Lambda}) satisfies the $\La$-RHP.

\subsection[The small-norm Riemann--Hilbert problem associated to D_n(phi,dphi,1,1)]{The small-norm Riemann--Hilbert problem associated to $\boldsymbol{D_n(\phi,d\phi,1,1)}$}

Let us consider
\begin{gather*}%\label{T+H R}
R(z;n):= S(z;n) \overset{\circ}{S}(z)^{-1}.
\end{gather*}
This function clearly has no jumps on $\Ga_i$, $\Ga_o$ and $\T$, since $S$ and $\overset{\circ}{S}$ have the same jumps on these contours. Thus, $R$ satisfies the following small-norm Riemann--Hilbert problem

\begin{itemize}\itemsep=0pt
\item \textbf{RH-$R$1} \quad $R$ is holomorphic in $\C \setminus \Ga_R$.

\item \textbf{RH-$R$2} \quad $R_+(z;n)=R_-(z;n)J_{R}(z;n)$, for $z \in \Ga_R$.
\item \textbf{RH-$R$3} \quad As $z \to \infty$, $R(z;n)= I + \mathcal{O}\big(z^{-1}\big)$,
\end{itemize}
where $\Ga_R := \Ga_i' \cup \Ga_o'$, and $J_R$ is given by
\begin{gather*}
J_R(z;n) = \overset{\circ}{S}(z) J_S(z;n) \overset{\circ}{S}(z)^{-1} = \begin{cases}
\overset{\circ}{S}(z) J_{T,i}(z;n) \overset{\circ}{S}(z)^{-1}, & z \in \Ga_i', \\
\overset{\circ}{S}(z) J_{T,o}(z;n) \overset{\circ}{S}(z)^{-1}, & z \in \Ga_o'.
\end{cases}
\end{gather*}
Using \eqref{Lambda}, \eqref{Lambda infty}, \eqref{9}, \eqref{GTfactorization} and \eqref{GXfactorization} we find
\begin{gather}\label{JR000}
J_R(z;n)-I = \begin{cases}
z^n \begin{pmatrix}
0 & g_{12}(z) & 0 & g_{14}(z) \\
0 & 0 & g_{23}(z) & 0 \\
0 & 0 & 0 & 0 \\
0 & 0 & g_{43}(z) & 0
\end{pmatrix}\!, & z \in \Ga_i', \\
z^{-n} \begin{pmatrix}
0 & 0 & 0 & 0 \\
g_{21}(z) & 0 & 0 & 0 \\
0 & g_{32}(z) & 0 & g_{34}(z) \\
g_{41}(z) & 0 & 0 & 0
\end{pmatrix}\!, & z \in \Ga_o',
\end{cases}
\end{gather}
where
\begin{gather*}%\label{T+H g_ij's}
 g_{12}(z) = - \frac{\al(z)}{\phi(z)\be(z)} - \frac{\tilde{w}(z) C_{\rho}(z)}{\phi(z)\be(z)\tilde{\al}(z)}, \qquad
 g_{14}(z) = \frac{\tilde{w}(z)}{\phi(z)\be(z)\tilde{\al}(z) \al(0) },
 \\
 g_{23}(z) = - \frac{\al(0) \tilde{w}(z) \be(z) }{\tilde{\phi}(z)\tilde{\al}(z)}, \qquad
 g_{43}(z) = - \al^2(0) \bigg( \frac{\al(z) \be(z)}{\tilde{\phi}(z)} + \frac{\be(z)\tilde{w}(z) C_{\rho}(z)}{\tilde{\al}(z)\tilde{\phi}(z)} \bigg),
 \\
 g_{21}(z) = \frac{w(z) \be(z) }{\phi(z)\al(z)}, \qquad
 g_{32}(z) = - \frac{1}{\al(0)\tilde{\phi}(z)} \bigg( \frac{\tilde{\al}(z)}{\be(z)} - w(z)\tilde{\al}^2(z)\be(z)\al(z)C_{\rho}(z) \bigg),
 \\
 g_{34}(z) = \frac{w(z) \tilde{\al}^2(z)\be(z)\al(z) }{\tilde{\phi}(z)\al^2(0)}, \qquad
 g_{41}(z) = -\frac{\al(0)}{\phi(z)} \bigg( \frac{1}{\tilde{\al}(z)\be(z)\al^2(z)} - \frac{w(z)\be(z)C_{\rho}(z)}{\al(z)} \bigg).
\end{gather*}
From (\ref{JR000}) it follows that the jump matrix $J_R$ satisfies on $\Gamma_R$
the small-norm estimate,
\begin{gather}\label{smallnorn000}
||J_R - I||_{L_{2} \cap L_{\infty}} \leq C {\rm e}^{-cn},
\end{gather}
for some positive $C$ and $c = -\log r_1$, where $r_1$ is any number satisfying the condition
$r_0 < r_1 < 1$ (see (\ref{annulus}) and (\ref{r0}) for the definition and meaning of
the number $r_0$). Therefore, by standard theory of small-norm Riemann--Hilbert problems \cite{Deiftetal,Deiftetal2}, there exists $n_*$ such that for all $n > n_*$ the $R$-RH problem is solvable and
\begin{gather*}%\label{Appendix Small norm R asymptotics}
R(z) = I + R_1(z) + R_2(z) + R_3(z) + \cdots, \qquad
z \in \C \setminus \Ga_R, \qquad
n \geq n_*,
\end{gather*}
where each $R_k$ is of order $O\big({\rm e}^{-kcn}\big)$ and they can be found recursively from
\begin{gather}\label{Rk}
R_k(z) = \frac{1}{2\pi {\rm i}}\int_{\Ga_R} \frac{[ R_{k-1}(\mu)]_- (J_R(\mu)-I)}{\mu-z}{\rm d}\mu, \qquad
z \in \C \setminus \Ga_R, \qquad
k\geq1.
\end{gather}
Note that this recurrence also means that
\begin{gather*}%\label{RkRk+1}
R_{k+1}(n) = o(R_k(n)),\qquad
n \to \infty.
\end{gather*}
In particular,
\begin{gather}
R_1(z;n)=\frac{1}{2\pi {\rm i}}\int_{\Ga_R} \frac{J_R(\mu;n)-I}{\mu-z} \, {\rm d}\mu \nonumber
\\
\label{R1000}\phantom{R_1(z;n)}
= \begin{pmatrix}
0 & R_{1,12}(z;n) & 0 & R_{1,14}(z;n) \\
R_{1,21}(z;n) & 0 & R_{1,23}(z;n) & 0 \\
0 & R_{1,32}(z;n) & 0 & R_{1,34}(z;n) \\
R_{1,41}(z;n) & 0 & R_{1,43}(z;n) & 0
\end{pmatrix}\!,
\end{gather}
where
\begin{gather}
\label{T+H entries of R_1}
\begin{split}
& R_{1,jk}(z;n) = \frac{1}{2\pi {\rm i}}\int_{\Ga'_i} \frac{\mu^ng_{jk}(\mu)}{\mu-z}\,{\rm d}\mu, \qquad jk=12,14,23,43,
\\
& R_{1,jk}(z;n) = \frac{1}{2\pi {\rm i}}\int_{\Ga'_o} \frac{\mu^{-n}g_{jk}(\mu)}{\mu-z}\, {\rm d}\mu, \qquad jk=21,32,34,41.
\end{split}
\end{gather}
Also from \eqref{Rk} we can write an expression for $R_2$:
\begin{gather*}
R_2(z;n)= \frac{1}{2\pi {\rm i}}\int_{\Ga_R} \frac{[ R_{1}(\mu;n)]_- (J_R(\mu;n)-I)}{\mu-z}\,{\rm d}\mu
\\
\phantom{R_2(z;n)}= \begin{pmatrix}
R_{2,11}(z;n) & 0 & R_{2,13}(z;n) & 0 \\
0 & R_{2,22}(z;n) & 0 & R_{2,24}(z;n) \\
R_{2,31}(z;n) & 0 & R_{2,33}(z;n) & 0 \\
0 & R_{2,42}(z;n) & 0 & R_{2,44}(z;n)
\end{pmatrix}\!,
\end{gather*}
where
\begin{gather*}
R_{2,kj}(z;n) = \begin{cases}
\di \sum_{\ell \in \{2,4\}} \frac{1}{2\pi {\rm i}} \int_{\Ga'_o} \frac{\mu^{-n} [R_{1,k\ell}(\mu;n)]_-g_{\ell j}(\mu)}{\mu-z}\,{\rm d}\mu, \qquad j=1,\ k=1,3,
\\
\di \sum_{\ell \in \{2,4\}} \frac{1}{2\pi {\rm i}} \int_{\Ga'_i} \frac{\mu^{n} [R_{1,k\ell}(\mu;n)]_-g_{\ell j}(\mu)}{\mu-z}\,{\rm d}\mu, \qquad j=3,\ k=1,3,
\\
\di \frac{1}{2\pi {\rm i}} \int_{\Ga'_i} \frac{\mu^{n} [R_{1,k1}(\mu;n)]_-g_{1j}(\mu)}{\mu-z}\,{\rm d}\mu
\\ \phantom{\frac{1}{2\pi {\rm i}} \int_{\Ga'_i}}
\di +\frac{1}{2\pi {\rm i}} \int_{\Ga'_o} \frac{\mu^{-n} [R_{1,k3}(\mu;n)]_-g_{3j}(\mu)}{\mu-z}\,{\rm d}\mu, \qquad k, j=2,4.
\end{cases}
\end{gather*}
Moreover, using \eqref{Rk} and a straightforward calculation one can justify that the matrix structure (i.e., the location of zero and nonzero elements) of $R_{2k+1}$ and $R_{2k}$, $k\geq 1$, are similar to that of~$R_1$ and $R_2$, respectively. It is also straightforward to show that
 \begin{gather*}
R_{k,ij}(z;n) = \frac{\bigO{\big({\rm e}^{-kcn}\big)}}{|z|+1}, \qquad
n \to \infty, \qquad
k\geq 1,
\end{gather*}
uniformly for $z \in \C \setminus \Ga_R$, and the positive constant $c$ is the same as in (\ref{smallnorn000}).

\subsection[Asymptotics of h{_}n]{Asymptotics of $\boldsymbol{h_n}$}\label{Section T+H Asymp h_n}

The analysis of the previous section shows that the $\mathcal{X}$-RH problem has unique solution for all
$n > n_{*}$. We now proceed to the reconstruction of the corresponding functions $\overset{\circ}{\mathcal{X}}(z; n)$,
$\mathcal{Y}(z; n)$, and to the asymptotics of~$h_n$. To this end we need the asymptotic information about
the matrix~$P(n)$ which is needed to determine constants $C_j$ that participate in equation~(\ref{2 by 4 solution unique}).

Tracing back the Riemann--Hilbert transformations, we find that for $z \in \Om_0$ we have
\begin{equation}\label{T+H Trace back X}
\mathcal{X}(z;n)=R(z;n)\La(z)\begin{pmatrix}
1 & 0 & 0 & 0\\
0 & z^{-n} & 0 & 0 \\
0 & 0 & 1 & 0 \\
0 & 0 & 0 & z^{n}
\end{pmatrix}\!.
\end{equation}
From \eqref{T+H P from X} and \eqref{T+H Trace back X} we conclude that \begin{equation}\label{T+H P(n)}
P(n) = R(0;n)\La(0).
\end{equation}
From (\ref{R1000}) we arrive at the estimate
\begin{gather}
R(0;n) = I + R_1(0;n) +O\big({\rm e}^{-2cn}\big) \nonumber
\\
\label{R0n000}\phantom{R(0;n)}
= \begin{pmatrix}
1 & R_{1,12}(0;n) & 0 & R_{1,14}(0;n) \\
R_{1,21}(0;n) & 1 & R_{1,23}(0;n) & 0 \\
0 & R_{1,32}(0;n) & 1 & R_{1,34}(0;n) \\
R_{1,41}(0;n) & 0 & R_{1,43}(0;n) & 1
\end{pmatrix} + O\big({\rm e}^{-2cn}\big).
\end{gather}
Simultaneously, from \eqref{Lambda} and \eqref{Lambda infty} we have
\begin{gather}\label{T+H LA(0)}
\La(0)= \begin{pmatrix}
0 & 0 & 0 & -\al(0) \\
-1 & 0 & 0 & 0 \\
0 & -\di \frac{1}{\al(0)} & 0 & 0 \\
-C_{\rho}(0)\al(0) & 0 & 1 & 0
\end{pmatrix}\!.
\end{gather}
Equations (\ref{T+H P(n)}), (\ref{R0n000}) and (\ref{T+H LA(0)}) yield the following
asymptotic formula for $P(n)$,
\begin{gather}
P(n)\!=\!\!\begin{pmatrix}
-C_{\rho}(0)\al(0)R_{1,14}(0;\!n)\!-\!R_{1,12}(0;\!n) & 0 & R_{1,14}(0;\!n) & -\al(0)
\\[1ex]
-1 & -\di \frac{R_{1,23}(0;\!n)}{\al(0)} & 0 & -\al(0)R_{1,21}(0;\!n)
\\[2ex]
-C_{\rho}(0)\al(0)R_{1,34}(0;\!n)\!-\!R_{1,32}(0;\!n) & -\di \frac{1}{\al(0)} & R_{1,34}(0;\!n) & 0
\\[2ex]
-C_{\rho}(0)\al(0) & -\di \frac{R_{1,43}(0;\!n)}{\al(0)} & 1 & -\al(0)R_{1,41}(0;\!n)
\nonumber
\end{pmatrix}
\\ \label{T+H P(n) Expansion}\phantom{P(n)}
+ \bigO{\big({\rm e}^{-2cn}\big)},
\end{gather}
as $n \to \infty$.

It is time now for the conditions of Lemma~\ref{Lemma Unique Solvability of C's}.
We are not going to study each and every condition of this lemma, rather as a case study
we consider in particular the condition~\eqref{gencond}:
\begin{gather}\label{condition 2222}
(1-P_{21}(n))P_{42}(n) + P_{22}(n)P_{41}(n) \neq 0.
\end{gather}
From (\ref{T+H P(n) Expansion}) we have that
\begin{gather}\label{condition22220}
(1-P_{21}(n))P_{42}(n) + P_{22}(n)P_{41}(n) = -\mathcal{E}(n) + O\big({\rm e}^{-2cn}\big), \qquad
n \to \infty,
\end{gather}
where (cf.~(\ref{En0001}))
\begin{gather}\label{En000}
\mathcal{E}(n):= \frac{2}{\al(0)}R_{1,43}(0;n)-C_{\rho}(0)R_{1,23}(0;n).
\end{gather}
This is when we arrive at condition (\ref{Enneq0}) of Theorem \ref{T+H main thm}. Indeed, let us suppose that there exist such $C>0$ and $n_0 \geq n_{*}$ that
\begin{gather*}%\label{Enneq000}
|\mathcal{E}(n)| \geq Cr^{n}, \qquad
\mbox{for some}\quad r\colon
r_{0}\leq r < 1, \qquad \mbox{and}\qquad n> n_0,
\end{gather*}
and choose $r_1$ in the definition of the constant $c$ so that $ r< r_1 <1$ and $r^{2}_1 < r$. Then, estimate (\ref{condition22220}) can be rewritten as
\begin{gather}\label{cond111}
(1-P_{21}(n))P_{42}(n) + P_{22}(n)P_{41}(n) = -\mathcal{E}(n)\big( 1 + O\big({\rm e}^{-c_1n} \big)\big), \qquad
n \to \infty,
\end{gather}
where $c_1 = -\log\big(\frac{r^2_1}{r}\big) >0$. The last estimate in turn means that there exists such \mbox{$n_1 \geq n_0$} that for all $n > n_1 +1$
we should have that condition (\ref{condition 2222}) holds for $n$ and $n-1$ and
hence by~Lemma~\ref{Lemma Unique reconstruction of Y from X} and Corollary (\ref{corollary2.8.1}), we can uniquely reconstruct the solution of the $\mathcal{Y}$-RHP, having already the unique solution of the $\mathcal{X}$-RHP, and, moreover,
we could use equation (cf.~\eqref{T+H h_n000}),
\begin{gather}\label{T+H h_n good}
-\frac{1}{h_{n-1}} = \lim_{z \to 0} z^{n-1} \mathcal{Y}_{21}\big(z^{-1};n\big).
\end{gather}
for evaluation of the large $n$ behavior of $h_n$.
Let us denote \begin{gather}\label{T+H A}
\mathcal{A}(z;n) := P^{-1}(n)\mathcal{X}(z;n)\begin{pmatrix}
1 & 0 & 0 & 0\\
0 & z^{n} & 0 & 0 \\
0 & 0 & 1 & 0 \\
0 & 0 & 0 & z^{-n}
\end{pmatrix}\!,
\end{gather}
and also let us define the matrix $\mathcal{B}(n)$ in the following expansion for $\mathcal{A}(z;n)$, which is equivalent to \textbf{RH-$\mathcal{X}$4}:
\begin{gather}\label{T+H A exp zero}
\mathcal{A}(z;n) = I + \mathcal{B}(n)z + \bigO{\big(z^2\big)}, \qquad
z \to 0.
\end{gather}
Therefore by \eqref{24-to-44}, \eqref{mathfrakR}, \eqref{C's-to-P} and \eqref{T+H A} we can write
\begin{gather}\label{X-naught-good-equation}
\overset{\circ}{\mathcal{X}}(z,n) = \begin{pmatrix}
C_1(n) & 1 & C_3(n) & 0 \\
C_2(n) & 0 & C_4(n) & 1
\end{pmatrix} \mathcal{A}(z;n) \begin{pmatrix}
1 & 0 & 0 & 0\\
0 & z^{-n} & 0 & 0 \\
0 & 0 & 1 & 0 \\
0 & 0 & 0 & z^{n}
\end{pmatrix}\!.
\end{gather}
Using \eqref{X naught to Y} and \eqref{X-naught-good-equation} we can write
\begin{gather}\label{Y_21(z^-1) to C's}
\mathcal{Y}_{21}\big(z^{-1};n\big) = \overset{\circ}{\mathcal{X}}_{22}(z;n) = C_2(n) \mathcal{A}_{12}(z;n)z^{-n} \!+ C_4(n)\mathcal{A}_{32}(z;n)z^{-n} \!+ \mathcal{A}_{42}(z;n)z^{-n}.\!
\end{gather}
From \eqref{T+H A exp zero} we have
\begin{gather*}%\label{T+H Asym A}
z^{-n}\mathcal{A}(z;n) = z^{-n} I + z^{-n+1}\mathcal{B}(n) + \mathcal{O}\big(z^{-n+2}\big), \qquad
z \to 0.
\end{gather*}
Therefore, as $z \to 0$
\begin{gather}\label{T+H Asym A vs. B}
z^{-n}\mathcal{A}_{ij}(z;n) =
\begin{cases}
z^{-n+1}\mathcal{B}_{ij}(n) + \mathcal{O}\big(z^{-n+2}\big), & i \neq j,
\\
z^{-n} + z^{-n+1}\mathcal{B}_{ii}(n) + \mathcal{O}\big(z^{-n+2}\big), & i=j.
\end{cases}
\end{gather}
Therefore by \eqref{T+H h_n good}, \eqref{Y_21(z^-1) to C's} and \eqref{T+H Asym A vs. B} we have
\begin{gather}\label{T+H h_n C's B's}
-\frac{1}{h_{n-1}} = C_2(n) \mathcal{B}_{12}(n) + C_4(n)\mathcal{B}_{32}(n) + \mathcal{B}_{42}(n).
\end{gather}
Let us denote the coefficients in the expansions of $R(z;n)$ and $\La(z)$, as $z \to 0$, by
\begin{gather*}%\label{T+H coefficients in R and LA near zero}
R(z;n) = R(0;n) + R^{(1)}(n) z + R^{(2)}(n) z^2 + \bigO{\big(z^3\big)},
\\
\La(z) = \La(0) + \La^{(1)} z + \La^{(2)} z^2 + \bigO{\big(z^3\big)}.
\end{gather*}
From \eqref{T+H A exp zero}, \eqref{T+H Trace back X}, and \eqref{T+H P(n)} we have
\begin{gather}\label{T+H B}
\mathcal{B}(n) = \La^{-1}(0)R^{-1}(0;n)R^{(1)}(n)\La(0) + \La^{-1}(0)\La^{(1)}.
\end{gather}
%Therefore from \eqref{T+H A exp zero}, \eqref{T+H Trace back X}, and \eqref{T+H P(n)} we have
%\begin{gather}\label{T+H B}
% \mathcal{B}(n) = \La^{-1}(0)R^{-1}(0;n)R^{(1)}(n)\La(0) + \La^{-1}(0)\La^{(1)}.
%\end{gather}
Note that
\begin{gather*}
R^{(1)}(n) = \frac{1}{2\pi {\rm i}} \int_{\Ga_R}\! (J_R(\mu;n)-I) \frac{{\rm d}\mu}{\mu^2} + \mathcal{O}\big({\rm e}^{-2cn}\big), \qquad\!\!
R^{-1}(0;n) = I - R_1(0;n) + \bigO{\big({\rm e}^{-2cn}\big)},
\end{gather*}
as $n \to \infty$. More precisely, we have
\begin{gather}\label{T+H R^(1)}
R^{(1)}(n)= \begin{pmatrix}
0 & R^{(1)}_{12}(n) & 0 & R^{(1)}_{14}(n) \\
R^{(1)}_{21}(n) & 0 & R^{(1)}_{23}(n) & 0 \\
0 & R^{(1)}_{32}(n) & 0 & R^{(1)}_{34}(n) \\
R^{(1)}_{41}(n) & 0 & R^{(1)}_{43}(n) & 0
\end{pmatrix} + \mathcal{O}\big({\rm e}^{-2cn}\big), \end{gather}
as $n \to \infty$, where\begin{gather}\label{T+H Entries of the subleading term of R in expansion near zero}
\begin{split}
& R^{(1)}_{jk}(n) = \frac{1}{2\pi {\rm i}}\int_{\Ga'_i} \mu^{n-2}g_{jk}(\mu)\,{\rm d}\mu, \qquad \ \ jk=12,14,23,43, \\
& R^{(1)}_{jk}(n) = \frac{1}{2\pi {\rm i}}\int_{\Ga'_o} \mu^{-n-2}g_{jk}(\mu)\,{\rm d}\mu, \qquad jk=21,32,34,41,
\end{split}
\end{gather}
and
\begin{gather}\label{T+H R-inverse at 0}
R^{-1}(0;n)\!=\!\begin{pmatrix}
1 & -R_{1,12}(0;n) & 0 & -R_{1,14}(0;n) \\
-R_{1,21}(0;n) & 1 & -R_{1,23}(0;n) & 0 \\
0 & -R_{1,32}(0;n) & 1 & -R_{1,34}(0;n) \\
-R_{1,41}(0;n) & 0 & -R_{1,43}(0;n) & 1
\end{pmatrix}\! + \bigO{\big({\rm e}^{-2cn}\big)},
\\ \qquad
 n \to \infty.\nonumber
\end{gather}
Regarding the coefficients of $\Lambda(z)$, we will actually only need
the zero-term, $\Lambda(0)$, which has already been presented in (\ref{T+H LA(0)}).
From \eqref{T+H B}, \eqref{T+H R^(1)}, \eqref{T+H R-inverse at 0} and \eqref{T+H LA(0)} we find that
\begin{gather}
 \mathcal{B}_{12}(n) = \frac{R^{(1)}_{23}(n)}{\al(0)} + \mathcal{O}\big({\rm e}^{-2cn}\big), \qquad
\mathcal{B}_{32}(n) = C_{\rho}(0)R^{(1)}_{23}(n)-\frac{R^{(1)}_{43}(n)}{\al(0)} + \mathcal{O}\big({\rm e}^{-2cn}\big), \nonumber\\ \mathcal{B}_{42}(n) = - \frac{1}{\al^2(0)} \big( R_{1,12}(0;n)R^{(1)}_{23}(n)+R_{1,14}(0;n)R^{(1)}_{43}(n) \big)
+\mathcal{O}\big({\rm e}^{-3cn}\big).\label{T+H Selected Elements of B}
\end{gather}
Note that $\mathcal{B}_{12}(n)$, $\mathcal{B}_{32}(n)$ are of order $\bigO{\big({\rm e}^{-cn}\big)}$, while $\mathcal{B}_{42}(n)$ is of order $\bigO{\big({\rm e}^{-2cn}\big)}$.

Revisiting \eqref{C's-to-P} we have
\begin{gather*}
\begin{pmatrix}
1 & C_1(n) & 0 & C_3(n) \\
0 & C_2(n) & 1 & C_4(n)
\end{pmatrix} P(n) = \begin{pmatrix}
C_1(n) & 1 & C_3(n) & 0 \\
C_2(n) & 0 & C_4(n) & 1
\end{pmatrix}\!.
\end{gather*}
In particular, in view of \eqref{condition 2222}, we can write the following two equations for the constants $C_2$ and $C_4$
\begin{gather*}%\label{T+H C2 C4 first pair of equations}
C_2(n) P_{21}(n) + P_{31}(n) + C_4(n)P_{41}(n) = C_2(n),
\\
C_2(n)P_{22}(n) + P_{32}(n) + C_4(n)P_{42}(n) = 0.
\end{gather*}
Solving for $C_2$ and $C_4$ we find
\begin{gather*}
C_2(n) = \frac{P_{42}(n)P_{31}(n)-P_{41}(n)P_{32}(n)}{(1-P_{21}(n))P_{42}(n)+P_{41}(n)P_{22}(n)}, \\ C_4(n) =- \frac{P_{22}(n)P_{31}(n)+[1-P_{21}(n)]P_{32}(n)}{(1-P_{21}(n))P_{42}(n)+P_{41}(n)P_{22}(n)}.
\end{gather*}
From \eqref{T+H P(n) Expansion} and (\ref{cond111}) we have
\begin{gather}\label{T+H C2 ghadim}
C_2(n)= \frac{C_{\rho}(0)}{\mathcal{E}(n) } \big( 1+ \bigO{\big({\rm e}^{-c_1n}\big)}\big),
\end{gather}
and
\begin{gather}\label{T+H C4 ghadim}
C_4(n) = - \frac{2}{\al(0)\mathcal{E}(n) } \big( 1+ \bigO{\big({\rm e}^{-c_1n}\big)}\big).
\end{gather}
Combining \eqref{T+H h_n C's B's}, \eqref{T+H Selected Elements of B}, \eqref{T+H C2 ghadim} and \eqref{T+H C4 ghadim} we obtain
\begin{gather}\label{T+H h_n asymp 1}
h_{n-1} = -\al(0) \frac{\mathcal{E}(n)}{\frac{2}{\al(0)}R^{(1)}_{43}(n)-C_{\rho}(0)R^{(1)}_{23}(n)}\big(1+ \bigO{\big({\rm e}^{-c_1n}\big)}\big), \qquad n \to \infty.
\end{gather}
Note that from \eqref{T+H entries of R_1} and \eqref{T+H Entries of the subleading term of R in expansion near zero} we have
\begin{gather*}%\label{T+H Connection formula R_1(0;n) and R^(1)(n)}
R_{1,jk}(0;n) = R^{(1)}_{jk}(n+1), \qquad \mbox{for} \quad jk=12,14,23,43,
\\
R_{1,jk}(0;n) = R^{(1)}_{jk}(n-1), \qquad \mbox{for} \quad jk=21,32,34,41.
\end{gather*}
This allows us to rewrite \eqref{T+H h_n asymp 1} as
\begin{gather}\label{T+H h_n asymp 2}
h_{n-1}= -\alpha(0)\frac{\mathcal{E}(n)}{\mathcal{E}(n-1)}\big(1+ \bigO{\big({\rm e}^{-c_1n}\big)}\big), \qquad n \to \infty.
\end{gather}
This concludes the proof of Theorem~\ref{T+H main thm}.

Let $\mathfrak{C}$ denote the class of symbol pairs $(\phi,d \phi)$, where $\phi$ and $d$ satisfy the properties mentioned in Theorem~\ref{T+H main thm}, for~which the corresponding matrix $P(n)$ satisfies the condition~\eqref{condition 2222} for~sufficiently large $n$. Now, consider the subclass $\mathfrak{C}_0 \subset \mathfrak{C}$, which includes symbol pairs $(\phi,d \phi)$ for which~$ \mathcal{E}(n)$ satisfies inequality (\ref{Enneq0}), or rather its generalization~\eqref{Enneq00}
for sufficiently large~$n$ (see also Section~\ref{characterization}). For the purposes of this paper, it is worthwhile to prove that the subclass~$\mathfrak{C}_0$ is not empty by providing an explicit example. To this end, let us consider the symbol pairs $(\phi, d\phi)$, where
\begin{gather}\label{phi example}
\phi(z) = \left(\frac{z-b}{z-a}\right)^{\alpha},
\end{gather}
and
\begin{gather}\label{d example}
d(z) =\left(\frac{z-b_1}{z-a_1}\right)^{\alpha_1}\left(\frac{a_1z-1}{b_1z-1}\right)^{\alpha_1}.
\end{gather}
We assume that $0< a < b < a_1< b_1 < 1,$ and $|\Re \alpha_1| < 1.$ Then, for $\mathcal{E}(n)$ defined in \eqref{En000}, by~standard Watson lemma type asymptotic analysis of integrals we will arrive at
\begin{gather}\label{En asymp}
\mathcal{E}(n) = \kappa b_1^{n-\alpha_1}n^{\alpha_1 -1}\bigg(1 + \mathcal{O}\bigg(\frac{1}{n}\bigg)\bigg), \qquad
n \to \infty,
\end{gather}
where
\begin{gather}
\kappa = -\frac{{\rm i}}{\pi}{\rm e}^{-{\rm i}\pi \alpha_1}\Gamma(1-\alpha_1)(b_1-a_1)^{\alpha_1}\bigg(\frac{1-bb_1}{1-ab_1}\bigg)^{\alpha}
\bigg(\frac{a_1}{b_1}\bigg)^{-\alpha_1} \nonumber
\\ \label{2}\phantom{\kappa =}
\times \bigg[{\rm e}^{-{\rm i}\pi\alpha_1}\frac{{\rm i}}{\pi}\int_{1/b_1}^{1/a_1} \bigg(\frac{z-b_1}{z-a_1}\bigg)^{\alpha_1}
\bigg(\frac{1/a_1-z}{z-1/b_1}\bigg)^{-\alpha_1}\frac{z+b_1}{z-b_1}\frac{dz}{z} - 1\bigg],
\end{gather}
and
\begin{gather*}
\arg\bigg(\frac{z-b_1}{z-a_1}\bigg) = \arg\bigg(\frac{1/a_1-z}{z-1/b_1}\bigg) = 0, \qquad
z \in [1/b_1, 1/a_1].
\end{gather*}
We argue, that for generic values of $a$, $b$, $a_1$, $b_1$, $\alpha$ and $\alpha_1$ the number $\kappa \neq 0$. Indeed,
let us take, for instance, $\alpha_1$ real, i.e.,
\begin{gather*}
-1 < \alpha_1 <1.
\end{gather*}
Then, the integrand in \eqref{2} is strictly positive and hence, except for $\alpha_1 = \pm 1/2$, the first term in the brackets of (\ref{2}) has nonzero imaginary part and thus can not cancel the second term. It~is only needed now to notice that in the case under consideration $r_0 = b_1$ and hence inequality $\kappa \neq 0$ implies that
$\mathcal{E}(n)$ corresponding to the weights (\ref{phi example}) and (\ref{d example}) satisfies the condition~(\ref{Enneq00}) indicated in Remark \ref{Remark slight generalization} with $\varkappa(n)=n^{\al_1-1}$.

We conclude the discussion of this example by indicating the asymptotic behavior of the corresponding
norm parameter $h_{n-1}$ and the determinant $D_n$. To this end we notice that, in~fact, the estimate (\ref{En asymp}) can be extended to the full asymptotic series,
\begin{gather*}
\mathcal{E}(n) \sim \kappa b_1^{n-\alpha_1}n^{\alpha_1 -1}\Bigg(1 + \sum_{k=1}^{\infty} d_kn^{-k}\Bigg), \qquad n
\to \infty.
\end{gather*}
Therefore, from (\ref{T+H h_n asymp 2}) we should have that
\begin{gather*}%\label{hnexample}
h_{n-1} = -b_1\bigg(\frac{n}{n-1}\bigg)^{\alpha_1 -1} \bigg(1 + O\bigg(\frac{1}{n^2}\bigg)\bigg), \qquad n \to \infty,
\end{gather*}
and hence
\begin{gather*}%\label{Dnexample}
D_n = C (-b_1)^{n } n^{\alpha_1 -1}\bigg(1 + O\bigg(\frac{1}{n}\bigg)\bigg), \qquad
n \to \infty,
\end{gather*} for some constant $C$ which is yet to be determined. It is interesting that this formula
indicates an oscillatory behavior of the determinant $D_n$ for large $n$. This fact is confirmed numerically.

\begin{Remark} From \eqref{OP Rep of Solution}, \eqref{X naught to Y}, \eqref{24-to-44}, and \eqref{mathfrakR} we have the following representation for the
orthogonal polynomials $\mathcal{P}_n(z)$ in terms of the solution $\mathcal{X}(z;n)$ of the
$\mathcal{X}$-RHP,
\begin{gather*}%\label{PnX}
\mathcal{P}_n(z) = \mathcal{X}_{11}(z;n) + C_1(n)\mathcal{X}_{21}(z;n) + C_3(n)\mathcal{X}_{41}(z;n).
\end{gather*}
The asymptotic results concerning the function $\mathcal{X}(z;n)$ obtained in this section can be translated to the
large $n$ asymptotic formulae for the polynomials $\mathcal{P}_n(z)$. Indeed, skipping the rather tedious
though straightforward calculations, we arrive at the following asymptotics for $\mathcal{P}_n(z)$
on the unit circle:
\begin{gather}
\mathcal{P}_n(z) = \frac{\be_+(z) (2\al(0)C_{\rho,+}(z) - C_{\rho}(0)\al(0))}{\mathcal{E}(n)} \nonumber
\\
\label{asymp-P_n-|z|=1}\phantom{\mathcal{P}_n(z)=}{}
+ \al_-(z)z^n \bigg( 1 + \frac{C_{\rho}(0)\al(0)R_{1,21}(z;n)-2R_{1,41}(z;n)}{\mathcal{E}(n)} \bigg) + \mathcal{O}\big({\rm e}^{-c_1n}\big),
\end{gather}
as $n \to \infty$. While in the interior and exterior of the unit circle we have the following asymptotic formulae for $\mathcal{P}_n(z)$:
\begin{gather*}
\mathcal{P}_n(z)= \frac{\be(z) ( 2\al(0)C_{\rho}(z) - C_{\rho}(0)\al(0) )}{\mathcal{E}(n)}+ \mathcal{O}\big({\rm e}^{-c_1n}\big), \qquad |z|<1,
\\
\mathcal{P}_n(z)= \al(z)z^n \bigg(1 + \frac{C_{\rho}(0)\al(0)R_{1,21}(z;n)-2R_{1,41}(z;n)}{\mathcal{E}(n)} + \mathcal{O}\big({\rm e}^{-2c_1n}\big) \bigg) , \qquad |z|>1,
\end{gather*}
as $n\to \infty$ (compare with \eqref{asymp-P_n-|z|=1}).
\end{Remark}

\begin{Remark}\label{Remark result for line}
In a similar fashion, we could have obtained the large $n$ asymptotic expression for the norm $h_n$ of orthogonal polynomials \eqref{T+H2 Orthogonality}. In that case, the methods of this section would work analogously if the function
\begin{gather*}
d:= -\phi^{-1}(z)\tilde{u}(z) \equiv \phi^{-1}(z) \int_{a}^{b} \frac{t^{s-1}w(t)}{1-tz} \, \mathrm{d} t,
\end{gather*} satisfies $d\tilde{d}\equiv1$ on the unit circle, or \begin{gather}\label{w on the line Szego-type phi}
\phi(z)\tilde{\phi}(z) = \int_{a}^{b} \frac{t^{s-1}w(t)}{1-tz} \mathrm{d} t \int_{a}^{b} \frac{t^{s-1}w(t)}{1-tz^{-1}} \, \mathrm{d} t, \qquad z\in \T.
\end{gather}
\end{Remark}

\section{Remaining open questions}\label{Section: suggestions for future work}
We consider this work as a starting point of a long term research project. There are many chal\-len\-ging technical as well as conceptual open questions related to the Riemann--Hilbert formalism we are suggesting in this paper. Here we highlight some of them that we consider the most pressing.

\subsection[Derivation of the relevant Christoffel--Darboux formulae and differential identities]
{Derivation of the relevant Christoffel--Darboux formulae\\ and differential identities}

Our main objective in this paper has been to develop a $4 \times 4$ steepest descent analysis for~certain Toeplitz+Hankel determinants and we have achieved that. However, to obtain the asymptotics of $D_n(\phi,d\phi,1,1)$ one has to derive suitable differential identities. We propose that the differential identity has to be with respect to the parameters $\al_i$ in the function $d$ given by~\eqref{T+H Def d}. Thus, one has to perform $m$ integrations in the parameters $\al_i$, $1 \leq i \leq m$. Note that for $\al_1 = \al_2 = \cdots = \al_m = 0$, we have $d \equiv 1$ and hence $\phi=w$. Hence the starting point of the integration in $\al_1$ could be taken from the results of~\cite{BE}.\footnote{Although the authors in~\cite{BE} do not particularly study the asymptotics of $D_n(\phi,\phi,1,1)$, this asymptotics is achievable by their methods.} Integration of the differential identity in $\al_1$ will provide us with an asymptotic expression for $D_n(\phi, d_1\phi, 1 , 1)$, which also serves as the starting point of~integration in $\al_2$. Thus we can find asymptotics of $D_n(\phi,d_1d_2\phi, 1 , 1)$ which also serves as the starting point of integration in $\al_3$, and so on. Repeating this procedure will finally lead us to the asymptotics of $D_n(\phi,d\phi, 1 , 1)$.

In order to derive the differential identities mentioned above, one has to find recurrence relations and prove a Christoffel--Darboux formula for the polynomials \eqref{T+H OP} and follow a path similar to that introduced by I.~Krasovsky in~\cite{Krasovsky}. Note that the recurrence relations can be found by analyzing the function $\mathcal{M}(z;n):= \mathcal{X}(z;n+1)\mathcal{X}^{-1}(z;n)$, which is holomorphic in $\C \setminus \{0\}$ and can be globally determined by its singular parts at zero and infinity.

\subsection [Extension of the analysis to general offset values r,s no= 1]{Extension of the analysis to general offset values $\boldsymbol{r,s \neq 1}$}\label{section-general-r,s}
If we lift the restriction $r=s=1$ then in the jump matrix of the $\mathcal{X}$-RHP
the functions~$\phi(z)$ and~$w(z)$ should be replaced by $z^{1-r}\phi(z)$
and $z^{1-s}w(z)$, respectively. This, in turn would raise a serious question about solvabilty
of the $\Lambda$-RHP. Indeed, for instance, we would not be able to define the function $\al$ in (\ref{alpha}) and
hence to factorize the jump matrix (\ref{T+H model jump reduced}). The way out of this difficulty could be the
use of the relation between the determinants $D_n(\phi,w;r,s)$ with different values of $r$, $s$ or both. Such relations
are well known in the pure Toeplitz case (for example see \cite[Lemma~2.4]{DIK}). However, for general Toeplitz+Hankel determinants they are yet to be found.
Another way to approach the problem could be to develop the so-called B\"{a}klund--Schlesinger transformations
of the $\mathcal{X}$-RHP itself. That is, the transformations of the form,
\begin{gather*}
\mathcal{X}(z) \mapsto R(z)\mathcal{X}(z),
\end{gather*}
where $R(z)$ is a properly chosen rational function for which the above transformation results in~the desired shifting of the parameters $r$ and $s$. Also, one can try
to allow the matrix $P(n)$ in~the setting of the $\mathcal{X}$-RHP to depend on $z$.

It is worth to point out that the problem with the extension of our scheme to the general values of $r$ and $s$ is not
actually a problem of the setting of the relevant Riemann--Hilbert problem. Indeed, it is rather the question of
the correct way to approach to its asymptotic solution. Let us demonstrate this point by considering the pure Toeplitz case.

Assume that $D_n(\phi,0;r,0) \neq 0$ for all $n$, so that $\mathcal{P}_n(z)$ and, correspondingly, the solution $\mathcal{Y}(z;n)$ of the $\mathcal{Y}$-RHP exist for all $n$. Put,
\begin{gather*}
\mathcal{X}(z;n):= P^{-1}_{\infty}(n)\begin{pmatrix}
\mathcal{Y}_{11}(z;n)&0&0&\widetilde{\mathcal{Y}}_{12}(z;n)\\
0&\widetilde{\mathcal{Y}}_{11}(z;n)&\mathcal{Y}_{12}(z;n)&0\\
0&\widetilde{\mathcal{Y}}_{21}(z;n)&\mathcal{Y}_{22}(z;n)&0\\
\mathcal{Y}_{21}(z;n)&0&0&\widetilde{\mathcal{Y}}_{22}(z;n)
\end{pmatrix}\!,
\end{gather*}
where the normalizing matrix $P_{\infty}(n)$ is given by
\begin{gather*}
P_{\infty}(n)= \begin{pmatrix}1&0&0&\mathcal{Y}_{12}(0;n)\\
0&\mathcal{Y}_{11}(0;n)&0&0\\
0&\mathcal{Y}_{21}(0;n)&1&0\\
0&0&0&\mathcal{Y}_{22}(0;n)
\end{pmatrix}\!,
\end{gather*}
which \looseness=1 is invertible for generic $\phi$. It is straightforward to check that the so defined $4\times 4$ matrix-valued function $\mathcal{X}$ solves the $\mathcal{X}$-Riemann--Hilbert problem, \textbf{RH-$\mathcal{X}$1}--\textbf{RH-$\mathcal{X}$4} with $w \equiv 0$, and~$\phi(z)$ replaced by~$z^{1-r}\phi(z)$. Take now $r=0$. That is, let us consider the standard orthogonal polynomials
on the circle with the weight~$\phi(z)$ having zero winding number. Then, from the standard $2\times2$
Riemann--Hilbert formalism \cite{BDJ}, we know everything about the asymptotic behavior of the corresponding orthogonal polynomials $\mathcal{P}_n(z)$ and hence we know asymptotic solution of the $\mathcal{X}$-Riemann--Hilbert problem corresponding
to $D_n(\phi,0;0,0)$. And, we know this in spite of the fact that the approach developed in the body of this work can not be satisfactorily used for the case $r=s=0$. We believe that this observation might entail a useful hint on how to~modify our Riemann--Hilbert approach to deal with general values of~$r$ and~$s$.

\subsection[Extension of the Riemann--Hilbert analysis of Section~\ref{Section Hankel on I} for more general choices of $I$]{Extension of the Riemann--Hilbert analysis of Section~\ref{Section Hankel on I}\\ for more general choices of $\boldsymbol{I}$}

We recall that our Riemann--Hilbert analysis of Section~\ref{Section Hankel on I}, with minimal modifications, naturally extends to the following three cases as well: \textit{i})~$-1<a<b<0$, \textit{ii})~$-\infty<a<b<-1$, and~\textit{iii})~$1<a<b<\infty$. A natural first step in generalizing the results of Section~\ref{Section Hankel on I} beyond the~above cases is considering the case where $I$ has $0$ as an end point. This would slightly affect the~analysis as one has to take into account the behavior of $w$ at $0$ in the set up of the $2\times4$ and the subsequent Riemann--Hilbert problems.

The other interesting case to be studied is when $I$ intersects the unit circle. Clearly, in this case one has to perform local analysis in a neighborhood of the intersection point(s) of $I$ and the unit circle. Although these local constructions are reminiscent of what one does near the~Fisher--Hartwig singularities or the endpoints of the~support of the symbol, here even if the possible intersection points $\pm1$, are regular points for non-FH symbols $\phi$ and $w$, one has to still perform local analysis due to collision of the supports of $\phi$ and $w$.

Another possible generalization would be to consider $I$ to be the union of two symmetric intervals with respect to the unit circle, i.e., $I=[a,b] \cup \big[b^{-1},a^{-1}\big]$. This generalization should be accessible by slight modification of our approach explained in Section~\ref{Section Hankel on I}. However, generalization to the case where $I$ is a union of two non-symmetrical intervals with respect to the unit circle needs a more special treatment.

\subsection{Extension to Fisher--Hartwig symbols}\label{T+H Extension to FH}
One can study the large-$n$ asymptotics of determinant $D_n(\phi,d \phi,1,1)$ (and with increasing effort $D_n(\phi,d \phi,r,s)$ for fixed $r, s \in \Z$) assuming that $\phi$ possesses Fisher--Hartwig singularities $\{z_i\}^{m}_{i=1}$ on the unit circle. It is in fact in this level of generality that E.~Basor and T.~Ehrhardt have been able to compute the asymptotics of $D_n(\phi,d \phi,0,1)$, $D_n(-\phi,d \phi,0,1)$, $D_n(\phi,d z \phi,0,1)$, and $D_n(-z\phi,d \phi,0,1)$ via the operator-theoretic methods in~\cite{BE}. However, the authors in~\cite{BE} further require that the Fisher--Hartwig part of $\phi$ be \textit{even}. In fact they used some results of the work~\cite{DIK} to prove their asymptotic formulas for Toeplitz+Hankel determinants, and for this reason they \textit{inherited} the evenness assumption from the work~\cite{DIK} where the authors needed evenness of $\phi$ in their $2\times 2$ setting to relate Hankel and Toeplitz+Hankel determinants to a Toeplitz determinant with symbol~$\phi$.

From a Riemann--Hilbert perspective, in the presence of Fisher--Hartwig singularities, one has to~construct the $4\times 4$ local parametrices near the points $z_i$. Expectedly, these local parametrices must be expressed in terms of confluent hypergeometric functions as suggested by~\cite{DIK}. We~have not yet worked out the details of this construction but we believe that it should be well within reach. It would be methodologically important to achieve the results obtained from operator-theoretic tools via the Riemann--Hilbert approach as well. Moreover, we expect that the evenness of the Fisher--Hartwig part of $\phi$ would not play a role in our $4 \times 4$ setting, and in that sense there are reasonable prospects of generalizing the results of~\cite{BE} to symbols~$\phi$ with non-even Fisher--Hartwig part.

\subsection[Characterization of generic classes of Szego-type symbol pairs (phi,dphi), with d tilde d equiv 1 on the unit circle]{Characterization of generic classes of Szeg{\H o}-type symbol pairs
$\boldsymbol{(\phi,d \phi)}$,\\ with $\boldsymbol{d\tilde{d}\equiv 1}$ on the unit circle}\label{characterization}

Take one of the six conditions of Lemma \ref{Lemma Unique Solvability of C's}. Denote by $\mathfrak{P}(n)$ and $\mathcal{E}(n)$, respectively, the~cor\-res\-ponding nonzero quantity and its leading order term in the large $n$ asymptotic expansion. Consider the class of symbol pairs $(\phi,d\phi)$, where $\phi$ and $d$ satisfy the properties mentioned in~Theorem~\ref{T+H main thm}. Within this class, take the sub\-class $\mathfrak{C}$ of symbol pairs for which the elements of the corresponding matrix $P(n)$ satisfy $\mathfrak{P}(n) \neq 0$ for sufficiently large $n$. Also consider the subclass $\mathfrak{C}_0 \subseteq \mathfrak{C}$ of symbols pairs for which $\mathcal{E}(n)$
satisfies (\ref{Enneq00}) for sufficiently large $n$. One should be able to find the asymptotics of the norm $h_n$ of orthogonal polynomials in a similar fashion as presented in Section~\ref{Section T+H Asymp h_n} in terms of $\mathcal{E}(n)$, assuming that $(\phi,d\phi) \in \mathfrak{C}_0$. It would be very interesting to completely or partially characterize the classes of symbol pairs $\mathfrak{C}$ and $\mathfrak{C}_0$, corresponding to each one of the six conditions of Lemma \ref{Lemma Unique Solvability of C's}.

\looseness=1 Moreover, although we have provided explicit examples for a class of Szeg{\H o}-type functions~$d$ which further satisfy $d\tilde{d}\equiv1$ on the unit circle (see the beginning of Section~\ref{T+H solvable}), a complete characterization of such functions is currently unknown to the authors. Also, regarding what we discussed in Section~\ref{Section Hankel on I}, for a given Szeg{\H o}-type symbol $\phi$, we are very interested to find the associated class of weights $w$, supported on the interval, for which the equality \eqref{w on the line Szego-type phi} holds.

\subsection{Characteristic polynomial of a Hankel matrix}

As mentioned in the Introduction, arguably the most important motivation behind studying the asymptotics of Toeplitz+Hankel determinants is to study the large $n$ asymptotics of the eigenvalues of the matrix $H_n[w]$. We recall that the characteristic polynomial $\det(H_n[w]-\la I)$ is indeed the Toeplitz+Hankel determinant $D_n(-\la,w,0,0)$. In this case the associated $\La$-model Riemann--Hilbert problem needs a special treatment. In a sense it is a simpler problem as the symbol $\phi$ is identically equal to a constant, but more complicated -- compared to the situation in Section~\ref{T+H solvable} -- as it does not enjoy $J_{\La,23}(z)=0$. In any case, the solution to this model problem provides us with the constant term in the asymptotics of $D_n(-\la,w,0,0)$, and in the case of Fisher--Hartwig weight $w$, one can hope to obtain the leading terms of this asymptotic expansion (up to the constant term, viz. the solution of the $\La$-model problem) from the local analysis near the Fisher--Hartwig singularities.{\footnote{ The very recent work~\cite{FedeleGebert} shows that indeed one can explicitly describe the leading terms of the asymptotics of $D_n(-\lambda, w, 0,0)$, in the case of $w$ having jump discontinuites and $\lambda$ sufficiently large.}} This last point is yet another motivation to pursue the goals of Section~\ref{T+H Extension to FH}.

\subsection*{Acknowledgements}
We are very grateful to Estelle Basor, Thomas Bothner, Christophe Charlier, Percy Deift and Igor Krasovsky for their interest in this work and for many very useful comments and suggestions. We also thank the referees for their valuable remarks. R.~Gharakhloo acknowledges support by NSF-grant DMS-1700261. A.~Its acknowledges support by~NSF-grant
DMS-1700261 and by~Russian Science Foundation grant No.~17-11-01126.

\pdfbookmark[1]{References}{ref}
\LastPageEnding

\end{document}